\newcommand{\ii}{\mathrm{i}}
\newcommand{\ee}{\mathrm{e}}
\newcommand{\dd}{\mathrm{d}}
\numberwithin{equation}{section}
\newtheorem{Theorem}{Theorem}[section]
\newtheorem{rhp}[Theorem]{Riemann--Hilbert Problem}
 { \theoremstyle{definition}

 }
\begin{document}

%\allowdisplaybreaks

\newcommand{\arXivNumber}{1704.04851}

\renewcommand{\thefootnote}{}

\renewcommand{\PaperNumber}{065}

\FirstPageHeading

\ShortArticleName{Rational Solutions of the Painlev\'e-II Equation Revisited}

\ArticleName{Rational Solutions of the Painlev\'e-II Equation\\ Revisited\footnote{This paper is a~contribution to the Special Issue on Symmetries and Integrability of Dif\/ference Equations. The full collection is available at \href{http://www.emis.de/journals/SIGMA/SIDE12.html}{http://www.emis.de/journals/SIGMA/SIDE12.html}}}

\Author{Peter D.~MILLER and Yue SHENG}
\AuthorNameForHeading{P.D.~Miller and Y.~Sheng}
\Address{Department of Mathematics, University of Michigan,\\ 530 Church St., Ann Arbor, MI 48109, USA}
\Email{\href{mailto:millerpd@umich.edu}{millerpd@umich.edu}, \href{mailto:shengyue@umich.edu}{shengyue@umich.edu}}
\URLaddress{\url{http://math.lsa.umich.edu/~millerpd/}}

\ArticleDates{Received April 18, 2017, in f\/inal form August 07, 2017; Published online August 16, 2017}

\Abstract{The rational solutions of the Painlev\'e-II equation appear in several applications and are known to have many remarkable algebraic and analytic properties. They also have several dif\/ferent representations, useful in dif\/ferent ways for establishing these properties. In particular, Riemann--Hilbert representations have proven to be useful for extracting the asymptotic behavior of the rational solutions in the limit of large degree (equivalently the large-parameter limit). We review the elementary properties of the rational Painlev\'e-II functions, and then we describe three dif\/ferent Riemann--Hilbert representations of them that have appeared in the literature: a representation by means of the isomonodromy theory of the Flaschka--Newell Lax pair, a second representation by means of the isomonodromy theory of the Jimbo--Miwa Lax pair, and a third representation found by Bertola and Bothner related to pseudo-orthogonal polynomials. We prove that the Flaschka--Newell and Bertola--Bothner Riemann--Hilbert representations of the rational Painlev\'e-II functions are explicitly connected to each other. Finally, we review recent results describing the asymptotic behavior of the rational Painlev\'e-II functions obtained from these Riemann--Hilbert representations by means of the steepest descent method.}

\Keywords{Painlev\'e equations; rational functions; Riemann--Hilbert problems; steepest descent method}

\Classification{33E17; 34M55; 34M56; 35Q15; 37K15; 37K35; 37K40}

\renewcommand{\thefootnote}{\arabic{footnote}}
\setcounter{footnote}{0}

\section{Introduction}
\looseness=-1 Rational solutions of the second Painlev\'e equation are important in several applied problems. It was discovered by Bass \cite{Bass64} that a certain Nernst--Planck model of steady electrolysis with two ions reduces to the Painlev\'e-II equation, and in \cite{RogersBS99} the special role of rational solutions was highlighted in the context of this model (see also~\cite{BassNRS10}). Johnson \cite{Johnson06} notes that rational solutions of the Painlev\'e-II equation parametrize certain string theoretic models in high-energy physics. Clarkson \cite{Clarkson09} reviews the application of rational solutions of Painlev\'e-II to the classif\/ication of certain equilibrium vortex conf\/igurations in ideal planar f\/luid f\/low (the poles of opposite residue determine the location of vortices of opposite circulation). More recently Buckingham and one of the authors \cite{BuckinghamM12} found that the collection of rational solutions of the Painlev\'e-II equation universally describes the space-time location of kinks in the semiclassical sine-Gordon equation near a transversal crossing of the separatrix for the simple pendulum. Also, Shapiro and Tater \cite{ShapiroT14} have observed a connection between the rational Painlev\'e-II solutions of large degree and the characteristic polynomials for the quasi-exactly solvable part of the discrete spectrum in a boundary-value problem for the stationary Schr\"odinger operator with a quartic potential exhibiting PT-symmetry.

In this paper, we review some of the well-known elementary properties of the rational solutions of the Painlev\'e-II equation, including three ways to represent them in terms of the solution of a Riemann--Hilbert problem. We make a new contribution by showing that two of the three representations, until now thought to be unrelated, are in fact connected with each other via a simple and explicit transformation. Then we review results on the asymptotic behavior of the rational Painlev\'e-II solutions in the large-degree limit that have been recently obtained by analyzing these Riemann--Hilbert representations. Here our aim is to simplify the results and correct them where necessary as well as to report them in a unif\/ied context.

\subsection{Basic properties of rational Painlev\'e-II solutions}
In this paper, we consider the Painlev\'e-II equation with parameter $m$ written in the form
\begin{gather}
\frac{\dd^2p}{\dd y^2}=2p^3 + \frac{2}{3}yp-\frac{2}{3}m,\qquad p=p(y),\qquad m\in\mathbb{C}.\label{eq:myPII}
\end{gather}
\subsubsection{Necessary conditions for existence of rational solutions}
Obviously if $m=0$, one has $p(y)=0$ as a particular solution. Moreover this is the only rational solution. Indeed, any nonzero rational solution $p(y)$ would necessarily admit, for some $p_0\neq 0$ and $n\in\mathbb{Z}$, a series expansion
\begin{gather}\label{expansion}
p(y)=y^n\big(p_0+p_1y^{-1}+\cdots\big)
\end{gather} convergent for suf\/f\/iciently large $|y|$, and the a dominant balance in \eqref{eq:myPII} with $m=0$ would require $n=\tfrac{1}{2}\not\in\mathbb{Z}$.

Conversely, if $m\in\mathbb{C}$ is nonzero, \eqref{eq:myPII} does not admit the zero solution. In this case, every rational solution $p(y)$ again has the form \eqref{expansion} %$p(y)=y^n(p_0+p_1y^{-1}+\cdots)$
for suf\/f\/iciently large $|y|$ and $p_0\neq 0$ with $n\in\mathbb{Z}$; it then follows from~\eqref{eq:myPII} that a dominant balance is achieved between the terms $\tfrac{2}{3}yp$ and $-\tfrac{2}{3}m$ yielding $n=-1$ and $p_0=m$. Therefore if $C$ is a counterclockwise-oriented circle of suf\/f\/iciently large radius,
\begin{gather}
\frac{1}{2\pi\ii}\oint\nolimits_C p(y)\,\dd y= m.\label{eq:residue-integral}
\end{gather}
Likewise if $y_0\in\mathbb{C}$ is a f\/inite pole of order $n$ of the rational solution $p(y)$, then the substitution of the Laurent series \begin{gather*}
p(y)=(y-y_0)^{-n}(p_0+p_1(y-y_0)+\cdots)
\end{gather*}
into \eqref{eq:myPII} results in a dominant balance between $p''$ and $2p^3$ yielding $n=1$ and $p_0=\pm 1$, so every pole of $p(y)$ is simple and has residue $\pm 1$. If $N_\pm(p)$ denotes the number of poles of $p$ of residue $\pm 1$, then
\begin{gather*}
\frac{1}{2\pi\ii}\oint\nolimits_Cp(y)\,\dd y = N_+(p)-N_-(p)\in\mathbb{Z},
\end{gather*}
so comparing with \eqref{eq:residue-integral}, we see that \eqref{eq:myPII} admits a rational solution $p(y)$ only if $m\in\mathbb{Z}$.

\subsubsection[B\"acklund transformations. Suf\/f\/icient conditions for existence of rational solutions. Uniqueness]{B\"acklund transformations. Suf\/f\/icient conditions for existence\\ of rational solutions. Uniqueness}

In 1971, Lukashevich \cite{Lukashevich71} discovered an explicit B\"acklund transformation for \eqref{eq:myPII}. Namely, supposing that $p(y)$ is an arbitrary solution of \eqref{eq:myPII}, the related function $\widehat{p}(y)$ def\/ined\footnote{The def\/inition apparently fails if $p(y)$ is such that $3p'(y)-3p(y)^2-y$ vanishes identically. It is easy to check that this condition is consistent with \eqref{eq:myPII} only if $m=-\tfrac{1}{2}$ (so the numerator in \eqref{eq:PII-Backlund} vanishes also). In the case $m=-\tfrac{1}{2}$, the Riccati equation $3p'(y)-3p(y)^2-y=0$ is solved by the formula $p(y)=-\phi'(y)/\phi(y)$ where $\phi''(y)+\tfrac{1}{3}y\phi(y)=0$; in other words, $\phi$ is an Airy function and $p(y)$ is a known special function (Airy) solution of~\eqref{eq:myPII}. A similar remark applies to the inverse transformation \eqref{eq:PII-Backlund-inverse} which gives $m=\tfrac{1}{2}$ and the denominator vanishes for the Airy solutions. Note that both transformations~\eqref{eq:PII-Backlund} and \eqref{eq:PII-Backlund-inverse} make sense whenever $m\in\mathbb{Z}$ except possibly for isolated values of~$y\in\mathbb{C}$.} by
\begin{gather}\label{eq:PII-Backlund}
\begin{split}
&\widehat{p}(y):=\frac{3p''(y)-3p(y)p'(y)-3p(y)^3-yp(y)-1}{3p'(y)-3p(y)^2-y}=-p(y)-\frac{2m+1}{3p'(y)-3p(y)^2-y},\\
& \widehat{m}:=m+1
\end{split}
\end{gather}
satisf\/ies \eqref{eq:myPII} with $(p,m)$ replaced by $(\widehat{p},\widehat{m})$. Note that $\widehat{p}$ is a rational function whenever $p$ is; therefore from the ``seed solution'' $p(y)=0$ for $m=0$, the (iterated) B\"acklund transformation produces a rational solution of \eqref{eq:myPII} for every positive integer $m$. From the elementary symmetry $(p(y),m)\mapsto (-p(y),-m)$ of \eqref{eq:myPII} one then has the existence of a rational solution of~\eqref{eq:myPII} for every $m\in\mathbb{Z}$, a fact that also follows from the earlier work of Yablonskii \cite{Yablonskii59} and Vorob'ev~\cite{Vorob'ev65} (see Section~\ref{sec:YVpolynomials} below). Hence, as pointed out by Airault \cite{Airault79}, the condition $m\in\mathbb{Z}$ is both necessary and suf\/f\/icient for the existence of rational solutions to~\eqref{eq:myPII}.

The B\"acklund transformation \eqref{eq:PII-Backlund} also yields a proof of uniqueness of the rational solution for given $m\in\mathbb{Z}$, a fact that was f\/irst noted by Murata \cite{Murata85}. Indeed, combining \eqref{eq:PII-Backlund} with the symmetry $(p(y),m)\mapsto (-p(y),-m)$ yields a second B\"acklund transformation
\begin{gather}\label{eq:PII-Backlund-inverse}
\begin{split}
& \widecheck{p}(y):=\frac{-3p''(y)-3p(y)p'(y)+3p(y)^3+yp(y)-1}{3p'(y)+3p(y)^2+y}=-p(y)+\frac{2m-1}{3p'(y)+3p(y)^2+y}\\
& \widecheck{m}:=m-1
\end{split}
\end{gather}
taking $p(y)$ solving \eqref{eq:myPII} to $\widecheck{p}(y)$ solving \eqref{eq:myPII} with $(p,m)$ replaced by $(\widecheck{p},\widecheck{m})$. Importantly, an elementary computation shows that the B\"acklund transformations~\eqref{eq:PII-Backlund} and~\eqref{eq:PII-Backlund-inverse} are inverses of each other on the space of solutions of~\eqref{eq:myPII}. In particular, both are injective maps. Suppose~$p(y)$ and~$\widetilde{p}(y)$ are distinct rational solutions of~\eqref{eq:myPII} for some $m\in\mathbb{Z}\setminus\{0\}$. Iteratively applying \mbox{either}~\eqref{eq:PII-Backlund} (if $m<0$) or~\eqref{eq:PII-Backlund-inverse} (if $m>0$) $|m|$ times, by injectivity we arrive at two distinct rational solutions of \eqref{eq:myPII} with $m=0$ in contradiction with the already-established uniqueness of the rational solution $p(y)=0$ for $m=0$. See~\cite{FokasGR93} for further information about B\"acklund transformations for Painlev\'e equations.

Explicitly, the f\/irst several rational Painlev\'e-II solutions are
\begin{gather}
p_0(y)=0,\qquad p_{\pm 1}(y)=\pm \frac{1}{y},\qquad p_{\pm 2}(y)=\pm \frac{2y^3-6}{y(y^3+6)},\nonumber\\
p_{\pm 3}(y)=\pm\frac{3y^2(y^6+12y^3+360)}{(y^3+6)(y^6+30y^3-180)},\label{eq:first-few-rationals}
\end{gather}
where the subscript indicates the value of $m$ in \eqref{eq:myPII}.

\subsubsection{Representation in terms of special polynomials}\label{sec:YVpolynomials}
It was f\/irst observed by Yablonskii \cite{Yablonskii59} and Vorob'ev \cite{Vorob'ev65} that rational solutions of the \mbox{Painle\-v\'e-II} equation \eqref{eq:myPII} can be represented in terms of special polynomials having an explicit recurrence relation. The \emph{Yablonskii--Vorob'ev polynomials} are def\/ined recursively by $Q_0(z):=1$, $Q_1(z):=z$, and then
\begin{gather}
Q_{n+1}(z):=\frac{zQ_n(z)^2-4\big(Q_n''(z)Q_n(z)-Q_n'(z)^2\big)}{Q_{n-1}(z)},\qquad n=1,2,3,\dots.
\label{eq:YV-recursion}
\end{gather}
Then the rational solution of \eqref{eq:myPII} can be expressed as
\begin{gather}
p(y)=p_m(y)=\frac{\dd}{\dd y}\ln\left(\frac{Q_m(z)}{Q_{m-1}(z)}\right),\qquad z=\left(\frac{2}{3}\right)^{1/3}y,\qquad m=1,2,3,\dots.\label{eq:p-YV}
\end{gather}

Of course the f\/irst surprise regarding the recursion formula \eqref{eq:YV-recursion} is that $\{Q_n(z)\}_{n=0}^\infty$ is indeed a sequence of polynomials in $z$. Indeed this is the case, as can be seen by the alternative formula due to Kajiwara and Ohta \cite{KajiwaraO96} expressing $Q_n(z)$ as a constant multiple of the Wronskian of $2n-1$ polynomials in~$z$ (see also \cite[Section~2.4]{Clarkson06}), and the f\/irst several iterates of~\eqref{eq:YV-recursion} produce:
\begin{gather*}
Q_2(z)=z^3+4,\qquad Q_3(z)=z^6+20z^3-80,\qquad Q_4(z)=z^{10}+60z^7+11200z.
\end{gather*}
It has been shown that the polynomials $\{Q_n(z)\}_{n=0}^\infty$ all have simple roots and that~$Q_m$ and~$Q_{m-1}$ can have no roots in common \cite{FukutaniOU00}, facts that are consistent via \eqref{eq:p-YV} with the fact that all poles of~$p(y)$ are simple with residues~$\pm 1$. Real roots of the Yablonskii--Vorob'ev polynomials correspond to real poles of~$p(y)$, and these have been studied extensively by Rof\/felsen who has shown that all nonzero real roots are all irrational~\cite{Roffelsen10} and that there are precisely $\lfloor(n+1)/3\rfloor$ negative roots of~$Q_n$ and $\lfloor(n+1)/2\rfloor$ total real roots of $Q_n$, and $Q_n(0)=0$ if and only if $n=1\pmod{3}$~\cite{Roffelsen12}. Also, the real roots of $Q_{n+1}$ and $Q_{n-1}$ interlace, as was proven by Clarkson~\cite{Clarkson06}.

\subsection{Outline of the paper}
The fact that all rational solutions of the Painlev\'e-II equation \eqref{eq:myPII} can be iteratively constructed, either via the direct B\"acklund transformations \eqref{eq:PII-Backlund} and \eqref{eq:PII-Backlund-inverse} or via the recurrence relation for the Yablonskii--Vorob'ev polynomials \eqref{eq:YV-recursion}, is quite remarkable and indicative of deeper integrable structure underlying the Painlev\'e-II equation. However, it must also be pointed out that the use of these iterative constructions is limited in practice, because the formulae generated become increasingly complicated as $|m|$ increases. The situation is similar to that encountered when studying orthogonal polynomials, which in general can be constructed systematically by a~Gram--Schmidt orthogonalization algorithm, but the number of steps of this algorithm increases with the degree of the polynomial desired, making it dif\/f\/icult to appeal to this approach to deduce properties of the general polynomial in the family.

Therefore, if our interest is to understand the analytic properties of the rational Painlev\'e-II functions, it is necessary to have an alternative representation that admits the possibility of asymptotic analysis for large $|m|$. In Section~\ref{sec:representations} we describe three such representations of the rational Painlev\'e-II solutions, two coming directly from the isomonodromic integrable structure underlying the Painlev\'e-II equation, and one related to a recently discovered representation of the squares of the Yablonskii--Vorob'ev polynomials in terms of the integrable structure behind orthogonal polynomials (which provides a work-around for the Gram--Schmidt procedure al\-lo\-wing large-degree asymptotics of general orthogonal polynomials to be computed). One of the contributions of our paper is then to establish a new identity relating the orthogonal polynomial approach to one of the isomonodromic approaches; see Section~\ref{sec:relation}.

These representations of the rational Painlev\'e-II solutions have indeed proven to be useful in characterizing the rational functions $p_m(y)$ in the limit of large $|m|$. In Section~\ref{sec:asymptotics} we review some of the results that have been proven with their help, outlining some of the methods of proof.

Below we will make frequent use of the Pauli spin matrices def\/ined by
\begin{gather*}
\sigma_1:=\begin{bmatrix}0 & 1\\1 & 0\end{bmatrix},\qquad\sigma_2:=\begin{bmatrix}0 & -\ii\\\ii & 0\end{bmatrix},\qquad\sigma_3:=\begin{bmatrix}1 & 0\\0 & -1\end{bmatrix}.
\end{gather*}

\section[Riemann--Hilbert problem representations of the rational Painlev\'e-II solution]{Riemann--Hilbert problem representations\\ of the rational Painlev\'e-II solutions}\label{sec:representations}
\subsection{Flaschka--Newell representation}\label{sec:FN}
In 1980, Flaschka and Newell \cite{FlaschkaN80} showed how a self-similar reduction of the Lax pair representation of the modif\/ied Korteweg--de Vries equation reveals the Painlev\'e-II equation in the form~\eqref{eq:myPII} to be an isomonodromic deformation of the linear equation
\begin{gather}
\frac{\partial\mathbf{v}}{\partial\lambda}=\mathbf{A}^\mathrm{FN}(\lambda,y)\mathbf{v},\qquad
\mathbf{A}^\mathrm{FN}(\lambda,y):=\begin{bmatrix}
-6\ii\lambda^2-3\ii p^2-\ii y & 6p\lambda + 3\ii p' + m\lambda^{-1}\\
6p\lambda-3\ii p'+m\lambda^{-1} & 6\ii \lambda^2+3\ii p^2+\ii y
\end{bmatrix}
\label{eq:PII-lambda}
\end{gather}
in which $p$, $p'$, $y$, and $m$ are regarded as numerical parameters. Indeed, \eqref{eq:PII-lambda} is compatible with the auxiliary linear equation
\begin{gather}
\frac{\partial\mathbf{v}}{\partial y}= \mathbf{B}^\mathrm{FN}(\lambda,y)\mathbf{v},\qquad\mathbf{B}^\mathrm{FN}(\lambda,y):=\begin{bmatrix} -\ii\lambda & p\\p & \ii\lambda
\end{bmatrix}\label{eq:PII-y}
\end{gather}
only if the
compatibility condition
\begin{gather}
\frac{\partial\mathbf{A}}{\partial y}-\frac{\partial\mathbf{B}}{\partial\lambda} + [\mathbf{A},\mathbf{B}]=\mathbf{0}\label{eq:ZCC}
\end{gather}
holds with $\mathbf{A}=\mathbf{A}^\mathrm{FN}$ and $\mathbf{B}=\mathbf{B}^\mathrm{FN}$. This forces $p$ to depend on $y$ by the Painlev\'e-II equation in the form \eqref{eq:myPII} and forces $p'=p'(y)$. The equation \eqref{eq:PII-y} then implies that the monodromy data associated with solutions of \eqref{eq:PII-lambda} depends trivially on $y$.

Let us describe the monodromy data associated with rational solutions $p=p_m(y)$ of \eqref{eq:myPII} for $m\in\mathbb{Z}$. It is pointed out in~\cite{FlaschkaN80} that whenever $(p,p')=(p_m(y),p'_m(y))$ in \eqref{eq:PII-lambda} for the rational solution $p_m(y)$, the irregular singular point at $\lambda=\infty$ for \eqref{eq:PII-lambda} exhibits only trivial Stokes phenomenon. This implies the existence of a
fundamental solution matrix of \eqref{eq:PII-lambda} of the form
\begin{gather}
\mathbf{V}_\infty(\lambda,y)=\left[\mathbb{I}+\sum_{n=1}^\infty \mathbf{K}^n(y)\lambda^{-n}\right]\ee^{-\ii\theta(\lambda,y)\sigma_3}
\label{eq:Vinfty}
\end{gather}
for some matrix coef\/f\/icients $\mathbf{K}^1(y)$, $\mathbf{K}^2(y)$, and so on, where
\begin{gather*}
\theta(\lambda,y):=2\lambda^3+y\lambda,
\end{gather*}
and where the inf\/inite series in \eqref{eq:Vinfty} is convergent for $|\lambda|$ suf\/f\/iciently large, which in view of $\lambda=0$ being the only f\/inite singular point actually means for $\lambda\neq 0$. Assuming compatibility, i.e., that $p=p(y)$ solves \eqref{eq:myPII} with $p'=p'(y)$, it can be shown that $\mathbf{V}_\infty(\lambda,y)$ is also a fundamental solution matrix for \eqref{eq:PII-y}, and then it follows by substitution into the latter system that $p(y)$ is recovered from the subleading term of the expansion \eqref{eq:Vinfty} by the formula
\begin{gather}
p(y)=2\ii K^1_{12}(y)=-2\ii K^1_{21}(y).\label{eq:p-from-Vinfty}
\end{gather}
On the other hand, $\lambda=0$ is a regular singular point for \eqref{eq:PII-lambda}.
Applying the method of Frobenius, there exists a fundamental solution matrix of \eqref{eq:PII-lambda} def\/ined in a neighborhood of
$\lambda=0$ having the form
\begin{gather}
\mathbf{V}_0(\lambda,y)=\left[\frac{1}{\sqrt{2}}\begin{bmatrix}1 & -1\\ 1 & 1\end{bmatrix}h(y)^{\sigma_3} + \sum_{n=1}^\infty\mathbf{H}^n(y)\lambda^n\right]\lambda^{m\sigma_3}
\label{eq:V0}
\end{gather}
for some scalar function $h(y)\neq 0$ and matrix coef\/f\/icients $\mathbf{H}^1(y)$, $\mathbf{H}^2(y)$, and so on. The absence of logarithms in spite of the fact that the Frobenius exponents $\pm m$ dif\/fer by an integer follows from the fact that, due to the triviality of the Stokes phenomenon at $\lambda=\infty$, the monodromy matrix for \eqref{eq:PII-lambda} corresponding to any loop about the origin is the identity, hence diagonalizable. However the same fact implies an ambiguity in the formula \eqref{eq:V0} in which the dominant column in the limit $\lambda\to 0$ is only determined up to addition of a multiple of the subdominant column. Flaschka and Newell \cite{FlaschkaN80} resolve this ambiguity as follows. They f\/irst observe that the subdominant column is well-def\/ined after the choice of the scalar $h(y)$, and from the recurrence relations determining the higher-order terms from the preceding terms a~predictable pattern emerges in which consecutive terms are alternating scalar multiples of the vectors $(1,1)^\top$ and $(-1,1)^\top$. A similar well-def\/ined alternating pattern holds for the dominant column, but only through the terms with $n\le 2|m|-1$, with the term for $n=2|m|$ satisfying an equation that is consistent but indeterminate. Here a choice is made: the term for $n=2|m|$ is taken to continue the alternating pattern of vectors $(1,1)^\top$ and $(-1,1)^\top$. Once this choice has been made, the alternating pattern again continues to all orders of the dominant column. In other words, Flaschka and Newell take $\mathbf{V}_0(\lambda,y)$ in the more specif\/ic form
\begin{gather}
\mathbf{V}_0(\lambda,y)=\frac{1}{\sqrt{2}}\begin{bmatrix}1 & -1\\1 & 1\end{bmatrix}\left(\sum_{n=0}^{\infty}\sigma_1^n\begin{bmatrix}h^n_{11}(y) & 0\\0 & h^n_{22}(y)\end{bmatrix}\lambda^n\right)\lambda^{m\sigma_3},\nonumber\\
 h^0_{11}(y)=h(y)=h^0_{22}(y)^{-1}.\label{eq:V0-again}
\end{gather}
There is then exactly one matrix solution of \eqref{eq:PII-lambda} of this form for a given scalar $h(y)$, and moreover, assuming compatibility, $h(y)$ can be chosen up to a constant scalar multiple so that $\mathbf{V}_0(\lambda,y)$ simultaneously solves~\eqref{eq:PII-y}. Again, the inf\/inite series appearing in~\eqref{eq:V0-again} is convergent near $\lambda=0$, and since there are no other f\/inite singular points it is actually convergent for all $\lambda\in\mathbb{C}$. By taking the limits $\lambda\to 0$ and $\lambda\to\infty$ respectively, Abel's theorem implies the identities $\det(\mathbf{V}_0(\lambda,y))=1$ and $\det(\mathbf{V}_\infty(\lambda,y))=1$ because the coef\/f\/icient matrix $\mathbf{A}^\mathrm{FN}(\lambda,y)$ in~\eqref{eq:PII-lambda} has zero trace. Therefore, as both $\mathbf{V}_\infty(\lambda,y)$ and (for a suitable choice of $h(y)$) $\mathbf{V}_0(\lambda,y)$ are simultaneous fundamental solution matrices for~\eqref{eq:PII-lambda} and \eqref{eq:PII-y} def\/ined in a common domain $0<|\lambda|<\infty$, there exists a constant unimodular matrix $\mathbf{G}_m$ such that
\begin{gather}
\mathbf{V}_{\infty}(\lambda,y)=\mathbf{V}_0(\lambda,y)\mathbf{G}_m,\qquad 0<|\lambda|<\infty.\label{eq:connection}
\end{gather}
The \emph{connection matrix} $\mathbf{G}_m$ is the monodromy data for the linear problem \eqref{eq:PII-lambda} in the case that $p=p_m(y)$ is a rational solution of \eqref{eq:myPII}. For more general solutions given $m\in\mathbb{Z}$, or for non-integral values of $m$, the monodromy data becomes augmented with six \emph{Stokes matrices} of alternating triangularity connecting solutions each having the form \eqref{eq:Vinfty} (but only as an asymptotic series, with no convergence properties implied) in six overlapping sectors of the irregular singular point at $\lambda=\infty$.

It is easy to see that $\overline{\mathbf{V}}(\lambda,y):=\sigma_1\mathbf{V}(-\lambda,y)\sigma_1$ is a fundamental solution matrix for the system \eqref{eq:PII-lambda} whenever $\mathbf{V}(\lambda,y)$ is. This substitution also leaves \eqref{eq:PII-y} invariant. Since $\mathbf{V}_\infty(\lambda,y)$ is uniquely determined from \eqref{eq:PII-lambda} and the leading term of its large-$\lambda$ asymptotic expansion (convergent in the trivial-monodromy case at hand for rational solutions $p=p_m(y)$),
we deduce the identity
\begin{gather}
\overline{\mathbf{V}}_\infty(\lambda,y)=\mathbf{V}_\infty(\lambda,y).\label{eq:V-infty-symmetry}
\end{gather}
Similarly, given the scalar $h(y)$, it follows from \eqref{eq:V0-again} that
\begin{gather}
\overline{\mathbf{V}}_0(\lambda,y)=\mathbf{V}_0(\lambda,y)\begin{bmatrix}0 & (-1)^m\\(-1)^{m+1} & 0\end{bmatrix}.\label{eq:V-zero-symmetry}
\end{gather}
Therefore, conjugating by $\sigma_1$ and replacing $\lambda\mapsto -\lambda$ in \eqref{eq:connection}, the use of the identities~\eqref{eq:V-infty-symmetry}--\eqref{eq:V-zero-symmetry} shows that also
\begin{gather*}
\mathbf{V}_\infty(\lambda,y)=\mathbf{V}_0(\lambda,y)(-1)^m\sigma_3\mathbf{G}_m\sigma_1,\qquad 0<|\lambda|<\infty,
\end{gather*}
and hence comparing again with \eqref{eq:connection} one sees that $\mathbf{G}_m=(-1)^m\sigma_3\mathbf{G}_m\sigma_1$. This matrix identity along with the condition that $\det(\mathbf{G}_m)=1$ implies that $\mathbf{G}_m$ necessarily has the form
\begin{gather}
\mathbf{G}_m=\begin{bmatrix}\alpha & (-1)^m\alpha\\
(-1)^{m+1}(2\alpha)^{-1} & (2\alpha)^{-1}\end{bmatrix}, \label{eq:G-form}
\end{gather}
where only the nonzero constant $\alpha$ is undetermined by symmetry.

We may now formulate a Riemann--Hilbert problem to recover $\mathbf{V}_\infty(\lambda,y)$ and $\mathbf{V}_0(\lambda,y)$, and hence also the rational Painlev\'e-II function $p_m(y)$, from the monodromy data, i.e., from the connection matrix $\mathbf{G}_m$. To this end, we def\/ine a matrix $\mathbf{M}^m(\lambda,y)$ by
\begin{gather*}%\label{eq:M}
 \mathbf{M}^m(\lambda,y)= \begin{cases}
 \mathbf{V}_\infty(\lambda,y)\ee^{\ii\theta(\lambda,y))
 \sigma_3}\lambda^{-m\sigma_3}, &|\lambda|>1,\\
 \mathbf{V}_0(\lambda,y)\ee^{\ii\theta(\lambda,y)
 \sigma_3}\lambda^{-m\sigma_3}, &|\lambda|<1.
 \end{cases}
\end{gather*}
It is then clear that $\mathbf{M}^m(\lambda,y)$ solves the following Riemann--Hilbert problem.
\begin{rhp}[Flaschka--Newell representation]\label{rhp:FN} Let $m\in\mathbb{Z}$ and $y\in\mathbb{C}$ be given. Seek a~$2\times 2$ matrix-valued function $\mathbf{M}^m(\lambda,y)$ defined for $\lambda\in\mathbb{C}$, $|\lambda|\neq 1$, with the following properties:
\begin{itemize}
\item \textit{\textbf{Analyticity.}} $\mathbf{M}^m(\lambda,y)$ is analytic for $|\lambda|\neq 1$, taking continuous boundary values \linebreak $\mathbf{M}^m_+(\lambda,y)$ and $\mathbf{M}^m_-(\lambda,y)$ for $|\lambda|=1$ from the interior and exterior respectively of the unit circle.
\item \textit{\textbf{Jump condition.}} The boundary values are related by
\begin{gather*}
\mathbf{M}^m_+(\lambda,y)=\mathbf{M}^m_-(\lambda,y)\lambda^{m\sigma_3}\ee^{-\ii\theta(\lambda,y)\sigma_3}\mathbf{G}_m^{-1}
\ee^{\ii\theta(\lambda,y)\sigma_3}\lambda^{-m\sigma_3},\qquad |\lambda|=1.
\end{gather*}
\item \textit{\textbf{Normalization.}} The matrix $\mathbf{M}^m(\lambda,y)$ is normalized at $\lambda=\infty$ as follows:
\begin{gather*}
\lim_{\lambda\to\infty}\mathbf{M}^m(\lambda,y)\lambda^{m\sigma_3}=\mathbb{I},
\end{gather*}
where the limit may be taken in any direction.
\end{itemize}
\end{rhp}
The solution of this Riemann--Hilbert problem exists precisely for those values of $y\in\mathbb{C}$ that are not poles of $p_m(y)$. Given the solution $\mathbf{M}^m(\lambda,y)$, one extracts the rational Painlev\'e-II function $p_m(y)$ from the limit (cf.~\eqref{eq:p-from-Vinfty})
\begin{gather}
p_m(y)=2\ii\lim_{\lambda\to\infty}\lambda^{1+m}M^m_{12}(\lambda,y)=-2\ii\lim_{\lambda\to\infty}\lambda^{1-m}M^m_{21}(\lambda,y).\label{eq:pm-FN}
\end{gather}
Note also that without loss of generality one may take the constant $\alpha$ in \eqref{eq:G-form} to be $\alpha=1$, simply by re-def\/ining $\mathbf{M}^m(\lambda,y)$ within the unit circle by multiplication on the right by $\alpha^{\sigma_3}$. Such a re-def\/inition clearly does not af\/fect $\mathbf{M}^m(\lambda,y)$ for $|\lambda|>1$ and therefore has no essential ef\/fect on the reconstruction of $p_m(y)$.

Flaschka and Newell observe that Riemann--Hilbert Problem~\ref{rhp:FN} can be solved by reduction to f\/inite-dimensional linear algebra, resulting in determinantal formulae for $p_m(y)$ equivalent to iterated B\"acklund transformations studied by Airault \cite{Airault79}. To see this, note that uniqueness of solutions of Riemann--Hilbert Problem~\ref{rhp:FN} is an elementary consequence of Liouville's theorem, so it is suf\/f\/icient to construct a solution by any means. Now, $\mathbf{M}^m(\lambda,y)$ necessarily has a~convergent Laurent expansion about $\lambda=\infty$, suggesting to seek $\mathbf{M}^m(\lambda,y)$ as a suitable Laurent polynomial. In fact, assuming without loss of generality that $m\ge 0$, we may suppose that in the domain $|\lambda|>1$ the f\/irst row of $\mathbf{M}^m(\lambda,y)$ has the form
\begin{gather}
M^m_{11}(\lambda,y)=\lambda^{-m} + a_1(y)\lambda^{-m-1}+\cdots+a_{m-1}(y)\lambda^{1-2m} + a_m(y)\lambda^{-2m},\nonumber\\
M^m_{12}(\lambda,y)=b_1(y)\lambda^{m-1}+b_2(y)\lambda^{m-2}+\cdots + b_{m-1}(y)\lambda + b_m(y).\label{eq:ansatz}
\end{gather}
This ansatz clearly satisf\/ies the necessary analyticity condition for $|\lambda|>1$ as well as the normalization condition at $\lambda=\infty$. The jump condition can then be reinterpreted as requiring that the linear combinations
\begin{gather*}
M^m_{11+}(\lambda,y):=\frac{1}{2\alpha}\big[M^m_{11-}(\lambda,y) + (-1)^m\ee^{2\ii\theta(\lambda,y)}\lambda^{-2m}M^m_{12-}(\lambda,y)\big],\\
M^m_{12+}(\lambda,y):=\alpha\big[(-1)^{m+1}\ee^{-2\ii\theta(\lambda,y)}\lambda^{2m}M^m_{11-}(\lambda,y) + M^m_{12-}(\lambda,y)\big],
\end{gather*}
where the boundary values $M_{11-}^m(\lambda,y)$ and $M_{12-}^m(\lambda,y)$ are given by the ansatz \eqref{eq:ansatz}, both be analytic functions within the unit disk, where the only potential singularity is $\lambda=0$. The form of the ansatz automatically guarantees that this is the case for $M_{12+}^m(\lambda,y)$, but $M_{11+}^m(\lambda,y)$ has precisely $2m$ negative powers of~$\lambda$ whose coef\/f\/icients are required to vanish. It is easily seen that this amounts to a square inhomogeneous linear system of equations, explicit in terms of the Taylor coef\/f\/icients of $\ee^{\pm 2\ii\theta(\lambda,y)}$, on the $2m$ unknowns $a_1(y),\dots,a_m(y)$ and $b_1(y),\dots,b_m(y)$. The solution of this linear system by Cramer's rule gives the rational Painlev\'e-II function $p_m(y)$ in the form $p_m(y)=2\ii b_1(y)$. For example, in the case $m=2$ we require that
\begin{gather*}
M^2_{11+}(\lambda,y)=\lambda^{-2}+a_1(y)\lambda^{-3}+a_2(y)\lambda^{-4}+\ee^{2\ii\theta(\lambda,y)}\big(b_1(y)\lambda^{-3}+b_2(y)\lambda^{-4}\big)\\ \hphantom{M^2_{11+}(\lambda,y)}{} =(a_2(y)+b_2(y))\lambda^{-4} + (a_1(y)+b_1(y)+2\ii yb_2(y))\lambda^{-3} \\
\hphantom{M^2_{11+}(\lambda,y)=}{}+
\big(1+2\ii yb_1(y)-2y^2b_2(y)\big)\lambda^{-2}\! + \big({-}2y^2b_1(y)+4\ii\big(1\!-\!\tfrac{1}{3}y^3\big)b_2(y)\big)\lambda^{-1}\! + \mathcal{O}(1),
\end{gather*}
where the last term represents a function analytic at $\lambda=0$, be analytic at $\lambda=0$ from which one obtains $p_2(y)=2\ii b_1(y)=(2y^3-6)/(y(y^3+6))$ as expected (cf.~\eqref{eq:first-few-rationals}).

\subsection{Jimbo--Miwa representation}\label{sec:JM}
In 1981, Jimbo and Miwa \cite{JimboM81a} found a representation of the Painlev\'e-II equation as the compatibility condition for a Lax pair dif\/ferent from that found by Flaschka and Newell. We take Jimbo and Miwa's linear equations in the form
\begin{gather}
\frac{\partial\mathbf{v}}{\partial\zeta}=\mathbf{A}^\mathrm{JM}(\zeta,y)\mathbf{v},\qquad
\mathbf{A}^\mathrm{JM}(\zeta,y):=\begin{bmatrix}
-\tfrac{3}{2}\zeta^2 - 3\mathcal{U}\mathcal{V}-\tfrac{1}{2}y & 3\mathcal{U}\zeta + \mathcal{W}\\
-3\mathcal{V}\zeta -\mathcal{Z} & \tfrac{3}{2}\zeta^2 + 3\mathcal{U}\mathcal{V}+\tfrac{1}{2}y
\end{bmatrix}\label{eq:JM-zeta}
\end{gather}
and
\begin{gather}
\frac{\partial\mathbf{v}}{\partial y}=\mathbf{B}^\mathrm{JM}(\zeta,y)\mathbf{v},\qquad
\mathbf{B}^\mathrm{JM}(\zeta,y):=\begin{bmatrix}
-\tfrac{1}{2}\zeta & \mathcal{U}\\-\mathcal{V} & \tfrac{1}{2}\zeta
\end{bmatrix}\label{eq:JM-y}
\end{gather}
For this system, the compatibility condition \eqref{eq:ZCC} with $\mathbf{A}=\mathbf{A}^\mathrm{JM}$ and $\mathbf{B}=\mathbf{B}^\mathrm{JM}$ is equivalent to the following f\/irst-order system of equations:
\begin{gather}
\mathcal{U}'(y)=-\tfrac{1}{3}\mathcal{W}(y),\nonumber\\
\mathcal{V}'(y)=\tfrac{1}{3}\mathcal{Z}(y),\nonumber\\
\mathcal{W}'(y)=6\mathcal{U}(y)^2\mathcal{V}(y)+y\mathcal{U}(y),\nonumber\\
\mathcal{Z}'(y)=-6\mathcal{U}(y)\mathcal{V}(y)^2-y\mathcal{V}(y).\label{eq:PII-system}
\end{gather}
This system admits a f\/irst integral
\begin{gather}
m:=\mathcal{U}(y)\mathcal{Z}(y)+\mathcal{V}(y)\mathcal{W}(y)+\tfrac{1}{2}=\text{const},
\label{eq:first-integral}
\end{gather}
and then with $p(y)=\mathcal{U}'(y)/\mathcal{U}(y)$ the system \eqref{eq:PII-system} yields the Painlev\'e-II equation for $p(y)$ in the form \eqref{eq:myPII}.

As with the Flaschka--Newell approach, it is the problem~\eqref{eq:JM-zeta} whose analysis for f\/ixed $y$ determines the monodromy data, which is then independent of $y$ for simultaneous solutions of \eqref{eq:JM-zeta}--\eqref{eq:JM-y}. However, the direct monodromy problem \eqref{eq:JM-zeta} has a dif\/ferent character than in the Flaschka--Newell approach because \eqref{eq:JM-zeta} has only one singular point, an irregular singular point at inf\/inity, while \eqref{eq:PII-lambda} has in addition a regular singular point at the origin if $m\neq 0$. Thus, all solutions of \eqref{eq:JM-zeta} are entire functions of $\zeta$, and all monodromy data is generated only from the Stokes phenemonon about the singular point at inf\/inity. In particular, it is the case that for the rational solution $p=p_m(y)$ for $m\in\mathbb{Z}$, solutions of \eqref{eq:JM-zeta} exhibit nontrivial Stokes phenomenon in contrast to the situation in Flaschka--Newell theory.

The Stokes multipliers for \eqref{eq:JM-zeta} when $p=p_m(y)$ is the rational solution of \eqref{eq:myPII} for $m\in\mathbb{Z}$ can be inferred from the following Riemann--Hilbert problem, which arises naturally in the study of solutions of the sine-Gordon equation $\epsilon^2u_{tt}-\epsilon^2u_{xx}+\sin(u)=0$ in the semiclassical limit near certain critical points $(x,t)=(x_\mathrm{crit},0)$; see \cite[Section~5]{BuckinghamM12}.
\begin{rhp}[Jimbo--Miwa representation]\label{rhp:JM} Let $m\in\mathbb{Z}$ and $y\in\mathbb{C}$ be given. Seek a $2\times 2$ matrix-valued function $\mathbf{Z}^m(\zeta,y)$ be defined for $\zeta\in\mathbb{C}\setminus\Sigma$, where $\Sigma$ is the union of six rays $\Sigma:=\mathbb{R}\cup\ee^{\ii\pi/3}\mathbb{R}\cup\ee^{-\ii\pi/3}\mathbb{R}$, and having the following properties:
\begin{itemize}
\item\textit{\textbf{Analyticity.}} $\mathbf{Z}^m(\zeta,y)$ is analytic for $\zeta\in\mathbb{C}\setminus\Sigma$, taking continuous boundary values along the boundary of each component of this domain.
\item\textit{\textbf{Jump condition.}} Taking each ray of $\Sigma$ to be oriented in the direction away from the origin and given a point $\zeta$ on one of the rays using the notation $\mathbf{Z}_+^m(\zeta,y)$ $($resp.~$\mathbf{Z}_-^m(\zeta,y))$ to denote the boundary value taken at $\zeta\in\Sigma$ from the left $($resp.\ right$)$, the boundary values are related by
\begin{gather*}
\mathbf{Z}_+^m(\zeta,y)=\mathbf{Z}_-^m(\zeta,y)\ee^{-\phi(\zeta,y)\sigma_3}\mathbf{V}\ee^{\phi(\zeta,y)\sigma_3},\qquad \zeta\in\Sigma\setminus\{0\},\qquad\phi(\zeta,y):=\tfrac{1}{2}\zeta^3+\tfrac{1}{2}y\zeta,
\end{gather*}
where $\mathbf{V}$ is constant along each ray and is as shown in Fig.~{\rm \ref{fig:JM-RHP}}.
\item\textit{\textbf{Normalization.}} The matrix $\mathbf{Z}^m(\zeta,y)$ is normalized at $\zeta=\infty$ as follows:
\begin{gather*}
\lim_{\zeta\to\infty}\mathbf{Z}^m(\zeta,y)(-\zeta)^{(1-2m)\sigma_3/2}=\mathbb{I},
\end{gather*}
where the limit can be taken in any direction except the positive real axis, which is the branch cut for the principal branch of $(-\zeta)^{(1-2m)\sigma_3/2}$.
\end{itemize}
\end{rhp}
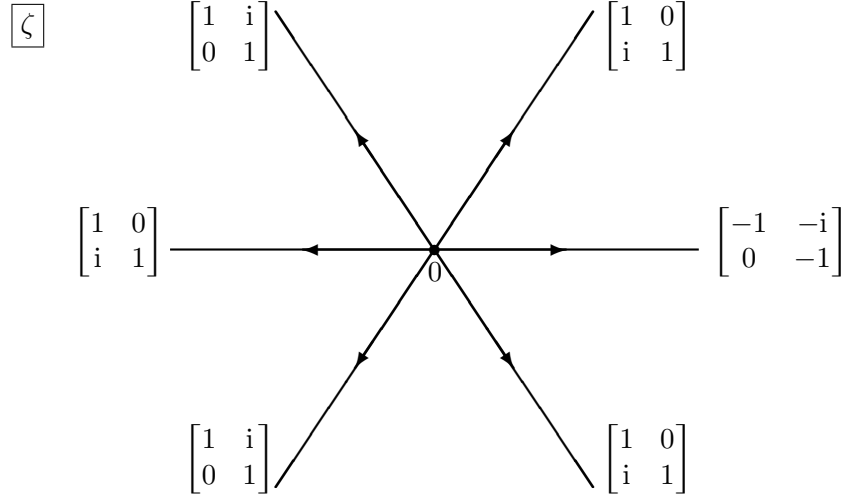
\begin{figure}[t]\centering
\setlength{\unitlength}{2pt}
\begin{picture}(100,100)(-50,-50)
\thicklines
\put(-80,41){\framebox{$\zeta$}}
\put(-50,0){\line(1,0){100}}
\put(0,0){\vector(-1,0){25}}
\put(0,0){\vector(1,0){25}}
\put(0,0){\line(2,-3){30}}
\put(0,0){\vector(2,-3){15}}
\put(0,0){\line(-2,3){30}}
\put(0,0){\vector(-2,3){15}}
\put(0,0){\line(2,3){30}}
\put(0,0){\vector(2,3){15}}
\put(0,0){\line(-2,-3){30}}
\put(0,0){\vector(-2,-3){15}}
\put(0,0){\circle*{2}}
\put(-1.25,-6){$0$}
\put(-68,0){$\begin{bmatrix} 1 & 0 \\ \ii & 1 \end{bmatrix}$}
\put(-47,39){$\begin{bmatrix} 1 & \ii \\ 0 & 1 \end{bmatrix}$}
\put(32,39){$\begin{bmatrix} 1 & 0 \\ \ii & 1 \end{bmatrix}$}
\put(53,0){$\begin{bmatrix} -1 & -\ii \\ 0 & -1 \end{bmatrix}$}
\put(32,-41){$\begin{bmatrix} 1 & 0 \\ \ii & 1 \end{bmatrix}$}
\put(-47,-41){$\begin{bmatrix} 1 & \ii \\ 0 & 1 \end{bmatrix}$}
\end{picture}
\caption{The jump contour $\Sigma$ and the value of the constant matrix $\mathbf{V}$ on each ray of $\Sigma$ for Riemann--Hilbert Problem~\ref{rhp:JM}.}\label{fig:JM-RHP}
\end{figure}

From the solution of Riemann--Hilbert Problem~\ref{rhp:JM} one obtains the rational Painlev\'e-II function $p_m(y)$ from the coef\/f\/icients in the large-$\zeta$ expansion of $\mathbf{Z}^m(\zeta,y)$:
\begin{gather}
\mathbf{Z}^m(\zeta,y)(-\zeta)^{(1-2m)\sigma_3/2}=\mathbb{I}+\mathbf{A}^m(y)\zeta^{-1}+
\mathbf{B}^m(y)\zeta^{-2}+\mathcal{O}\big(\zeta^{-3}\big),\qquad\zeta\to\infty,\label{eq:Z-expansion}
\end{gather}
by the formula
\begin{gather}
p_m(y)=A^m_{22}(y)-\frac{B^m_{12}(y)}{A^m_{12}(y)}.\label{eq:pm-from-JM}
\end{gather}
In \cite{BuckinghamM12}, it was deduced that Riemann--Hilbert Problem~\ref{rhp:JM} encodes the Stokes multipliers for the Lax pair \eqref{eq:JM-zeta}--\eqref{eq:JM-y} associated with the rational Painlev\'e-II function $p_m(y)$ as follows. Firstly, by considering $\mathbf{L}^m(\zeta,y):=\mathbf{Z}^m(\zeta,y)\ee^{-\phi(\zeta,y)\sigma_3}$, one shows that partial derivatives of $\mathbf{L}^m(\zeta,y)$ with respect to $\zeta$ and $y$ satisfy exactly the same jump conditions on the rays of $\Sigma$ as does $\mathbf{L}^m(\zeta,y)$ itself, a fact that along with some local analysis near $\zeta=0$ and $\zeta=\infty$ shows that $\mathbf{L}^m(\zeta,y)$ is a~simultaneous fundamental solution matrix of the two Lax pair equations \eqref{eq:JM-zeta}--\eqref{eq:JM-y}, provided that the coef\/f\/icients $\mathcal{U}$, $\mathcal{V}$, $\mathcal{W}$, and $\mathcal{Z}$ are def\/ined from the expansion \eqref{eq:Z-expansion}
by the formulae
\begin{gather*}
\mathcal{U}(y):=A^m_{12}(y),\qquad \mathcal{V}(y):=A^m_{21}(y),\qquad \mathcal{W}(y):=3B^m_{12}(y)-3A^m_{12}(y)A^m_{22}(y),\\
\mathcal{Z}(y):=3B^m_{21}(y)-3A^m_{21}(y)A^m_{11}(y).
\end{gather*}
Then, by reexamination of the asymptotic behavior of $\mathbf{L}^m(\zeta,y)$ for large $\zeta$ one f\/inds that the parameter $m\in\mathbb{Z}$ appearing in Riemann--Hilbert Problem~\ref{rhp:JM} is related to these functions by the identity \eqref{eq:first-integral}, identifying it with the parameter $m$ appearing in the Painlev\'e-II equation \eqref{eq:myPII}. It remains therefore to deduce that $p_m(y)$ def\/ined now by the expression \eqref{eq:pm-from-JM} is the \emph{rational} solution of \eqref{eq:myPII}. This can be accomplished by f\/irst noting that in the case $m=0$ a symmetry argument combined with~\eqref{eq:pm-from-JM} shows that $p_0(y)=0$, at which point one can leverage the $y$-part~\eqref{eq:JM-y} of the Lax pair to construct $\mathbf{Z}^0(\zeta,y)$ explicitly in terms of Airy functions of argument $6^{-1/3}\big(y+\tfrac{3}{2}\zeta^2\big)$. Then, one can apply iterated \emph{discrete isomondromic Schlesinger transformations} (also known in the integrable systems literature as \emph{Darboux transformations}; see \cite[Section~2]{BertolaC15} and \cite{JimboM81a} for further information on these notions) to explicitly increment or decrement the value of $m$ in integer steps, with the corresponding ef\/fect on the coef\/f\/icient $p_m(y)$ def\/ined by~\eqref{eq:pm-from-JM} being given by the B\"acklund transformations~\eqref{eq:PII-Backlund} or~\eqref{eq:PII-Backlund-inverse} respectively. As these preserve rationality, one concludes that $p_m(y)$ given by \eqref{eq:pm-from-JM} is precisely the rational solution of \eqref{eq:myPII} when $\mathbf{Z}^m(\zeta,y)$ is the solution of Riemann--Hilbert Problem~\ref{rhp:JM} for arbitrary $m\in\mathbb{Z}$. See \cite[Section~5]{BuckinghamM12} for full details of these arguments.

\subsection{Bertola--Bothner representation}
In \cite{BertolaB15}, Bertola and Bothner derived a new Hankel determinant representation of the squares of the Yablonskii--Vorob'ev polynomials $\{Q_n(z)\}_{n=0}^\infty$ def\/ined by the recurrence relation \eqref{eq:YV-recursion} with initial conditions $Q_0(z)=1$ and $Q_1(z)=z$. This new identity leads to a formula expressing the rational Painlev\'e-II function $p_m(y)$ in terms of pseudo-orthogonal polynomials (i.e., polynomials orthogonal with respect to an indef\/inite inner product involving contour integration with a~complex-valued weight), and this in turn leads to a Riemann--Hilbert representation.

The main theorem reported and proved in \cite{BertolaB15} is the following.
\begin{Theorem}[Bertola and Bothner, \cite{BertolaB15}]\label{thm:BB} Given $z\in\mathbb{C}$, let $\{\mu_k(z)\}_{k=0}^\infty$ denote the Taylor coefficients of the generating function $f(t):=\ee^{tz-\tfrac{1}{3}t^3}$:
\begin{gather*}
\ee^{tz-\tfrac{1}{3}t^3}=\sum_{k=0}^\infty\mu_k(z)t^k,\qquad (z,t)\in \mathbb{C}^2.
\end{gather*}
Then, for any $n\ge 1$,
\begin{gather*}
Q_{n-1}(z)^2 = (-1)^{\lfloor n/2\rfloor}\frac{{D}_n(z)}{2^{n-1}}\prod_{k=1}^{n-1}\left[\frac{(2k)!}{k!}\right]^2,
%\label{thm231}
\end{gather*}
where $\lfloor u\rfloor$ denotes the greatest integer less than or equal to $u$ and ${D}_n(z)$ is the Hankel determinant
\begin{gather*}
{D}_n(z):=\det [\mu_{l+j-2}(z)]_{l,j=1}^n.
\end{gather*}
\end{Theorem}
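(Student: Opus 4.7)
My plan is to convert the Hankel determinant into a Wronskian via the moment shift relation and then identify it with $Q_{n-1}(z)^2$ by induction using a Hirota-type bilinear recursion. The key preliminary observation is that $\partial_z f=tf$ immediately yields $\partial_z\mu_k(z)=\mu_{k-1}(z)$ for $k\ge 1$ (with $\mu_{-1}\equiv 0$). This identifies column $j$ of the Hankel matrix $[\mu_{l+j-2}]_{l,j=1}^n$ as the $z$-derivative of column $j+1$. Iterating from the rightmost column and then reversing column order therefore gives
\begin{equation*}
D_n(z)=(-1)^{n(n-1)/2}\,W(\mu_{n-1},\mu_n,\ldots,\mu_{2n-2})(z),
\end{equation*}
where $W$ denotes the Wronskian in $z$. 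An elementary parity check shows $(-1)^{n(n-1)/2}=(-1)^{\lfloor n/2\rfloor}$, reproducing the sign appearing in the theorem.

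Next I would identify $W_n(z):=W(\mu_{n-1},\ldots,\mu_{2n-2})$ with $Q_{n-1}(z)^2$ up to the stated explicit constant by induction on $n$. Rearranging \eqref{eq:YV-recursion} yields the Hirota-type identity
\begin{equation*}
Q_nQ_n''-(Q_n')^2\;=\;\tfrac{1}{4}\bigl(zQ_n^2-Q_{n+1}Q_{n-1}\bigr),
\end{equation*}
which together with $Q_0=1$ and $Q_1=z$ determines the sequence $Q_n^2$ recursively. On the Wronskian side, I would derive the analogous bilinear identity for $W_n$ by applying the Desnanot--Jacobi (Lewis Carroll) identity to a suitable bordered determinant built out of $\mu_{n-2},\mu_{n-1},\ldots,\mu_{2n-1}$ and their $z$-derivatives. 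The normalization $C_n:=2^{-(n-1)}\prod_{k=1}^{n-1}[(2k)!/k!]^2$ is designed precisely to absorb the combinatorial constants produced by the Desnanot--Jacobi identity, so that $C_nW_n$ obeys the same bilinear recursion as $Q_{n-1}^2$. The base cases $D_1=\mu_0=1=Q_0^2$ and $-2D_2=-2(-z^2/2)=z^2=Q_1^2$ are immediate and close the induction.

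The main obstacle will be the constant-matching step, namely verifying that the Desnanot--Jacobi identity produces exactly the coefficient $\tfrac14$ of the Yablonskii--Vorob'ev bilinear form and the ratio $C_{n+1}C_{n-1}/C_n^2$. The natural tool is the third-order linear ODE $y'''-zy'+ky=0$ satisfied by $y=\mu_k$, obtained by combining $\partial_z^3f=t^3f$ with the $t$-recursion $(k+1)\mu_{k+1}=z\mu_k-\mu_{k-2}$; this ODE collapses high $z$-derivatives of $\mu_k$ back into the moment sequence and thereby reduces the bordered Wronskian to $W_{n-1},W_n,W_{n+1}$ together with $W_n''$ and $(W_n')^2$. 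As a parallel check I would pursue the tau-function viewpoint: $Q_{n-1}^2$ is the classical isomonodromic tau function of the Painlev\'e-II equation attached, for instance, to the Flaschka--Newell Lax pair of Section~\ref{sec:FN}, while $D_n$ is the partition function of the pseudo-orthogonal polynomials with weight $\ee^{tz-\tfrac{1}{3}t^3}$. Uniqueness of the tau function up to an $n$-independent multiplicative factor (which is exactly what $C_n$ absorbs) would then give the identity, with the explicit normalization pinned down by the two base cases above.
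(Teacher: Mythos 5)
First, a point of context: the paper does not prove Theorem~\ref{thm:BB} at all --- it is stated as ``the main theorem reported and proved in \cite{BertolaB15}'' and used as a black box --- so there is no in-paper argument to measure yours against; I can only assess your plan on its own terms. Your preliminary reductions are correct and genuinely useful: $\partial_z f=tf$ does give $\mu_k'=\mu_{k-1}$, the Hankel determinant does become $(-1)^{\lfloor n/2\rfloor}W(\mu_{n-1},\dots,\mu_{2n-2})$ after reversing columns, the parity identity $(-1)^{n(n-1)/2}=(-1)^{\lfloor n/2\rfloor}$ holds, the ODE $\mu_k'''-z\mu_k'+k\mu_k=0$ follows from $(k+1)\mu_{k+1}=z\mu_k-\mu_{k-2}$ as you say, and the base cases $D_1=1$, $D_2=-z^2/2$ check out against $Q_0^2$ and $Q_1^2$.

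The gap is in the inductive step, and it is structural rather than a matter of chasing constants. The relation \eqref{eq:YV-recursion} is bilinear in $Q_n$, whereas the objects you induct on are (up to constants) the squares $R_n:=Q_n^2$; since $Q_nQ_n''-(Q_n')^2=\bigl(R_nR_n''-(R_n')^2\bigr)/(2R_n)$, the recursion at the level of squares is the quartic relation $R_{n+1}R_{n-1}R_n^2=\bigl(zR_n^2-2R_nR_n''+2(R_n')^2\bigr)^2$, not a Hirota bilinear form. What Desnanot--Jacobi actually yields for the $(n+1)\times(n+1)$ Hankel matrix is $D_{n+1}\widetilde{D}_{n-1}=D_n\widetilde{D}_n-\widehat{D}_n^{\,2}$, where $\widehat{D}_n=\det[\mu_{l+j-1}]$ and $\widetilde{D}_n=\det[\mu_{l+j}]$ are the shifted Hankel determinants. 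In the usual Toda setting these shifts are $z$-derivatives of $D_n$, but here differentiation \emph{lowers} the moment index: $\partial_zD_n$ collapses to a single non-Hankel determinant (only the first column survives), and the Wronskian families $W(\mu_{n-1},\dots,\mu_{2n-2})$ \emph{slide} with $n$ (drop $\mu_{n-1}$, append $\mu_{2n-1},\mu_{2n}$) rather than being nested, so the standard Jacobi/Wronskian recursion does not apply. Consequently the Desnanot--Jacobi output cannot be rewritten in terms of $D_n,D_n',D_n''$ alone; you would first have to identify $\widehat{D}_n$ and $\widetilde{D}_n$ with products of consecutive Yablonskii--Vorob'ev polynomials, a family of identities of essentially the same depth as the theorem, which your plan does not supply. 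The tau-function fallback has a matching gap: the multiplicative ambiguity of the tau function is $z$-independent but certainly $n$-dependent (that is exactly what $C_n$ is), so the two base cases $n=1,2$ cannot pin it down for all $n$ --- you would need, e.g., the leading coefficient of $D_n(z)$ for every $n$ --- and the premise that $D_n$ is the isomonodromic tau function for the same monodromy data is itself a nontrivial claim, comparable in content to Theorem~\ref{thm:connection}, not a freebie.
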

The coef\/f\/icients $\mu_k(z)$ are polynomials with numerous special properties, some of which are enumerated in \cite{BertolaB15}. Similar determinantal representations of the Yablonskii--Vorob'ev polynomials themselves (not the squares) had been previously known \cite{KajiwaraO96}, including one represen\-ting~$Q_n(z)$ via a non-Hankel determinant involving the scaled functions $\mu_k\big(4^{1/3}z\big)$ and one represen\-ting~$Q_n(z)$ as a Hankel determinant built from functions that can be extracted from a~ge\-ne\-rating function via a non-convergent asymptotic series~\cite{IwasakiKN02}. However, it is the combination of the Hankel structure of the determinant with the convergent nature of the generating function expansion that leads to a~Riemann--Hilbert representation of $p_m(y)$ as we will now explain.

When combined with Theorem~\ref{thm:BB}, the representation \eqref{eq:p-YV} of $p_m(y)$ in terms of the Yab\-lons\-kii--Vorob'ev polynomials gives
\begin{gather}
p_m(y)=\frac{1}{2}\frac{\dd}{\dd y}\ln(\eta_m((\tfrac{2}{3})^{1/3}y)),\qquad\eta_m(z):=\frac{{D}_{m+1}(z)}{{D}_m(z)},\qquad m=1,2,3,\dots.
\label{eq:pm-norming-constant}
\end{gather}
Now, since the polynomials $\{\mu_k(z)\}_{k=0}^\infty$ are Taylor coef\/f\/icients of the entire function $f(t)=\ee^{tz-\tfrac{1}{3}t^3}$, they may be written as contour integrals using the Cauchy integral formula:
\begin{gather*}
\mu_k(z)=\left.\frac{1}{k!}\frac{\dd^k}{\dd t^k}\ee^{tz-\tfrac{1}{3}t^3}\right|_{t=0} = \frac{1}{2\pi\ii}\oint\nolimits_C t^{-k-1}\ee^{tz-\tfrac{1}{3}t^3}\,\dd t,\qquad k=0,1,2,3,\dots.
\end{gather*}
Here $C$ is a simple contour encircling the origin in the counterclockwise direction; without loss of generality we will take it to coincide with the unit circle. Setting $t=\xi^{-1}$ in the integrand puts the formula in the equivalent form
\begin{gather*}
\mu_k(z)=\oint\nolimits_C\xi^k\,\dd\nu(\xi;z),\qquad k=0,1,2,3,\dots,
\end{gather*}
where $C$ may be taken to be the same contour, and where
\begin{gather*}
\dd\nu(\xi;z):=\frac{\ee^{-\tfrac{1}{3}\xi^{-3}+\xi^{-1}z}}{2\pi\ii \xi}\,\dd \xi.
\end{gather*}
Thus, $\{\mu_k(z)\}_{k=0}^\infty$ are revealed as the monomial moments of a complex-valued weight pa\-ra\-met\-ri\-zed by $z\in\mathbb{C}$ and def\/ined on the unit circle. This fact immediately gives an interpretation to the ratio $\eta_m(z)$ of consecutive Hankel determinants (cf.~\eqref{eq:pm-norming-constant}); it is the \emph{norming constant} of the monic \emph{pseudo-orthogonal polynomial} $\psi_m(\xi;z)=\xi^m + c_{m,m-1}(z)\xi^{m-1}+\cdots + c_{m,1}(z)\xi + c_{m,0}(z)$ def\/ined given $z\in\mathbb{C}$ by the pseudo-orthogonality conditions
\begin{gather}
\oint\nolimits_C\psi_m(\xi;z)\xi^j\,\dd\nu(\xi;z)=0,\qquad j=0,1,2,\dots,m-1.\label{eq:orthogonality}
\end{gather}
Indeed, if $\psi_m(\xi;z)$ exists\footnote{Existence is not guaranteed for every $z\in\mathbb{C}$ because integration against $\dd\nu(\xi;z)$ does not def\/ine a def\/inite inner product, nor does \eqref{eq:orthogonality} represent Hermitian orthogonality which would require replacing $\xi^j$ with its complex conjugate. Hence the terminology of ``pseudo-orthogonality''.} for the given value of $z\in\mathbb{C}$ then it follows that
\begin{gather}
\eta_m(z)=\oint\nolimits_C\psi_m(\xi;z)\xi^m\,\dd\nu(\xi;z).\label{eq:norming-constant}
\end{gather}
The points $y\in\mathbb{C}$ where either $\psi_m\big(\xi;\big(\tfrac{2}{3}\big)^{1/3}y\big)$ fails to exist or $\eta_m\big(\big(\tfrac{2}{3}\big)^{1/3}y\big)=0$ (but possibly not both, should cancellation occur) are precisely the poles of $p_m(y)$.

Now, it is well-known that given any complex measure on a suitable contour, the corresponding pseudo-orthogonal polynomial of degree $m$ can be characterized via the solution of a~matrix Riemann--Hilbert problem of Fokas--Its--Kitaev type \cite{FokasIK91}. In the present context, that Riemann--Hilbert problem is the following.
\begin{rhp}[Bertola--Bothner representation]\label{rhp:BB}
Let $m\ge 0$ be an integer, and let $z\in\mathbb{C}$ be given. Seek a $2\times 2$ matrix-valued function $\mathbf{Y}^m(\xi,z)$ defined for $\xi\in\mathbb{C}$, $|\xi|\neq 1$, with the following properties:
\begin{itemize}
\item\textit{\textbf{Analyticity.}} $\mathbf{Y}^m(\xi,z)$ is analytic for $|\xi|\neq 1$, taking continuous boundary values \linebreak $\mathbf{Y}^m_+(\xi,z)$ and $\mathbf{Y}^m_-(\xi,z)$ for $|\xi|=1$ from the interior and exterior respectively of the unit circle.
\item\textit{\textbf{Jump condition.}} The boundary values are related by
\begin{gather}
\mathbf{Y}^m_+(\xi,z)=\mathbf{Y}^m_-(\xi,z)\begin{bmatrix}1 & \nu'(\xi;z)\\0 & 1\end{bmatrix},\qquad |\xi|=1,\nonumber\\
\nu'(\xi;z):=\frac{\dd\nu(\xi;z)}{\dd \xi}=\frac{\ee^{-\tfrac{1}{3}\xi^{-3}+z\xi^{-1}}}{2\pi\ii \xi}.\label{eq:BB-jump}
\end{gather}
\item\textit{\textbf{Normalization.}} The matrix $\mathbf{Y}^m(\xi,z)$ is normalized at $\xi=\infty$ as follows:
\begin{gather*}
\lim_{\xi\to\infty}\mathbf{Y}^m(\xi,z)\xi^{-m\sigma_3}=\mathbb{I},
\end{gather*}
where the limit may be taken in any direction.
\end{itemize}
\end{rhp}
Indeed, all of the relevant quantities associated with the pseudo-orthogonal polynomials for the weight $\dd\nu(\xi;z)$ are encoded in the solution of this problem. In particular,
\begin{gather*}
Y^m_{11}(\xi,z)=\psi_m(\xi;z)\qquad\text{and}\qquad Y^m_{12}(\xi,z)=\frac{1}{2\pi\ii}\oint\nolimits_C\frac{\psi_m(w;z)\,\dd\nu(w;z)}{w-\xi},
\end{gather*}
from which it follows (cf.\ \eqref{eq:orthogonality}--\eqref{eq:norming-constant}) that
\begin{gather*}
\eta_m(z)=-2\pi\ii\lim_{\xi\to\infty}\xi^{m+1}Y_{12}^m(\xi,z).
\end{gather*}

Asymptotic analysis of the pseudo-orthogonal polynomials $\psi_m(\xi;z)$ in the limit of large $m$ can therefore be carried out by applying steepest descent techniques to Riemann--Hilbert Problem~\ref{rhp:BB}, as was f\/irst done in the case of true orthogonality on the real line in~\cite{DeiftKMVZ99} and in the case of true orthogonality on the unit circle in \cite{BaikDJ99}. However, noting that the expression~\eqref{eq:pm-norming-constant} involves dif\/ferentiation with respect to the parameter $z$, a limit process that cannot be assumed to commute with the limit $m\to\infty$, Bertola and Bothner show how to obtain the relevant derivatives directly from the solution $\mathbf{Y}^m(\xi,z)$ of Riemann--Hilbert Problem~\ref{rhp:BB}. The essence of the argument is as follows. The related matrix $\mathbf{N}^m(\xi,z):= \mathbf{Y}^m(\xi,z)\ee^{z\xi^{-1}\sigma_3/2}$ must be analytic for $\xi\in\mathbb{C}\setminus\{0\}$ and satisf\/ies jump condition across the unit circle of exactly the form \eqref{eq:BB-jump} in which $z$ has been replaced by $z=0$. As the parameter $z$ no longer appears in the jump mat\-rix for $\mathbf{N}^m(\xi,z)$, it follows that the partial derivative $\mathbf{N}_z^m(\xi,z)$ also satisf\/ies exactly the same jump condition, and therefore the matrix ratio $\mathbf{N}_z^m(\xi,z)\mathbf{N}^m(\xi,z)^{-1}$ has no jump and so extends to an analytic function on the punctured complex plane $\mathbb{C}\setminus\{0\}$. The asymptotic behavior of $\mathbf{N}_z^m(\xi,z)\mathbf{N}^m(\xi,z)^{-1}$ for large and small $\xi$ is easily expressed in terms of $\mathbf{Y}^m(\xi,z)$:
\begin{gather*}
\mathbf{N}_z^m(\xi,z)\mathbf{N}^m(\xi,z)^{-1} = \begin{cases}
\big(\mathbf{Y}^{m\prime}_1(z)+\frac{1}{2}\sigma_3\big)\xi^{-1}+\mathcal{O}\big(\xi^{-2}\big),& \xi\to\infty,\\
\frac{1}{2}\mathbf{Y}^m(0,z)\sigma_3\mathbf{Y}^m(0,z)^{-1}\xi^{-1}+\mathcal{O}(1),& \xi\to 0,
\end{cases}
\end{gather*}
where $\mathbf{Y}^m(\xi,z)\xi^{-m\sigma_3}=\mathbb{I}+\mathbf{Y}_1^m(z)\xi^{-1}+\mathcal{O}(\xi^{-2})$ as $\xi\to\infty$. Therefore $\mathbf{N}_z^m(\xi,z)\mathbf{N}^m(\xi,z)^{-1}$ is a~$z$-dependent multiple of $\xi^{-1}$ given by two equivalent formulae:
\begin{gather*}
\mathbf{N}_z^m(\xi,z)\mathbf{N}^m(\xi,z)^{-1}=\big(\mathbf{Y}^{m\prime}_1(z)+\tfrac{1}{2}\sigma_3\big)\xi^{-1}=
\tfrac{1}{2}\mathbf{Y}^m(0,z)\sigma_3\mathbf{Y}^m(0,z)^{-1}\xi^{-1}.
\end{gather*}
From the $(1,2)$-entry in this matrix identity one obtains
\begin{gather*}
\eta_m'(z)=-2\pi\ii Y^{m\prime}_{1,12}(z)=2\pi\ii Y_{11}^m(0,z)Y_{21}^m(0,z),\qquad m=0,1,2,\dots,
\end{gather*}
where we have used the fact that the necessarily unique solution of Riemann--Hilbert Problem~\ref{rhp:BB} has unit determinant. Therefore, from the solution of Riemann--Hilbert Problem~\ref{rhp:BB} the rational Painlev\'e-II function $p_m(y)$ can be expressed without dif\/ferentiation with respect to~$z$ as
\begin{gather}
p_m(y)=-\frac{Y_{11}^m(0,z)Y_{12}^m(0,z)}{12^{1/3}Y_{1,12}^m(z)},\qquad z=\left(\frac{2}{3}\right)^{1/3}y,\nonumber\\ \mathbf{Y}^m_1(z):=\lim_{\xi\to\infty}\xi\big(\mathbf{Y}^m(\xi,z)\xi^{-m\sigma_3}-\mathbb{I}\big),\label{eq:pm-BB}
\end{gather}
for $m=0,1,2,\dots$.

\subsection[Explicit relation between the Flaschka--Newell and Bertola--Bothner representations]{Explicit relation between the Flaschka--Newell\\ and Bertola--Bothner representations}\label{sec:relation}
The Riemann--Hilbert representations of the rational Painlev\'e-II functions appearing in the isomonodromy approaches of Flaschka--Newell (cf.\ Section~\ref{sec:FN}) and Jimbo--Miwa (cf.\ Section~\ref{sec:JM}) are known to be related. Indeed, Joshi, Kitaev, and Treharne found an explicit integral transform relating simultaneous solutions of the corresponding Lax pairs \cite[Corollary~3.2]{JoshiKT09}. This integral transform provides another explanation for the fact that the solution of Riemann--Hilbert Problem~\ref{rhp:FN} is rational in $\lambda$ while that of Riemann--Hilbert Problem~\ref{rhp:JM} is transcendental in $\zeta$, being built from Airy functions \cite{BuckinghamM12}. The approach of Bertola--Bothner also leads to a Riemann--Hilbert representation of the rational Painlev\'e-II functions, but the approach is not motivated by isomonodromy theory for any Lax pair, so it seems more mysterious from this point of view. In this section we show that the Riemann--Hilbert problem appearing in the Bertola--Bothner approach is in fact explicitly connected to that arising in the Flaschka--Newell isomonodromy theory:
\begin{Theorem}\label{thm:connection}Let $m\ge 0$ be an integer, suppose that $y\in\mathbb{C}$ is not a pole of the rational \mbox{Painlev\'e-II} function $p_m(y)$, and let $z=\big(\tfrac{2}{3}\big)^{1/3}y$. Then the unique solution $\mathbf{M}^m(\lambda,y)$ of Riemann--Hilbert Problem~{\rm \ref{rhp:FN}} arising from Flaschka--Newell theory is related to the unique solution $\mathbf{Y}^m(\xi,z)$ of Riemann--Hilbert Problem~{\rm \ref{rhp:BB}} arising from the Bertola--Bothner approach by an explicit elementary transformation with an explicit elementary inverse $($cf.\ equations~\eqref{eq:A-def}--\eqref{eq:B-def}, \eqref{eq:YB}, \eqref{eq:C-expansion}, \eqref{eq:beta-down}, and~\eqref{eq:beta-up} in the proof below$)$.
\end{Theorem}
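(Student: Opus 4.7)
My strategy is to construct an explicit, elementary rational-plus-gauge transformation taking $\mathbf{M}^m(\lambda, y)$ to $\mathbf{Y}^m(\xi, z)$, and then to invoke uniqueness of Riemann--Hilbert Problem~\ref{rhp:BB}. First I fix the substitution $\xi = c/\lambda$ by requiring that $\ee^{-\xi^{-3}/3 + z\xi^{-1}}$, the exponential in the BB weight $\nu'(\xi; z)$, becomes $\ee^{-2\ii\theta(\lambda, y)}$, the exponential appearing in the conjugated FN jump matrix, under the relation $z = (2/3)^{1/3}y$. Matching the coefficients of $\lambda^3$ and $\lambda$ yields $c^3 = -\ii/12$, with the compatible branch $c = \ii(2/3)^{1/3}/2$. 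Using the explicit form of $\mathbf{G}_m$ from~\eqref{eq:G-form} with $\alpha = 1$, I would next verify the LU factorization
\begin{equation*}
\lambda^{m\sigma_3}\ee^{-\ii\theta\sigma_3}\mathbf{G}_m^{-1}\ee^{\ii\theta\sigma_3}\lambda^{-m\sigma_3}
=\begin{pmatrix}1 & (-1)^{m+1}\lambda^{2m}\ee^{-2\ii\theta}\\ 0 & 1\end{pmatrix}\begin{pmatrix}1 & 0\\ \tfrac{1}{2}(-1)^m\lambda^{-2m}\ee^{2\ii\theta} & 1\end{pmatrix}
\end{equation*}
of the FN jump matrix. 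Under the substitution $\xi = c/\lambda$, the upper-triangular factor agrees, up to an elementary monomial in $\xi$, with the BB jump matrix, while the lower-triangular factor is slated for absorption into a gauge transformation.

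The second step is to assemble $\mathbf{Y}^m(\xi, z)$ piecewise in the $\xi$-plane. The substitution sends $|\lambda| = 1$ to $|\xi| = |c| < 1$, strictly inside the BB unit circle. Since $\nu'(\xi; z)$ is analytic on $\mathbb{C}\setminus\{0\}$, Riemann--Hilbert Problem~\ref{rhp:BB} may equivalently be posed on $|\xi| = |c|$ by contour deformation. I would define matrices $\mathbf{A}^m(\xi, y)$ for $|\xi| < |c|$ and $\mathbf{B}^m(\xi, y)$ for $|\xi| > |c|$ as explicit elementary transformations of $\mathbf{M}^m(c/\xi, y)$, each of the form (constant in $\xi$) $\cdot\,\mathbf{M}^m(c/\xi, y)\,\cdot$ (elementary matrix involving $\xi^{\pm m\sigma_3}$). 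The gauges are to be chosen so that on $|\xi| = |c|$ the boundary values of $\mathbf{A}^m$ and $\mathbf{B}^m$ differ precisely by the upper-triangular factor above (the lower-triangular factor being absorbed into the gauges), and so that $\mathbf{B}^m(\xi, y)\xi^{-m\sigma_3} \to \mathbb{I}$ as $\xi \to \infty$. Setting $\mathbf{Y}^m(\xi, z) := \mathbf{B}^m(\xi, y)$ outside $|\xi| = |c|$ and $\mathbf{Y}^m(\xi, z) := \mathbf{A}^m(\xi, y)$ inside would then produce a solution of Riemann--Hilbert Problem~\ref{rhp:BB} on the deformed contour, and uniqueness gives the identification. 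The inverse transformation follows by reversing each of the elementary operations.

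The main obstacle is ensuring that the constructed $\mathbf{Y}^m(\xi, z)$ is analytic at $\xi = 0$. On the inner region $|\xi| < |c|$ the matrix $\mathbf{M}^m(c/\xi, y)$ represents the large-$\lambda$ FN asymptotics, and by~\eqref{eq:Vinfty} it has the form $[\mathbb{I} + \mathbf{K}^1(y)\xi/c + \cdots]\,c^{-m\sigma_3}\xi^{m\sigma_3}$ near $\xi = 0$, carrying the singular diagonal factor $\xi^{m\sigma_3}$. This factor must be cancelled exactly by the $\xi^{-m\sigma_3}$ appearing in the gauge defining $\mathbf{A}^m$; simultaneously, the resulting limit matrix must match the value of $\widetilde{\mathbf{V}}_0(\lambda):=\mathbf{V}_0(\lambda)\lambda^{-m\sigma_3}$ at $\lambda = 0$, as computed from~\eqref{eq:V0-again}, in order to preserve the BB normalization at $\xi = \infty$. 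The two scalar normalizing factors $\beta_\uparrow$ and $\beta_\downarrow$ entering the proof precisely encode these leading-order matching data, and the convergent series expansion of an intermediate matrix $\mathbf{C}^m$ provides the structural identity that forces the cancellation to be consistent with the compatibility of $\mathbf{V}_\infty$ and $\mathbf{V}_0$ through the connection matrix $\mathbf{G}_m$. Once this matching is established, the remaining verification of analyticity across $|\xi| = |c|$, of the upper-triangular jump, and of the normalization at $\xi = \infty$ is routine.
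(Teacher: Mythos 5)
Your overall strategy---a triangular factorization of the Flaschka--Newell jump, absorption of one factor, the inversion $\xi=c/\lambda$, a diagonal gauge $\xi^{-m\sigma_3}$, and an appeal to uniqueness---is the same as the paper's, but two of your steps fail as written. First, you factor the conjugated jump as $(\text{upper})\times(\text{lower})$ and plan to keep the upper factor (carrying $\lambda^{2m}\ee^{-2\ii\theta}$) as the surviving jump while absorbing the lower factor $L=\left[\begin{smallmatrix}1&0\\ \tfrac12(-1)^m\lambda^{-2m}\ee^{2\ii\theta}&1\end{smallmatrix}\right]$. Neither absorption is possible when $m\ge 1$: since $\ee^{2\ii\theta}\to 1$ as $\lambda\to 0$, the entry $\lambda^{-2m}\ee^{2\ii\theta}$ has a pole of order exactly $2m$ there, and because $\mathbf{M}^m(0,y)$ is invertible the product $\mathbf{M}^m L^{-1}$ inherits that pole, so $L$ cannot be pushed into $|\lambda|<1$ (equivalently, it wrecks the normalization at $\xi=\infty$); meanwhile $|\ee^{-2\ii\theta}|=\ee^{2\operatorname{Im}(2\lambda^3+y\lambda)}$ is unbounded in sectors at $\lambda=\infty$, so the upper factor cannot be pushed into $|\lambda|>1$ either. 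The factorization that works is the opposite one, $\mathbf{G}_m^{-1}=LU$ with the \emph{right-hand} factor upper-triangular and carrying $\lambda^{2m}\ee^{-2\ii\theta}$, which is entire in $\lambda$ and hence absorbable into $|\lambda|<1$; the surviving jump is then lower-triangular with $\ee^{+2\ii\theta}$, turned upper-triangular by a $\sigma_1$ conjugation. Consequently it is $\ee^{+2\ii\theta}$, not $\ee^{-2\ii\theta}$, that must match the Bertola--Bothner exponential, forcing $c^3=+\ii/12$, i.e.\ $c=-\ii\cdot 12^{-1/3}$---the opposite sign to your $c=\ii\cdot 12^{-1/3}$.

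Second, even after the correct reduction the jump one reaches is $\left[\begin{smallmatrix}1&\xi\,\nu'(\xi;z)\\0&1\end{smallmatrix}\right]$, i.e.\ Riemann--Hilbert Problem~\ref{rhp:intermediate-1} with $j=1$, which differs from the Bertola--Bothner jump by a residual factor of $\xi$ in the $(1,2)$ entry. Your gauges cannot remove it: applying a common monomial $d(\xi)^{\sigma_3}$ on the right in both regions replaces the $(1,2)$ entry $u$ by $d(\xi)^{-2}u$, so it shifts the power of $\xi$ by an \emph{even} integer only, while unequal inner/outer monomials or a non-unimodular diagonal destroy either the unipotent form of the jump or the normalization $\mathbf{Y}^m\xi^{-m\sigma_3}\to\mathbb{I}$. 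Eliminating the odd power requires the non-diagonal, $\xi$-dependent left multiplier together with the right factor $\operatorname{diag}(1,\xi^{-1})$ of the discrete Schlesinger/Darboux step \eqref{eq:beta-down}--\eqref{eq:beta-up}, whose coefficients are pinned down by Liouville's theorem in terms of the residue matrix in \eqref{eq:C-expansion} and the value of the solution at $\xi=0$. That step constitutes the entire second half of the paper's argument and is absent from your plan; the vague ``structural identity'' you invoke via $\beta_\uparrow$, $\beta_\downarrow$ and an unspecified series for $\mathbf{C}^m$ does not substitute for it.
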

\begin{proof}We start with the Flaschka--Newell approach and Riemann--Hilbert Problem~\ref{rhp:FN}. Suppose without loss of generality that $m=1,2,3,\dots$. We begin by noting that the matrix $\mathbf{G}_m^{-1}$ def\/ined by~\eqref{eq:G-form} has the lower-upper factorization
\begin{gather*}
\mathbf{G}_m^{-1}=\begin{bmatrix}(2\alpha)^{-1}& (-1)^{m+1}\alpha\\ (-1)^m(2\alpha)^{-1} & \alpha\end{bmatrix}=
\begin{bmatrix} 1 & 0\\(-1)^{m} & 1\end{bmatrix}\begin{bmatrix}(2\alpha)^{-1} & (-1)^{m+1}\alpha \\ 0 & 2\alpha\end{bmatrix},
\end{gather*}
and therefore the jump matrix in Riemann--Hilbert Problem~\ref{rhp:FN} is
\begin{gather*}
\lambda^{m\sigma_3}\ee^{-\ii\theta(\lambda,y)\sigma_3}\mathbf{G}_m^{-1}\ee^{\ii\theta(\lambda,y)\sigma_3}\lambda^{-m\sigma_3}\\
\qquad {}= \begin{bmatrix}1 & 0\\(-1)^m\lambda^{-2m}\ee^{2\ii\theta(\lambda,y)} & 1\end{bmatrix}
\begin{bmatrix}(2\alpha)^{-1} & (-1)^{m+1}\alpha\lambda^{2m}\ee^{-2\ii\theta(\lambda,y)}\\
0 & 2\alpha\end{bmatrix},
\end{gather*}
and the right-hand factor is obviously analytic within the unit disk and has unit determinant. Therefore, def\/ining a new matrix $\mathbf{P}^m(\lambda,y)$ in terms of the unknown $\mathbf{M}^m(\lambda,y)$ by
\begin{gather}
\mathbf{P}^m(\lambda,y):=\begin{cases}\mathbf{M}^m(\lambda,y),& |\lambda|>1,\\
\mathbf{M}^m(\lambda,y)\begin{bmatrix}(2\alpha)^{-1} & (-1)^{m+1}\alpha\lambda^{2m}\ee^{-2\ii\theta(\lambda,y)}\\ 0 & 2\alpha\end{bmatrix}^{-1},& |\lambda|<1,
\end{cases}\label{eq:A-def}
\end{gather}
we see that $\mathbf{P}^m(\lambda,y)$ satisf\/ies exactly the same conditions as specif\/ied in Riemann--Hilbert Problem~\ref{rhp:FN} except that the jump condition across the unit circle becomes instead
\begin{gather}
\mathbf{P}^m_+(\lambda,y)=\mathbf{P}^m_-(\lambda,y)\begin{bmatrix}1 & 0\\(-1)^m\lambda^{-2m}\ee^{2\ii\theta(\lambda,y)} & 1\end{bmatrix},\qquad |\lambda|=1.
\end{gather}
This triangular jump matrix already suggests the Fokas--Its--Kitaev form that appears in the approach of Bertola and Bothner, but we require two more steps to complete the identif\/ication. Firstly, we make the simple substitution
\begin{gather}
\mathbf{Q}^m(\xi,z):=k^{\sigma_3}\sigma_1\mathbf{P}^m\big(0,\big(\tfrac{3}{2}\big)^{1/3}z\big)^{-1}\mathbf{P}^m\big(c\xi^{-1},
\big(\tfrac{3}{2}\big)^{1/3}z\big)\xi^{-m\sigma_3}\sigma_1k^{-\sigma_3},\label{eq:B-def}
\end{gather}
where
\begin{gather*}
c:=-\ii \cdot 12^{-1/3}\qquad \text{and}\qquad k:=\frac{\ii^{m+1}c^m}{\ee^{\ii\pi/4}\sqrt{2\pi}}.
\end{gather*}
Now observe that the following Riemann--Hilbert problem captures at the same time the matrix $\mathbf{Q}^m(\xi,z)$ and the matrix $\mathbf{Y}^m(\xi,z)$ appearing in the Bertola--Bothner approach, for dif\/ferent values of the auxiliary parameter $j\in\mathbb{Z}$.
\begin{rhp}\label{rhp:intermediate-1}
Let $m\in\mathbb{Z}$, $j\in\mathbb{Z}$, and $z\in\mathbb{C}$ be given. Seek a $2\times 2$ matrix-valued function $\mathbf{C}^{m,j}(\xi,z)$ defined for $\xi\in\mathbb{C}$, $|\xi|\neq 1$, with the following properties:
\begin{itemize}
\item\textit{\textbf{Analyticity.}} $\mathbf{C}^{m,j}(\xi,z)$ is analytic for $|\xi|\neq 1$, taking continuous boundary values\linebreak $\mathbf{C}^{m,j}_+(\xi,z)$ and $\mathbf{C}^{m,j}_-(\xi,z)$ for $|\xi|=1$ from the interior and exterior respectively of the unit circle.
\item\textit{\textbf{Jump condition.}} The boundary values are related by
\begin{gather}
\mathbf{C}^{m,j}_+(\xi,z)=\mathbf{C}^{m,j}_-(\xi,z)\begin{bmatrix}1 & \xi^j\nu'(\xi;z)\\0 & 1\end{bmatrix},\qquad |\xi|=1,\nonumber\\ \nu'(\xi;z)=\frac{\ee^{-\tfrac{1}{3}\xi^{-3}+z\xi^{-1}}}{2\pi\ii \xi}.\label{eq:intermediate-1-jump}
\end{gather}
\item\textit{\textbf{Normalization.}} The matrix $\mathbf{C}^{m,j}(\xi,z)$ is normalized at $\xi=\infty$ as follows:
\begin{gather*}
\lim_{\xi\to\infty}\mathbf{C}^{m,j}(\xi,z)\xi^{-m\sigma_3}=\mathbb{I},
\end{gather*}
where the limit may be taken in any direction.
\end{itemize}
\end{rhp}
Indeed, it is easy to check that
\begin{gather}
\mathbf{Q}^m(\xi,z)=\mathbf{C}^{m,1}(\xi,z)\qquad\text{and, for $m\ge 0$,}\qquad \mathbf{Y}^m(\xi,z)=\mathbf{C}^{m,0}(\xi,z) \label{eq:YB}
\end{gather}
by comparison with the conditions of Riemann--Hilbert Problems~\ref{rhp:FN} and~\ref{rhp:BB}. We complete the connection between the Flaschka--Newell and Bertola--Bothner approaches by next establishing the relation between solutions~$\mathbf{C}^{m,j}(\xi,z)$ for consecutive values of $j\in\mathbb{Z}$.

The solution $\mathbf{C}^{m,j}(\xi,z)$ of Riemann--Hilbert Problem~\ref{rhp:intermediate-1} has a convergent Laurent expansion for large $|\xi|$ of the form
\begin{gather}
\mathbf{C}^{m,j}(\xi,z)=\big(\mathbb{I} + \mathbf{R}^{m,j}(z)\xi^{-1}+\mathcal{O}\big(\xi^{-2}\big)\big)\xi^{m\sigma_3},\qquad \xi\to\infty \label{eq:C-expansion}
\end{gather}
for some residue matrix $\mathbf{R}^{m,j}(z)$. Noting that if it exists for a given $z\in\mathbb{C}$, the unique solution of Riemann--Hilbert Problem~\ref{rhp:intermediate-1} has unit determinant, consider the matrix $\widecheck{\mathbf{E}}(\xi,z)$ def\/ined by
\begin{gather}
\widecheck{\mathbf{E}}(\xi,z):=\mathbf{C}^{m,j}(\xi,z)\begin{bmatrix}1 & 0\\0 & \xi\end{bmatrix}
\mathbf{C}^{m,j+1}(\xi,z)^{-1},\qquad |\xi|\neq 1. \label{eq:E-define}
\end{gather}
It is straightforward to check from \eqref{eq:intermediate-1-jump} that the boundary values taken by $\widecheck{\mathbf{E}}(\xi,z)$ on the unit circle satisfy the trivial jump condition $\widecheck{\mathbf{E}}_+(\xi,z)=\widecheck{\mathbf{E}}_-(\xi,z)$ for $|\xi|=1$; hence $\widecheck{\mathbf{E}}(\xi,z)$ extends to the whole complex plane as an entire function of $\xi$. Moreover, using~\eqref{eq:C-expansion} it follows that~$\widecheck{\mathbf{E}}(\xi,z)$ has the following asymptotic expansion for large~$\xi$:
\begin{gather}
\widecheck{\mathbf{E}}(\xi,z)=\big(\mathbb{I}+\mathbf{R}^{m,j}(z)\xi^{-1}+\mathcal{O}\big(\xi^{-2}\big)\big)
\begin{bmatrix}1&0\\0 & \xi\end{bmatrix}\big(\mathbb{I}-\mathbf{R}^{m,j+1}(z)\xi^{-1}+\mathcal{O}\big(\xi^{-2}\big)\big)\nonumber\\
\hphantom{\widecheck{\mathbf{E}}(\xi,z)}{} =\begin{bmatrix}
1 & R_{12}^{m,j}(z)\\
-R_{21}^{m,j+1}(z) & \xi +R_{22}^{m,j}(z)-R_{22}^{m,j+1}(z)
\end{bmatrix} + \mathcal{O}\big(\xi^{-1}\big),\qquad \xi\to\infty. \label{eq:Laurent-E}
\end{gather}
It then follows by Liouville's theorem that all negative power terms in the Laurent expansion of~$\widecheck{\mathbf{E}}(\xi,z)$ vanish, i.e., $\widecheck{\mathbf{E}}(\xi,z)$ is the linear function of $\xi$ given by the explicit matrix on the second line of \eqref{eq:Laurent-E}. Returning to \eqref{eq:E-define}, we have established the identity
\begin{gather}
\mathbf{C}^{m,j}(\xi,z)\begin{bmatrix}1 & 0\\0 & \xi\end{bmatrix}=\begin{bmatrix}
1 & R_{12}^{m,j}(z)\\
-R_{21}^{m,j+1}(z) & \xi +R_{22}^{m,j}(z)-R_{22}^{m,j+1}(z)
\end{bmatrix}\mathbf{C}^{m,j+1}(\xi,z),\nonumber\\
|\xi|\neq 1.\label{eq:C-identity}
\end{gather}
If we can express the second column of $\mathbf{R}^{m,j}(z)$ in terms of $\mathbf{C}^{m,j+1}(\xi,z)$, then this becomes an explicit formula for $\mathbf{C}^{m,j}(\xi,z)$ in terms of the latter.

To this end, consider the second column of \eqref{eq:C-identity} evaluated at $\xi=0$, which reads
\begin{gather*}
\begin{bmatrix}0\\0\end{bmatrix}=\begin{bmatrix}C^{m,j+1}_{12}(0,z)+R_{12}^{m,j}(z)C^{m,j+1}_{22}(0,z)\\
-R_{21}^{m,j+1}(z)C^{m,j+1}_{12}(0,z)+(R_{22}^{m,j}(z)-R_{22}^{m,j+1}(z))C^{m,j+1}_{22}(0,z)\end{bmatrix}
\end{gather*}
because $\mathbf{C}^{m,j}(\xi,z)$ and $\mathbf{C}^{m,j+1}(\xi,z)$ are analytic at $z=0$. Therefore,
\begin{gather*}
R_{12}^{m,j}(z)=-\frac{C_{12}^{m,j+1}(0,z)}{C_{22}^{m,j+1}(0,z)}\qquad\text{and}\qquad
R_{22}^{m,j}(z)=R_{22}^{m,j+1}(z)+\frac{C_{12}^{m,j+1}(0,z)}{C_{22}^{m,j+1}(0,z)}R_{21}^{m,j+1}(z),
\end{gather*}
so substituting into \eqref{eq:C-identity} we recover the explicit formula for decrementing the value of $j$:
\begin{gather}
\mathbf{C}^{m,j}(\xi,z)=\widecheck{\mathbf{E}}(\xi,z)\mathbf{C}^{m,j+1}(\xi,z)\begin{bmatrix}1 & 0\\ 0 & \xi^{-1}\end{bmatrix},
\qquad\text{where}\nonumber\\
\widecheck{\mathbf{E}}(\xi,z)=\begin{bmatrix}1 & -C_{12}^{m,j+1}(0,z)C_{22}^{m,j+1}(0,z)^{-1}\\
-R_{21}^{m,j+1}(z) & \xi +R_{21}^{m,j+1}(z)C_{12}^{m,j+1}(0,z)C_{22}^{m,j+1}(0,z)^{-1}\end{bmatrix}.\label{eq:beta-down}
\end{gather}

In a similar way, the matrix
\begin{gather*}
\widehat{\mathbf{E}}(\xi,z):=\mathbf{C}^{m,j+1}(\xi,z)\begin{bmatrix}\xi & 0\\0 & 1\end{bmatrix}\mathbf{C}^{m,j}(\xi,z)^{-1}
\end{gather*}
is an entire function that equals the polynomial part of its Laurent expansion for large $\xi$, and hence
\begin{gather*}
\widehat{\mathbf{E}}(\xi,z)=\begin{bmatrix}
\xi+R_{11}^{m,j+1}(z)-R_{11}^{m,j}(z) & -R_{12}^{m,j}(z)\\
R_{12}^{m,j+1}(z) & 1
\end{bmatrix},
\end{gather*}
leading to the following analogue of \eqref{eq:C-identity}:
\begin{gather}
\mathbf{C}^{m,j+1}(\xi,z)\begin{bmatrix}\xi & 0\\0 & 1\end{bmatrix}=\begin{bmatrix}
\xi+R_{11}^{m,j+1}(z)-R_{11}^{m,j}(z) & -R_{12}^{m,j}(z)\\
R_{12}^{m,j+1}(z) & 1
\end{bmatrix}\mathbf{C}^{m,j}(\xi,z).\label{eq:C-identity-2}
\end{gather}
From the f\/irst column of \eqref{eq:C-identity-2} evaluated at $\xi=0$ we get
\begin{gather*}
R_{12}^{m,j+1}(z)=-\frac{C_{21}^{m,j}(0,z)}{C_{11}^{m,j}(0,z)}\qquad\text{and}\qquad
R_{11}^{m,j+1}(z)=R_{11}^{m,j}(z)+\frac{C_{21}^{m,j}(0,z)}{C_{11}^{m,j}(0,z)}R_{12}^{m,j}(z),
\end{gather*}
so substituting into \eqref{eq:C-identity-2} we recover the explicit formula for incrementing the value of $j$:
\begin{gather}
\mathbf{C}^{m,j+1}(\xi,z)=\widehat{\mathbf{E}}(\xi,z)\mathbf{C}^{m,j}(\xi,z)\begin{bmatrix}\xi^{-1}&0\\0 & 1\end{bmatrix},\qquad\text{where}\nonumber\\
\widehat{\mathbf{E}}(\xi,z)=\begin{bmatrix}
\xi+R_{12}^{m,j}(z)C_{21}^{m,j}(0,z)C_{11}^{m,j}(0,z)^{-1} & -R_{12}^{m,j}(z)\\
-C_{21}^{m,j}(0,z)C_{11}^{m,j}(0,z)^{-1} & 1
\end{bmatrix}.\label{eq:beta-up}
\end{gather}
Note that equations \eqref{eq:beta-down} and \eqref{eq:beta-up} can be interpreted as discrete Schlesinger/Darboux transformations (see \cite[Section 2]{BertolaC15} and \cite{JimboM81a}) for Riemann--Hilbert Problem~\ref{rhp:intermediate-1}.

Taking into account the explicit and obviously invertible transformations \eqref{eq:A-def}--\eqref{eq:B-def} re\-la\-ting $\mathbf{M}^{m}(\lambda,y)$ solving Riemann--Hilbert Problem~\ref{rhp:FN} to $\mathbf{Q}^{m}(\xi,z)=\mathbf{C}^{m,1}(\xi,z)$ via $\mathbf{P}^m(\lambda,y)$, the formulae \eqref{eq:beta-down} and \eqref{eq:beta-up} establish the connection with Riemann--Hilbert Problem~\ref{rhp:BB} having solution $\mathbf{Y}^m(\xi,z)=\mathbf{C}^{m,0}(\xi,z)$.
\end{proof}

We remark that although Theorem~\ref{thm:connection} provides an explicit relation between the solutions of Riemann--Hilbert Problems~\ref{rhp:FN} and \ref{rhp:BB}, it can happen that for given $z\in\mathbb{C}$ one of these problems is solvable and the other is not. This occurs precisely when one of the denominators $C_{22}^{m,1}(0,z)$ in \eqref{eq:beta-down} or $C_{11}^{m,0}(0,z)$ in~\eqref{eq:beta-up} vanishes. Indeed, we have mentioned before (and it actually follows from the formula~\eqref{eq:pm-FN}) that the points $z$ where Riemann--Hilbert Problem~\ref{rhp:FN} fails to be solvable correspond precisely to the poles of $p_m$. On the other hand, the formula \eqref{eq:pm-BB} shows that it is possible that some poles of $p_m$ can arise from the well-def\/ined function $Y^m_{1,12}$ vanishing at a point $z$ where Riemann--Hilbert Problem~\ref{rhp:BB} has a solution; hence Riemann--Hilbert Problem~\ref{rhp:BB} is solvable while Riemann--Hilbert Problem~\ref{rhp:FN} is not. It can also happen that Riemann--Hilbert Problem~\ref{rhp:BB} fails to be solvable at a point $z$ corresponding to a regular point of $p_m$ and hence a point of solvability of Riemann--Hilbert Problem~\ref{rhp:FN}, in which case the formula \eqref{eq:pm-BB} retains sense locally via a limit process (i.e., l'H\^{o}pital's rule).

\section{Asymptotic behavior of the rational Painlev\'e-II functions}\label{sec:asymptotics}
\subsection{Numerical observations and heuristic analysis}\label{sec:heuristics}
In this section, we assume without loss of generality that $m\ge 0$.
There have been several studies of the rational solutions $p_m(y)$ of the Painlev\'e-II equation from the numerical point of view, mostly concerned with looking for patterns in the distribution of poles of $p_m(y)$ in the complex $y$-plane as $m$ varies. The earliest work in this direction that we are aware of is the 1986 paper of Kametaka et al.~\cite{KametakaNFH86} in which numerical methods were brought to bear on the problem of f\/inding roots of the Yablonskii--Vorob'ev polynomials for $m$ as large as $m=37$; the f\/igures in~\cite{KametakaNFH86} for the largest values of $m$ display features suggesting the breakdown of the numerical method. A f\/igure such as those from~\cite{KametakaNFH86} also appears in the 1991 monograph~\cite{IwasakiKSY91}. These studies show the poles of $p_m(y)$ being contained for reasonably large $m$ within a roughly triangular-shaped region of size increasing with $m$ and therein organized in an apparently regular, crystalline pattern. Plots of poles of $p_m(y)$ obtained by similar methods also appear in~\cite{ClarksonM03}, a~paper that includes in addition a study of corresponding phenomena in higher-order equations in the Painlev\'e-II hierarchy. More recently, general numerical methods for the study of solutions with many poles in dif\/ferential equations have been advanced based on such techniques as Pad\'e approximation, and these methods have been shown to be capable of accurately reproducing the pole pattern of~$p_m(y)$, treating the Painlev\'e-II equation~\eqref{eq:myPII} as an initial-value problem to be solved numerically taking as initial conditions the exact values of~$p_m(0)$ and $p_m'(0)$ \cite{FornbergW14,Novokshenov14}. In Fig.~\ref{fig:poles} we give our own plots of poles of $p_m(y)$ for $m=15$, $m=30$, and $m=60$, which we made by symbolically constructing the relevant Yablonskii--Vorob'ev polynomials in \textit{Mathematica} and using \texttt{NSolve} with the option $\texttt{WorkingPrecision->50}$ to f\/ind the roots.
\begin{figure}[t]\centering
\includegraphics[width=0.3\linewidth]{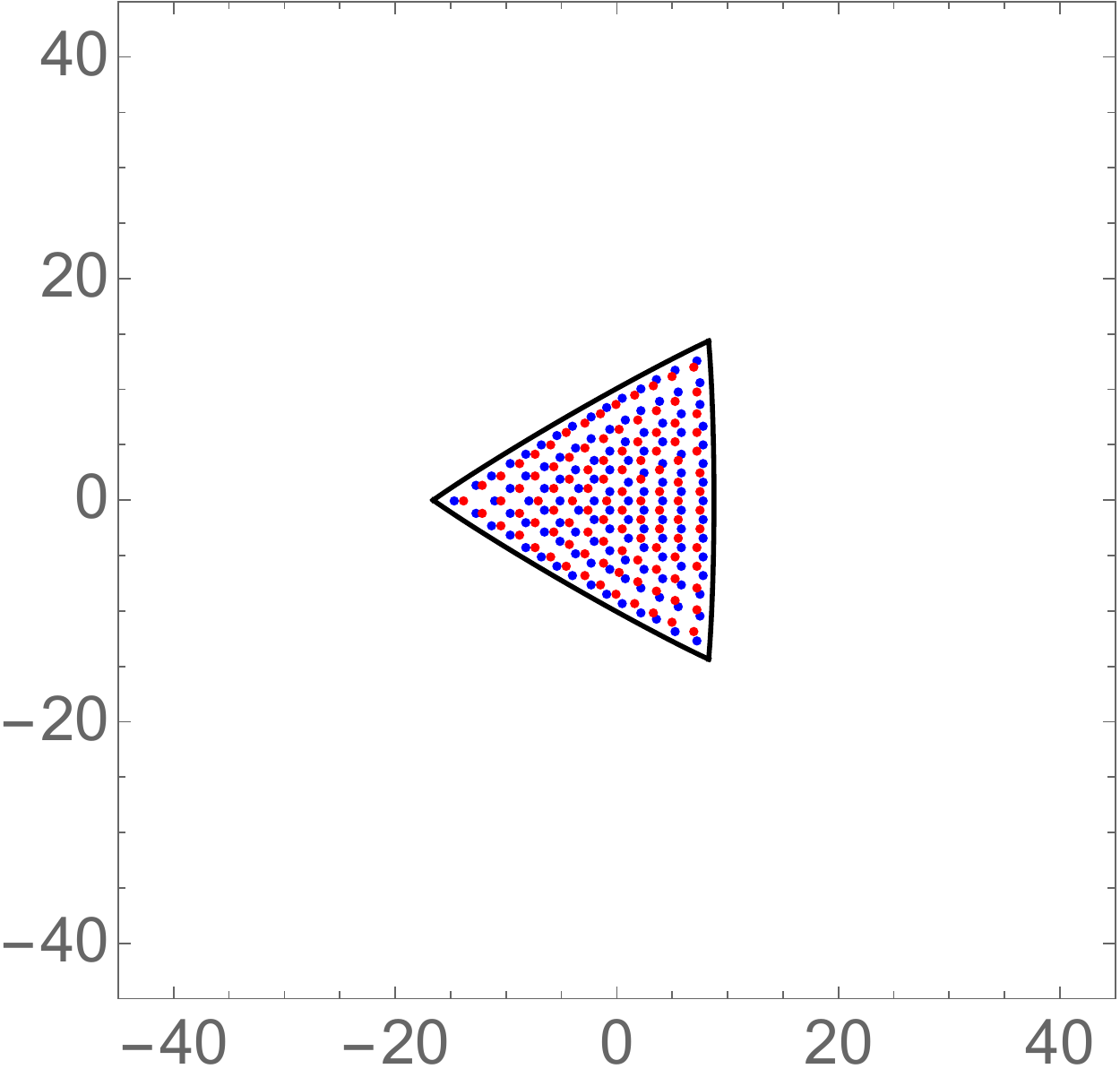}\hspace{0.03\linewidth}%
\includegraphics[width=0.3\linewidth]{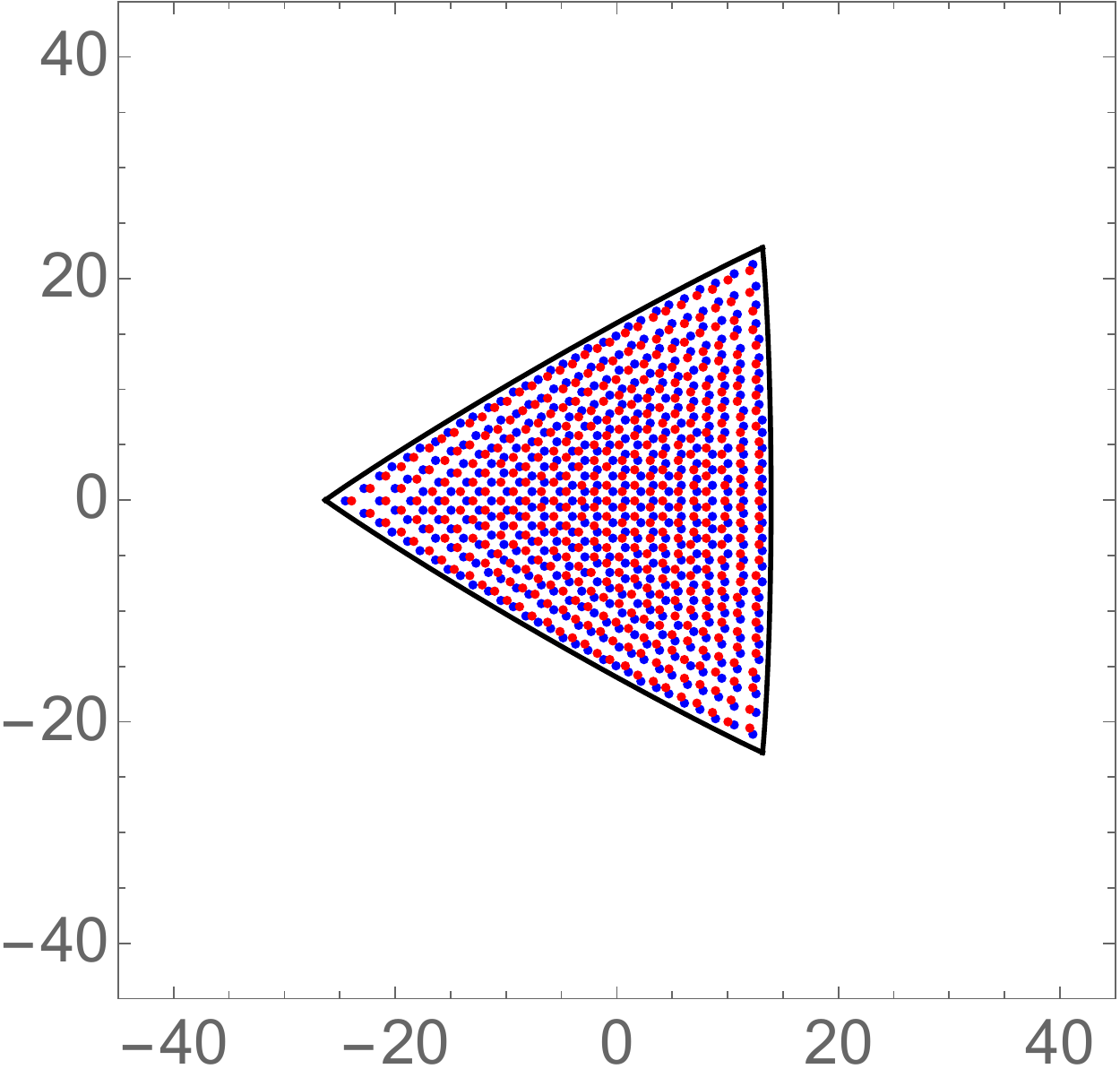}\hspace{0.03\linewidth}%
\includegraphics[width=0.3\linewidth]{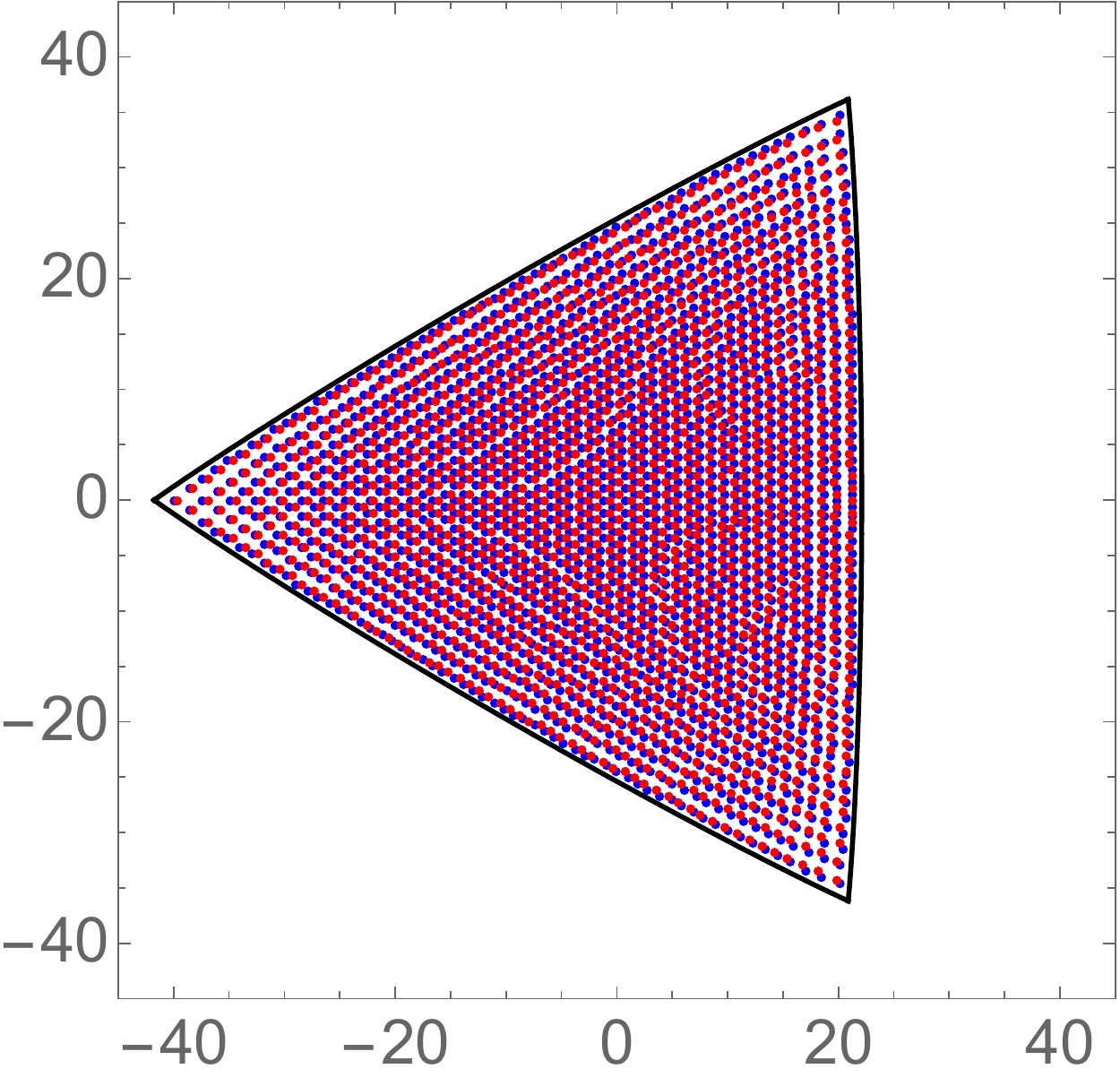}
\caption{The poles of residue $1$ (blue) and $-1$ (red) of $p_{15}(y)$ (left), $p_{30}(y)$ (center), and $p_{60}(y)$ (right). Superimposed is the theoretical boundary of the elliptic region (cf.\ Section~\ref{sec:elliptic-region}).}\label{fig:poles}
\end{figure}

These numerical observations suggest structure that should be explained, and yet the \mbox{large-$m$} limit in which the structural features of interest appear to become clear in the numerics is fundamentally out of reach of exact methods like iterated B\"acklund transformations or explicit determinantal formulae, the study of which becomes combinatorially prohibitive in this limit. Therefore one may consider instead methods of asymptotic analysis. A formal approach may be based upon the observation that the modulus of the poles or zeros of~$p_m(y)$ most distant from the origin scales roughly like $m^{2/3}$ \cite{IwasakiKSY91}, which suggests examining $p_m(y)$ in a small neighborhood of a~point $y=m^{2/3}x$; dominant balance arguments suggest that the size of the neighborhood should then be proportional to $m^{-1/3}$. So, letting $x\in\mathbb{C}$ be f\/ixed, consider the change of independent variable $y\mapsto w$ in~\eqref{eq:myPII} given by (the relatively small shifts by $1/2$ are convenient for later but at this point are inconsequential)
\begin{gather*}
y=\big(m-\tfrac{1}{2}\big)^{2/3}x+\big(m-\tfrac{1}{2}\big)^{-1/3}w.
\end{gather*}
Substituting this into \eqref{eq:myPII} along with the scaling of the independent variable by $p=\big(m-\tfrac{1}{2}\big)^{1/3}\mathcal{P}$, one arrives at the equivalent equation
\begin{gather*}
\frac{\dd^2\mathcal{P}}{\dd w^2}=2\mathcal{P}^3+\frac{2x}{3}\mathcal{P}-\frac{2}{3}+\frac{2w\mathcal{P}-1}{3\big(m+\tfrac{1}{2}\big)},
\end{gather*}
which for large $m$ appears to be a perturbation of an autonomous equation for an approximating function ${\widetilde{\mathcal{P}}}(w)$:
\begin{gather}
\frac{\dd^2{\widetilde{\mathcal{P}}}}{\dd w^2}=2{\widetilde{\mathcal{P}}}^3+\frac{2x}{3}{\widetilde{\mathcal{P}}}-\frac{2}{3}.
\label{eq:elliptic}
\end{gather}
Multiplying by $\dd{\widetilde{\mathcal{P}}}/\dd w$ and integrating gives
\begin{gather}
\left(\frac{\dd{\widetilde{\mathcal{P}}}}{\dd w}\right)^2={\widetilde{\mathcal{P}}}^4 + \frac{2x}{3}{\widetilde{\mathcal{P}}}^2-\frac{4}{3}{\widetilde{\mathcal{P}}}+\Pi,\label{eq:quartic}
\end{gather}
where $\Pi$ is an integration constant. If $\Pi$ and $x$ are related in such a way that the quartic polynomial on the right-hand side of \eqref{eq:quartic} has a double root ${\widetilde{\mathcal{P}}}_0$, then ${\widetilde{\mathcal{P}}}(w)={\widetilde{\mathcal{P}}}_0$ is an equilibrium solution of \eqref{eq:elliptic}. Double roots ${\widetilde{\mathcal{P}}}_0$ are necessarily related to $x$ via the cubic equation
\begin{gather}
3{\widetilde{\mathcal{P}}}_0^3+x{\widetilde{\mathcal{P}}}_0-1=0\label{eq:cubic}
\end{gather}
and then the relation between $\Pi$ and $x$ guaranteeing the existence of the double root can be expressed in terms of a solution ${\widetilde{\mathcal{P}}}_0={\widetilde{\mathcal{P}}}_0(x)$ of \eqref{eq:cubic} by
\begin{gather}
\Pi=\Pi_0(x):=2{\widetilde{\mathcal{P}}}_0(x)-\frac{2x}{3}{\widetilde{\mathcal{P}}}_0(x)^2.\label{eq:Pi0}
\end{gather}
It turns out (see Section~\ref{sec:outside} below) that this approximation of $\mathcal{P}(w)$ by the equilibrium solution ${\widetilde{\mathcal{P}}}_0(x)$ accurately describes the rational Painlev\'e-II function $p_m(y)$ in the pole-free region, provided that one selects the (unique) solution ${\widetilde{\mathcal{P}}}_0(x)$ of \eqref{eq:cubic} with the asymptotic behavior ${\widetilde{\mathcal{P}}}_0(x)=x^{-1}+\mathcal{O}\big(x^{-2}\big)$ as $x\to\infty$. This solution has branch points at $x=x_\mathrm{c}$ and $x=x_\mathrm{c}\ee^{\pm 2\pi\ii/3}$ for $x_\mathrm{c}:=-(9/2)^{2/3}$, which correspond to the corners of the triangular-shaped region containing the poles. More general solutions of~\eqref{eq:elliptic} can be expressed as elliptic functions of $w$ with elliptic modulus depending on the parameters $x$ and $\Pi$. These also turn out to be important in describing the rational Painlev\'e-II functions in the interior of the triangular region. Indeed, if one f\/ixes a~value of $x\in\mathbb{C}$ suf\/f\/iciently small to correspond to $y$ in the triangular region and views the rational Painlev\'e-II functions $p_m(y)$ as functions of the variable~$w$, one sees increasingly regular patterns of poles in the limit $m\to\infty$ suggestive of the period parallelogram of an elliptic function of~$w$. See Fig.~\ref{fig:zoomed-poles-and-zeros}.
\begin{figure}[t]\centering
\includegraphics[width=0.3\linewidth]{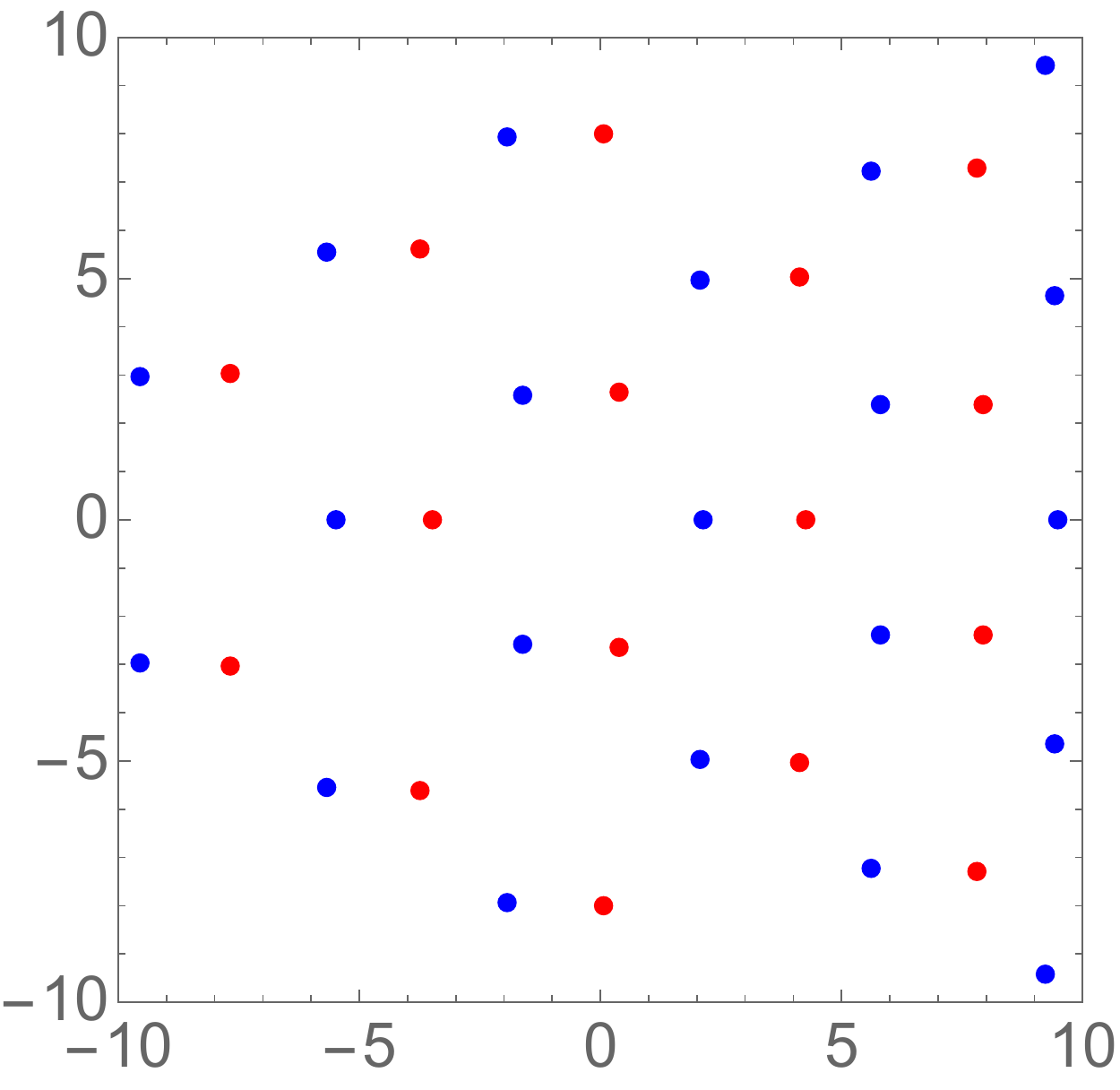}\hspace{0.03\linewidth}
\includegraphics[width=0.3\linewidth]{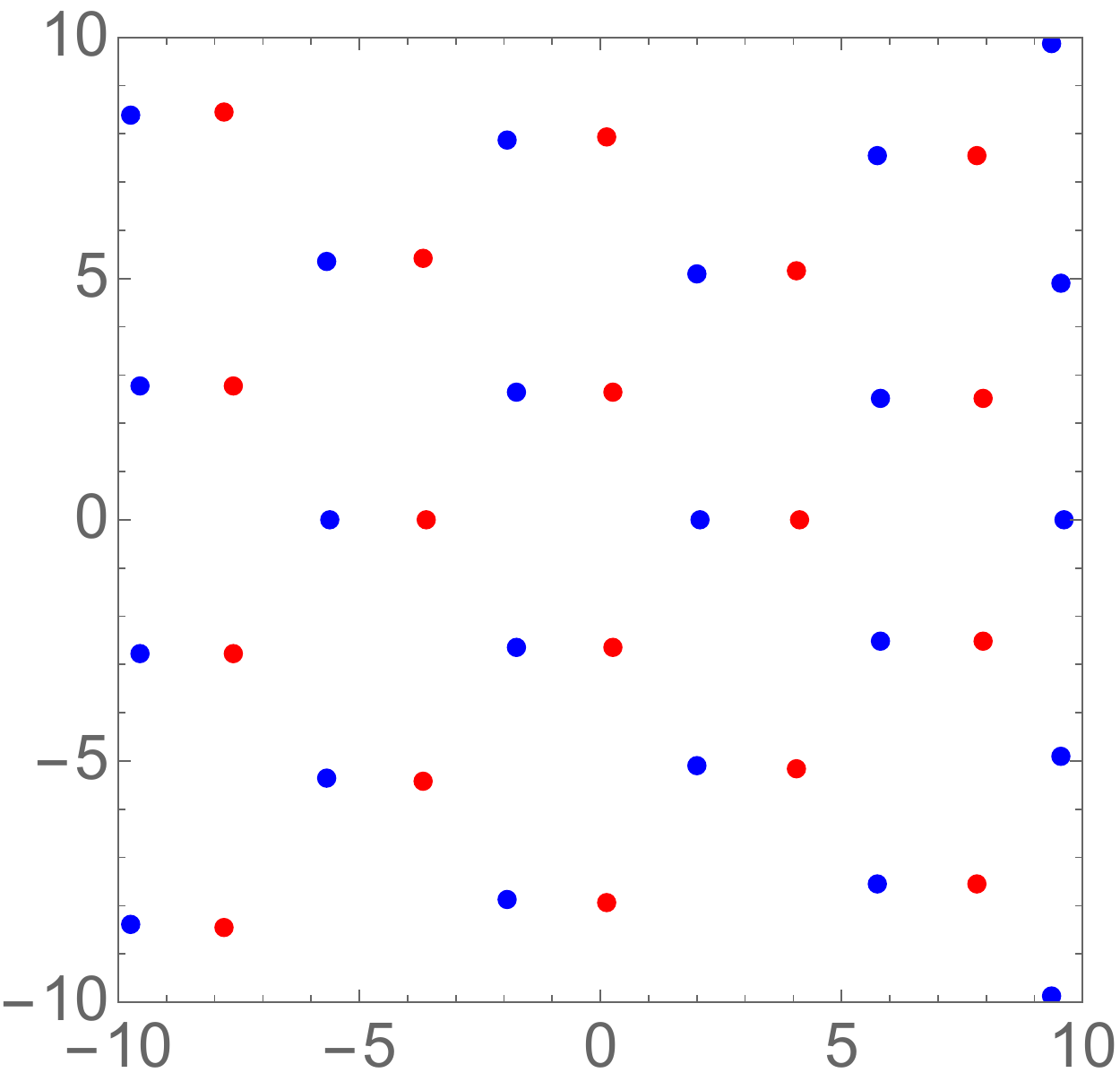}\hspace{0.03\linewidth}
\includegraphics[width=0.3\linewidth]{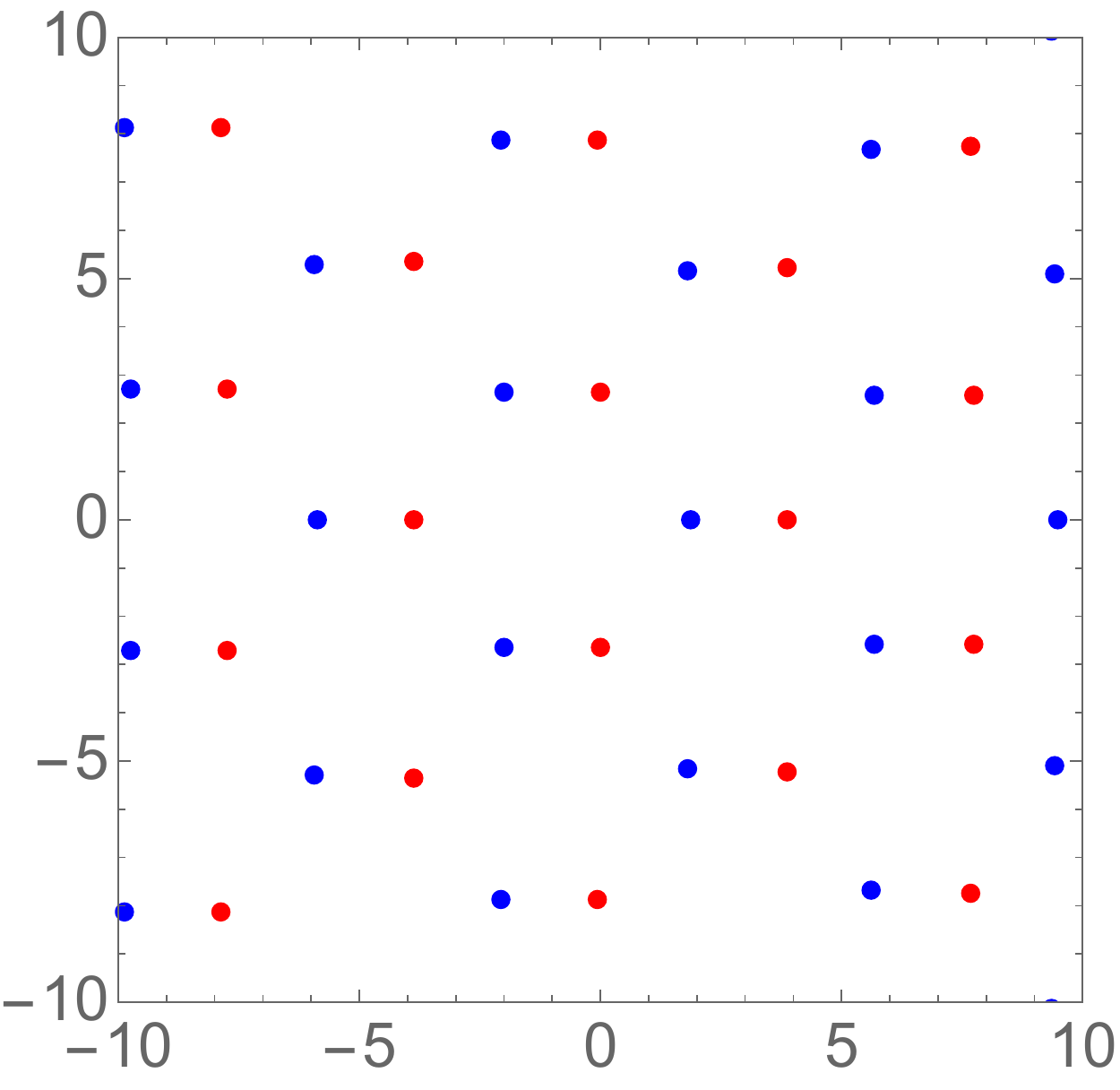}

\caption{The poles of residue $1$ (blue) and $-1$ (red) of $p_{m}(y)$ for $m=15$ (left), $m=30$ (center), and $m=60$ (right), plotted in the $w$-plane, a zoomed-in coordinate near $y=(m-\tfrac{1}{2})^{2/3}x$ for $x=-3/2$.}\label{fig:zoomed-poles-and-zeros}
\end{figure}
A similar formal scaling argument can be applied to study the asymptotic behavior of $p_m(y)$ near the corner points of the triangular region. For example, to zoom in on the corner point on the negative real axis, we may make the scalings
\begin{gather*}
p=-\left(\frac{m}{6}\right)^{1/3} - \left(\frac{128}{243 m}\right)^{1/15}Y\qquad\text{and}\qquad y=x_\mathrm{c}m^{2/3}+\left(\frac{243}{2m^2}\right)^{1/15}t,
\end{gather*}
after which one sees that the Painlev\'e-II equation \eqref{eq:myPII} takes the form
\begin{gather*}
\frac{\dd^2Y}{\dd t^2}=6Y^2+t + \mathcal{O}\big(m^{-2/5}\big)
\end{gather*}
for $t$ and $Y$ bounded, i.e., a perturbation of the Painlev\'e-I equation. This is a well-known degeneration of the Painlev\'e-II equation \cite{Kapaev97,KapaevK95}, and it suggests that particular solutions of the Painlev\'e-I equation may play a role in the asymptotic description of $p_m(y)$ near the three corner points. This also turns out to be true (see Section~\ref{sec:corner}).

\subsection{The elliptic region and its boundary}\label{sec:elliptic-region}
Let ${\widetilde{\mathcal{P}}}_0(x)$ denote the solution of the cubic equation \eqref{eq:cubic} with ${\widetilde{\mathcal{P}}}_0(x)= x^{-1}+\mathcal{O}\big(x^{-2}\big)$ as $x\to\infty$, which can be analytically continued to a maximal domain $\mathcal{D}$ consisting of the complex $x$-plane omitting three line segments connecting the three points $x_\mathrm{c}$, $\ee^{\pm 2\pi\ii/3}x_\mathrm{c}$ with the origin. For $x\in\mathcal{D}$, let $r(\kappa;x)$ denote the function def\/ined to satisfy $r(\kappa;x)^2= \kappa^2+2{\widetilde{\mathcal{P}}}_0(x) \kappa+{\widetilde{\mathcal{P}}}_0(x)^2-\tfrac{2}{3}{\widetilde{\mathcal{P}}}_0(x)^{-1}$ and $r(\kappa;x)={\kappa}+\mathcal{O}(1)$ as $\kappa\to\infty$, def\/ined on a maximal domain of analyticity in the $\kappa$-plane\footnote{The complex variable $\kappa$ (written as $z$ in \cite{BuckinghamM14,BuckinghamM15}) is a rescaling of the variable $\zeta$ from Riemann--Hilbert Problem~\ref{rhp:JM}.} omitting only the segment connecting the roots of $r(\kappa;x)^2$, one of which we denote by~$a(x)$. We def\/ine a function $\mathfrak{c}(x)$ by
\begin{gather}
\mathfrak{c}(x):=\frac{3}{2}\int_{a(x)}^{{\widetilde{\mathcal{P}}}_0(x)}(\kappa-{\widetilde{\mathcal{P}}}_0(x))r(\kappa;x)\,\dd \kappa,\qquad x\in\mathcal{D}, \label{eq:c-define}
\end{gather}
where the path of integration is arbitrary\footnote{It can be checked that the value of $\mathfrak{c}(x)$ is unchanged by adding loops around the branch cut of $r(\kappa;x)$ to the path of integration because ${\widetilde{\mathcal{P}}}_0(x)$ satisf\/ies~\eqref{eq:cubic}.} within the domain of analyticity of~$r(\kappa;x)$.

It turns out that in the limit $m\to\infty$, the region of the complex plane that contains the poles of $p_m(y)$ is $y\in m^{2/3}T$, where $T$ is the bounded component of the set of $x\in\mathbb{C}$ for which $\operatorname{Re}(\mathfrak{c}(x))\neq 0$. The boundary $\partial T$ consists of points for which $\operatorname{Re}(\mathfrak{c}(x))=0$. The integral in~\eqref{eq:c-define} can be evaluated in terms of elementary functions, taking appropriate care of branches of multivalued functions; expressions can be found in~\cite{BertolaB15,BuckinghamM14}. The exact formula is less important than the basic property that $\mathfrak{c}(x)$ is analytic for $x\in\mathcal{D}$ with algebraic branch points at the points $x=x_\mathrm{c}$ and $x=x_\mathrm{c}\ee^{\pm 2\pi\ii/3}$. This implies that $\partial T$ is a union of three analytic arcs joining the branch points pairwise, with ref\/lection symmetry in the real axis and rotation symmetry about the origin by integer multiples of $2\pi/3$. The curve $m^{2/3}\partial T$ is superimposed on each of the pole plots in Fig.~\ref{fig:poles}. We call $T$ the \emph{elliptic region}, the three branch points of ${\widetilde{\mathcal{P}}}_0(x)$ its \emph{corners}, and the three smooth arcs of $\partial T$ its \emph{edges}. Local analysis of $\mathfrak{c}(x)$ shows \cite[Section~2.3]{BuckinghamM15} that the interior angles of $\partial T$ at the three corners are all $2\pi/5$, so that $\partial T$ is a ``curvilinear triangle'' at best.

\subsection[Asymptotic description of $p_m(y)$ by steepest descent]{Asymptotic description of $\boldsymbol{p_m(y)}$ by steepest descent}
We now present several results on the asymptotic behavior of the rational Painlev\'e-II function~$p_m(y)$, all of which have been obtained by the application of variants of the Deift--Zhou steepest descent method \cite{DeiftZ93} to either Riemann--Hilbert Problem~\ref{rhp:JM} (see \cite{BuckinghamM14,BuckinghamM15}) or Riemann--Hilbert Problem~\ref{rhp:BB} (see~\cite{BertolaB15}). Regardless of which Riemann--Hilbert problem is the starting point, the basic steps of the method are the same:
\begin{enumerate}\itemsep=0pt
\item Introduce a diagonal matrix multiplier built from exponentials of a scalar function frequently called a ``$g$-function'' with the aim of simultaneously obtaining normalization to the identity matrix at inf\/inity and stabilizing the jump matrices of the problem so that they are alternately exponentially small perturbations of either constant matrices or purely oscillatory matrices along dif\/ferent contour arcs. Frequently this step also requires some deformation of the contour of the original Riemann--Hilbert problem by means of analytic continuation of the jump matrices.
\item Use explicit matrix factorizations to algebraically separate oscillatory factors in the jump matrices having phase derivatives of opposite signs. Splitting the jump contour into separate arcs for each factor, a subsequent deformation to either side of the original jump contour ensures that the oscillatory factors now become exponentially small in the limit $m\to\infty$.
\item Construct an explicit model of the solution called a ``parametrix'' by considering only those remaining jump matrices that are not exponentially small perturbations of the identity matrix.
\item By comparing the unknown matrix obtained after the second step with the parametrix, obtain an equivalent Riemann--Hilbert problem for the matrix quotient. The aim of the method is to ensure that the resulting Riemann--Hilbert problem is of ``small-norm'' type, meaning that it can be solved by a convergent iterative procedure that also allows for the rigorous estimation of the solution. This analysis proves the accuracy of approximate formulae for the unknowns of interest, such as $p_m(y)$, which are extracted from the explicit parametrix.
\end{enumerate}
The steepest descent method gets its name from the second step in the procedure, which resembles the type of contour deformations that one carries out in implementing the steepest descent method for the asymptotic expansion of exponential integrals.

The form of the parametrix that one obtains is determined in most of the complex plane by the number of contour arcs on which the $g$-function induces oscillations. This number is related to the genus of a hyperelliptic Riemann surface whose function theory is exploited to construct the parametrix. As the original Riemann--Hilbert problem depends on a complex parameter $y$, it is to be expected that the genus may be dif\/ferent for dif\/ferent values of $y\in\mathbb{C}$, leading to the phenomenon of phase transitions. Indeed, the boundary of the elliptic region turns out to be exactly such a phase transition. In particular the hyperelliptic curve that characterizes the rational Painlev\'e-II function $p_m(y)$ for large~$m$ when $y$ lies outside of the elliptic region has genus zero. An interesting dif\/ference between the application of the steepest descent method to the Jimbo--Miwa problem \cite{BuckinghamM14,BuckinghamM15} and its application to the Bertola--Bothner problem~\cite{BertolaB15} is that in the former case the curve corresponding to the elliptic region has genus $1$ (hence the terminology ``elliptic'') while in the latter case it instead has genus $2$ (with some symmetries that allow its function theory to be reducible to elliptic functions after all, see \cite[Section~4.6]{BertolaB15}).

We give no further details of the proofs of the following results, leading the reader to the original references \cite{BertolaB15,BuckinghamM14,BuckinghamM15} for complete information. We also note that some of the results below have also been captured by the isomonodromy method, a WKB-ansatz based asymptotic approach to Riemann--Hilbert problems \cite{Kapaev97}.

\subsubsection[Asymptotic description of $p_m$ in the exterior region]{Asymptotic description of $\boldsymbol{p_m}$ in the exterior region}\label{sec:outside}
The simplest result to state is the following.
\begin{Theorem}[Buckingham \& Miller \protect{\cite[Theorem~1]{BuckinghamM14}}, Bertola \& Bothner \protect{\cite[Corollary~6.1]{BertolaB15}}]\label{thm:outside} Given a sufficiently large integer $m>0$, let $K_m$ be a set of points $x$ in the exterior of $T$ uniformly bounded away from the corners but otherwise with $\operatorname{dist}(x,T)>\ln(m)/m$. Then the rational Painlev\'e-II function $p_m(y)$ satisfies
\begin{gather*}
m^{-1/3}p_m\big(m^{2/3}x\big)={\widetilde{\mathcal{P}}}_0(x) + \mathcal{O}\big(m^{-1}\big),\qquad m\to\infty
\end{gather*}
with the error term being uniform for $x\in K_m$. In particular, $p_m\big(m^{2/3}x\big)$ is pole free for $x\in K_m$ and $m$ sufficiently large.
\end{Theorem}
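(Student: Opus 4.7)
The plan is to apply the Deift--Zhou steepest descent method to one of the Riemann--Hilbert representations described above, either Riemann--Hilbert Problem~\ref{rhp:JM} (as in Buckingham--Miller) or Riemann--Hilbert Problem~\ref{rhp:BB} (as in Bertola--Bothner). I will describe the Jimbo--Miwa route, since the exterior region corresponds naturally to the genus-zero sector of that problem. First I would perform the scalings $y=m^{2/3}x$ and $\zeta = m^{1/3}\kappa$, so that the phase $\phi(\zeta,y)=\tfrac{1}{2}\zeta^3+\tfrac{1}{2}y\zeta$ acquires a large prefactor $m$ in front of a rescaled phase involving only $x$ and $\kappa$. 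This puts Riemann--Hilbert Problem~\ref{rhp:JM} in a form to which the standard nonlinear steepest descent machinery applies, with the integer $m$ playing the role of a large semiclassical parameter.

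The central analytic object is the $g$-function $g(\kappa;x):=\int_{a(x)}^{\kappa}r(s;x)\,ds$, built from the square root $r(\kappa;x)$ defined in Section~\ref{sec:elliptic-region}, together with the double root ${\widetilde{\mathcal{P}}}_0(x)$ of the quartic on the right-hand side of~\eqref{eq:quartic} when $\Pi=\Pi_0(x)$. The equilibrium interpretation of~\eqref{eq:elliptic} makes ${\widetilde{\mathcal{P}}}_0(x)$ the natural genus-zero ansatz outside $T$. Conjugating the unknown by $\exp(mg(\kappa;x)\sigma_3)$ and performing the usual lens-opening factorizations of the Stokes jump matrices in Fig.~\ref{fig:JM-RHP}, I would arrange the contours along the steepest descent directions of $\operatorname{Re}(g)$ emanating from the branch point $a(x)$. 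By construction of $g$, the jump matrices on all lens boundaries become exponentially close to the identity at rate $\exp(-c\, m\operatorname{Re}(\mathfrak{c}(x)))$ for some $c>0$, while on the single remaining arc the jump is a constant involutive matrix. Building a genus-zero parametrix via a standard Szeg\H{o}-type scalar Riemann--Hilbert problem on this arc yields an explicit model $\dot{\mathbf{Z}}^m(\kappa;x)$ whose large-$\kappa$ expansion encodes ${\widetilde{\mathcal{P}}}_0(x)$ through the extraction formula~\eqref{eq:pm-from-JM}.

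Comparison of the true unknown with the parametrix gives an error matrix $\mathbf{E}$ satisfying a small-norm Riemann--Hilbert problem whose jump deviates from the identity by $O(\exp(-c\,m\operatorname{Re}(\mathfrak{c}(x))))$ on the lens contours and by $O(m^{-1})$ from a standard local Airy parametrix built near $a(x)$ to handle the turning-point behavior of $r(\kappa;x)$. The small-norm theory then gives $\mathbf{E}=\mathbb{I}+O(m^{-1})$ in operator norm uniformly on compact sets of $x$ bounded away from the corners, and unwinding the transformations reduces $m^{-1/3}p_m(m^{2/3}x)$ to ${\widetilde{\mathcal{P}}}_0(x)+O(m^{-1})$ via~\eqref{eq:pm-from-JM}. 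The main obstacle is the uniformity statement as $x$ approaches $\partial T$: the exponential decay factor $\exp(-c\,m\operatorname{Re}(\mathfrak{c}(x)))$ must dominate the $O(m^{-1})$ local contribution, which forces $m\operatorname{Re}(\mathfrak{c}(x))\gg \ln m$; since $\operatorname{Re}(\mathfrak{c}(x))$ vanishes linearly in $\operatorname{dist}(x,T)$ away from the corners, this is precisely the hypothesis $\operatorname{dist}(x,T)>\ln(m)/m$. The exclusion of neighborhoods of the three corners is unavoidable here because at those points $r(\kappa;x)$ develops an additional degenerate branching structure that requires a Painlev\'e-I parametrix rather than the genus-zero model, and is handled separately in Section~\ref{sec:corner}.
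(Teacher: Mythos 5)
Your outline follows essentially the same route the paper indicates: the paper states Theorem~\ref{thm:outside} only as a citation to \cite{BuckinghamM14} and \cite{BertolaB15}, giving no proof beyond the four-step Deift--Zhou scheme sketched at the start of Section~\ref{sec:asymptotics}, and your proposal is exactly that scheme applied to Riemann--Hilbert Problem~\ref{rhp:JM} in the genus-zero exterior regime, with the correct identification of $\operatorname{Re}(\mathfrak{c}(x))$ as the quantity whose linear vanishing at $\partial T$ produces the $\ln(m)/m$ margin and of the corner degeneration as the reason those points are excluded. The one minor imprecision is that the relevant $g$-function derivative should be proportional to $(\kappa-{\widetilde{\mathcal{P}}}_0(x))r(\kappa;x)$, consistent with the definition \eqref{eq:c-define} of $\mathfrak{c}(x)$, rather than to $r(\kappa;x)$ alone, but this does not change the structure of the argument.
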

Recall that the limiting function ${\widetilde{\mathcal{P}}}_0(x)$ also has an interpretation as an equilibrium (``fast'' variable $w$-independent) solution of the formal model dif\/ferential equation~\eqref{eq:elliptic}. In~\cite{BuckinghamM14} this result is reported with an unimportant shift of the scaling parameter $m\mapsto m-\tfrac{1}{2}$ in the argument of~$p_m$, as this was convenient for the Riemann--Hilbert analysis used to prove the theorem. Once~$x$ moves into the elliptic region $T$ and wild oscillations develop, this shift will have to be retained to ensure full accuracy.

\subsubsection[Asymptotic description of $p_m$ in the elliptic region]{Asymptotic description of $\boldsymbol{p_m}$ in the elliptic region}
Now considering $x\in T$, we def\/ine the integration constant $\Pi$ in \eqref{eq:quartic} no longer via \eqref{eq:Pi0} but rather via the following \emph{Boutroux conditions}:
\begin{gather}
\operatorname{Re}\left(\oint\nolimits_\mathfrak{a}\frac{\dd{\widetilde{\mathcal{P}}}}{\dd w}\,\dd{\widetilde{\mathcal{P}}}\right)=0\qquad\text{and}\qquad
\operatorname{Re}\left(\oint\nolimits_\mathfrak{b}\frac{\dd{\widetilde{\mathcal{P}}}}{\dd w}\,\dd{\widetilde{\mathcal{P}}}\right)=0,
\label{eq:Boutroux}
\end{gather}
where $(\mathfrak{a},\mathfrak{b})$ is a basis of homology cycles on the elliptic curve $\Gamma(x)$ determined as a subvariety of $\mathbb{C}^2$ with coordinates $({\widetilde{\mathcal{P}}},\dd{\widetilde{\mathcal{P}}}/\dd w)$ given by \eqref{eq:quartic}. In \cite[Proposition~5]{BuckinghamM14} it is shown that these conditions determine $\Pi=\Pi(x)$ uniquely as a continuous function on $T$ with $\Pi(0)=0$. Moreover, the four roots of the polynomial on the right-hand side of \eqref{eq:quartic} are then distinct for $x\in T$, with two roots degenerating when $x$ approaches an edge point of $\partial T$ and all four roots degenerating when $x$ approaches a corner point of $\partial T$. The function $\Pi(x)$ determined from the Boutroux conditions \eqref{eq:Boutroux} is smooth but decidedly non-analytic in~$x$ (cf.\ \cite[equation~(4.31)]{BuckinghamM14}).

Given a point $x\in T$, we let $A(x)$, $B(x)$, $C(x)$, and $D(x)$ denote the roots of the quartic $R(\kappa;x)^2 =\kappa^4+\tfrac{2}{3}x\kappa^2 -\tfrac{4}{3}\kappa+\Pi(x)$, observing that the notation is well-def\/ined by continuity in~$x$ given that when $x=0$ the roots are as shown in Fig.~\ref{fig:elliptic-RHP}.
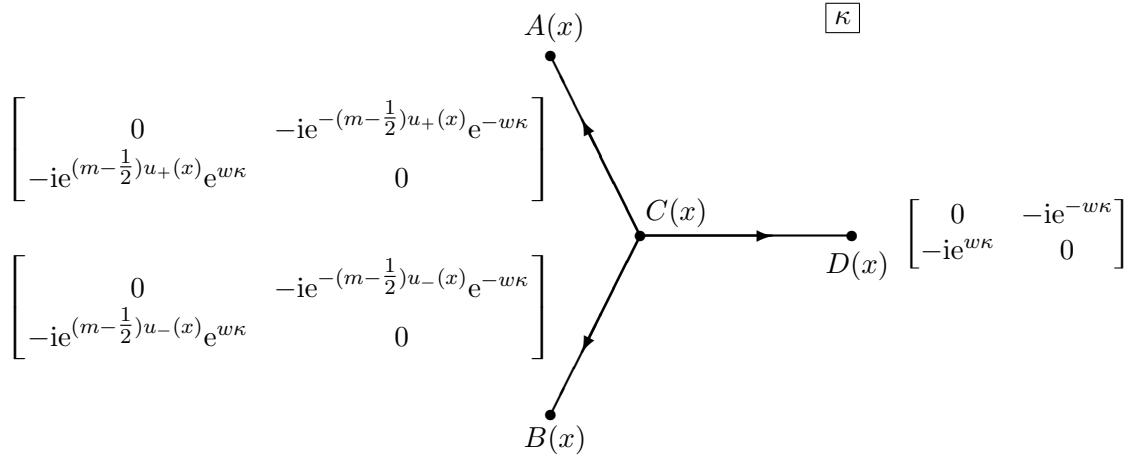
\begin{figure}[t]\centering
\setlength{\unitlength}{2pt}
\begin{picture}(100,100)(-50,-50)
\thicklines
\put(45,40){\framebox{$\kappa$}}
\put(10,0){\line(1,0){40}}
\put(10,0){\vector(1,0){25}}
\put(10,0){\line(-1,2){17}}
\put(10,0){\vector(-1,2){11}}
\put(10,0){\line(-1,-2){17}}
\put(10,0){\vector(-1,-2){11}}
\put(-7,34){\circle*{2}}
\put(-12,38){$A(x)$}
\put(-110,15){$\begin{bmatrix}0 & -\ii\ee^{-(m-\tfrac{1}{2})u_+(x)}\ee^{-w\kappa}\\-\ii\ee^{(m-\tfrac{1}{2})u_+(x)}\ee^{w\kappa} & 0\end{bmatrix}$}
\put(-7,-34){\circle*{2}}
\put(-12,-40){$B(x)$}
\put(-110,-15){$\begin{bmatrix}0 & -\ii\ee^{-(m-\tfrac{1}{2})u_-(x)}\ee^{-w\kappa}\\-\ii\ee^{(m-\tfrac{1}{2})u_-(x)}\ee^{w\kappa} & 0\end{bmatrix}$}
\put(10,0){\circle*{2}}
\put(11,3){$C(x)$}
\put(50,0){\circle*{2}}
\put(45,-7){$D(x)$}
\put(60,-1){$\begin{bmatrix}0 & -\ii \ee^{-w\kappa}\\-\ii\ee^{w\kappa} & 0\end{bmatrix}$}
\end{picture}

\caption{The branch cuts of $R(\kappa;x)$ for $x=0$ and the jump matrix $\mathbf{W}(\kappa;x,w)$ for Riemann--Hilbert Problem~\ref{rhp:model}.}\label{fig:elliptic-RHP}
\end{figure}
We then def\/ine $R(\kappa;x)$ as an analytic function satisfying $R(\kappa;x)=\kappa^2+\mathcal{O}(\kappa)$ as $\kappa\to\infty$ and with branch cuts along line segments connecting the four branch points as illustrated in Fig.~\ref{fig:elliptic-RHP}. Now def\/ine
\begin{gather*}
u_+(x):= 3\int_{D(x)}^{A(x)}R(\kappa;x)\,\dd \kappa\qquad\text{and}\qquad
u_-(x):=3\int_{D(x)}^{B(x)}R(\kappa;x)\,\dd \kappa,
\end{gather*}
where the path of integration is in each case assumed to be a straight line. In order to present the results for $x\in T$, we f\/irst formulate an auxiliary Riemann--Hilbert problem:
\begin{rhp}\label{rhp:model}
Let $x\in T$ and $w\in\mathbb{C}$ be given and let $m\ge 0$ be an integer. Seek a $2\times 2$ matrix-valued function $\mathbf{X}^m(\kappa;x,w)$ defined for $\kappa$ in the same domain where $R(\kappa;x)$ is analytic, with the following properties:
\begin{itemize}
\item\textit{\textbf{Analyticity.}} $\mathbf{X}^m(\kappa;x,w)$ is analytic in $\kappa$ in its domain of definition, taking continuous boundary values $\mathbf{X}^m_+(\kappa;x,w)$ and $\mathbf{X}^m_-(\kappa;x,w)$ from the left and right respectively on each oriented arc of its jump contour as shown in Fig.~{\rm \ref{fig:elliptic-RHP}}, \emph{except} at the four branch points where~$-1/4$ power singularities are admitted.
\item\textit{\textbf{Jump condition.}} The boundary values are related by
\begin{gather*}
\mathbf{X}_-^m(\kappa;x,w)=\mathbf{X}_+^m(\kappa;x,w)\mathbf{W}(\kappa;x,w),
\end{gather*}
where the jump matrix $\mathbf{W}(\kappa;x,w)$ is defined on each arc of the jump contour as shown in Fig.~{\rm \ref{fig:elliptic-RHP}}.
\item\textit{\textbf{Normalization.}} The matrix $\mathbf{X}^m(\kappa;x,w)$ is normalized at $\kappa=\infty$ as follows:
\begin{gather*}
\lim_{\kappa\to\infty}\mathbf{X}^m(\kappa;x,w)=\mathbb{I},
\end{gather*}
where the limit may be taken in any direction.
\end{itemize}
\end{rhp}
The matrix $\mathbf{X}^m(\cdot;x,w)$ is denoted $\dot{\mathbf{O}}^\mathrm{(out)}(\cdot)$ in~\cite{BuckinghamM14}. From the Laurent coef\/f\/icients
\begin{gather*}
\mathbf{X}^m_1(x,w):=\lim_{\kappa\to\infty} \kappa\big(\mathbf{X}^m(\kappa;x,w)-\mathbb{I}\big),\\
\mathbf{X}^m_2(x,w):=\lim_{\kappa\to\infty} \kappa^2\big(\mathbf{X}^m(\kappa;x,w)-\mathbb{I}-\mathbf{X}^m_1(x,w)\kappa^{-1}\big)
\end{gather*}
we then def\/ine a function ${\widetilde{\mathcal{P}}}^m(x,w)$ by
\begin{gather*}
{\widetilde{\mathcal{P}}}^m(x,w):=X_{1,22}^m(x,w)-\frac{X_{2,12}^m(x,w)}{X_{1,12}^m(x,w)}.
\end{gather*}
Then we have the following result.
\begin{Theorem}[Buckingham \& Miller \protect{\cite[Proposition 7 \& Theorem 2]{BuckinghamM14}}]\label{thm:inside}
For each $x\in T$ and integer $m\ge 0$, ${\widetilde{\mathcal{P}}}^m(x,w)$ is an elliptic function of $w$ that satisfies the model equation \eqref{eq:elliptic} $($more precisely, with $\Pi=\Pi(x)$ defined as above, equation \eqref{eq:quartic}$)$. Defining
\begin{gather*}
\chi^m(x,w):=\begin{cases}\hphantom{-}1,& |{\widetilde{\mathcal{P}}}^m(x,w)|\le 1,\\
-1,& |{\widetilde{\mathcal{P}}}^m(x,w)|>1,
\end{cases}
\end{gather*}
the asymptotic condition
\begin{gather}
m^{-\chi^m(x,w)/3}p_m(y)^{\chi^m(x,w)}={\widetilde{\mathcal{P}}}^m(x,w)^{\chi^m(x,w)} +\mathcal{O}\big(m^{-1}\big),\nonumber\\
 y=\big(m-\tfrac{1}{2}\big)^{2/3}x + \big(m-\tfrac{1}{2}\big)^{-1/3}w, \label{eq:inside-approximation}
\end{gather}
holds as $m\to\infty$ uniformly for $(x,w)$ in compact subsets of $T\times\mathbb{C}$.
\end{Theorem}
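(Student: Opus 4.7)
My plan is to apply the Deift--Zhou steepest descent method to the Jimbo--Miwa Riemann--Hilbert Problem~\ref{rhp:JM} under the rescaling $\zeta=(m-\tfrac12)^{1/3}\kappa$ and $y=(m-\tfrac12)^{2/3}x+(m-\tfrac12)^{-1/3}w$, so that the controlling exponent $\phi(\zeta,y)$ becomes $(m-\tfrac12)$ times a cubic $\Theta(\kappa;x,w)$ in $\kappa$. The first step is to introduce a scalar $g$-function $g(\kappa;x)$ with $dg=\tfrac32(\kappa-{\widetilde{\mathcal{P}}}_0(x))R(\kappa;x)\,d\kappa$ modified by an additive integration constant, and to fix the free constants by requiring that the relevant periods of $dg$ be purely imaginary. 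These are precisely the Boutroux conditions~\eqref{eq:Boutroux}, which are known to have a unique continuous solution $\Pi(x)$ on $T$, so the four branch points $A(x),B(x),C(x),D(x)$ of $R(\kappa;x)^2$ are distinct. Conjugating $\mathbf{Z}^m$ by $e^{(m-1/2)g\sigma_3}$ converts the exponential factors on the six Stokes rays into oscillatory phases concentrated on four arcs joining the branch points.

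Next I would open lenses around the branch cuts, algebraically factoring the Stokes jump matrices of Figure~\ref{fig:JM-RHP} into upper and lower triangular pieces supported on contours lying on either side of the cuts. By the Boutroux choice of $g$, the off-diagonal entries of these new lensed jumps decay exponentially in $m$, leaving the only non-negligible jumps on the branch cuts themselves, with constant matrices whose exponential prefactors $e^{\pm(m-1/2)u_\pm(x)}$ come from the boundary values of $g$. The resulting model problem is exactly Riemann--Hilbert Problem~\ref{rhp:model}, and I would construct its solution $\mathbf{X}^m(\kappa;x,w)$ using Riemann theta functions on the elliptic curve $\Gamma(x)$; since $w$ enters the jump data only linearly in $\kappa$ through the factor $e^{w\kappa\sigma_3}$, the theta-arguments depend linearly on $w$ and so ${\widetilde{\mathcal{P}}}^m(x,\cdot)$ is automatically elliptic.

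To verify that ${\widetilde{\mathcal{P}}}^m(x,w)$ satisfies the autonomous model~\eqref{eq:elliptic}, equivalently~\eqref{eq:quartic} with the Boutroux value of $\Pi$, I would form the logarithmic derivatives $\partial_\kappa\mathbf{X}^m\cdot(\mathbf{X}^m)^{-1}$ and $\partial_w\mathbf{X}^m\cdot(\mathbf{X}^m)^{-1}$ and observe that each inherits trivial jumps from Riemann--Hilbert Problem~\ref{rhp:model} and extends to a rational matrix in $\kappa$ determined by its behavior at the branch points (at worst $-1/2$ singularities) and at $\infty$. Matching against the expansion coefficients $\mathbf{X}^m_1,\mathbf{X}^m_2$ reproduces an autonomous degeneration of the Jimbo--Miwa Lax pair whose compatibility is exactly~\eqref{eq:elliptic}; the reconstruction formula for ${\widetilde{\mathcal{P}}}^m$ then mirrors~\eqref{eq:pm-from-JM}, so ${\widetilde{\mathcal{P}}}^m$ is indeed a solution of the model ODE.

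The final step is the small-norm analysis. Local parametrices at the four simple branch points must be built from Airy functions and matched to $\mathbf{X}^m$ on small circles; off these circles, the lensed jumps are uniformly $\mathbb{I}+\mathcal{O}(m^{-1})$ on compact subsets of $T\times\mathbb{C}$, so the error quotient solves a small-norm problem with solution $\mathbb{I}+\mathcal{O}(m^{-1})$. Undoing the chain of substitutions and reading $p_m(y)$ off via~\eqref{eq:pm-from-JM} would then produce the approximation~\eqref{eq:inside-approximation}. The main obstacle will be the $\chi^m$-dichotomy: \eqref{eq:pm-from-JM} has two reconstruction formulas, one using $X^m_{1,12}$ in the denominator, the other $X^m_{1,21}$, and near zeros or poles of ${\widetilde{\mathcal{P}}}^m$ exactly one of these vanishes. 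The resolution is to prove that the two formulas respectively control $p_m$ and $1/p_m$, and to partition $T\times\mathbb{C}$ according to whether $|{\widetilde{\mathcal{P}}}^m|\leq 1$ or $|{\widetilde{\mathcal{P}}}^m|>1$; on each piece the relevant denominator is bounded below uniformly on compacts, and a covering argument combines the two local estimates into the uniform bound~\eqref{eq:inside-approximation}.
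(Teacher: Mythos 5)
This theorem is quoted from \cite{BuckinghamM14}; the paper itself supplies no proof beyond the four-step outline of the Deift--Zhou method, and your plan follows exactly that route (steepest descent on Riemann--Hilbert Problem~\ref{rhp:JM} after the rescaling $\zeta=(m-\tfrac12)^{1/3}\kappa$, a Boutroux-tuned $g$-function, lens opening, the genus-one model Problem~\ref{rhp:model} solved by theta functions, Airy local parametrices, and a two-chart treatment of the $\chi^m$ dichotomy via the two reconstruction formulae). One formula needs correcting: inside $T$ the $g$-function must satisfy $g'(\kappa)=\tfrac32\kappa^2+O(1)$ at infinity so that $\ee^{(m-\frac12)g\sigma_3}$ cancels the phase $(m-\tfrac12)\Theta$ together with the normalization $(-\zeta)^{(1-2m)\sigma_3/2}$; this forces $\dd g=\tfrac32 R(\kappa;x)\,\dd\kappa$ with the quartic root $R\sim\kappa^2$, whereas your $\dd g=\tfrac32(\kappa-{\widetilde{\mathcal{P}}}_0(x))R(\kappa;x)\,\dd\kappa$ grows like $\kappa^3$ and would make the conjugated problem unnormalizable. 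The factor $(\kappa-{\widetilde{\mathcal{P}}}_0(x))$ belongs only to the exterior-region function $\mathfrak{c}(x)$ of \eqref{eq:c-define}, where it multiplies the \emph{quadratic} root $r\sim\kappa$ (the degenerate, double-root limit of $R$); with that replacement your outline matches the argument of \cite{BuckinghamM14}.
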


The statement \eqref{eq:inside-approximation} says\footnote{This statement corrects a mistake in equation (4.219) of~\cite{BuckinghamM14}. Equations~(4.217), (4.218), and (4.220) of that reference should be similarly reformulated.} that $m^{-1/3}p_m(y)$ and ${\widetilde{\mathcal{P}}}^m(x,w)$ are uniformly close where ${\widetilde{\mathcal{P}}}^m(x,w)$ is bounded, while their reciprocals are uniformly close where ${\widetilde{\mathcal{P}}}^m(x,w)$ is bounded away from zero. The fact that the approximating function ${\widetilde{\mathcal{P}}}^m(x,w)$ depends on two variables deserves some explanation. Since $w$ should be bounded for the indicated error estimate to be valid, variation of $w$ amounts to the exploration of a small neighborhood of radius $m^{-1/3}$ of the point $y=\big(m-\tfrac{1}{2}\big)^{2/3}x$. Thus f\/ixing $x\in T$ and varying $w$ one obtains a local approximation whose validity fails if $w$ becomes large. It is on the $w$-scale that $m^{-1/3}p_m(y)$ is well-approximated by an elliptic function of $w$, the meromorphic nature of which mirrors that of the original rational Painlev\'e-II function $p_m(y)$. On the other hand, the same approximating formula \eqref{eq:inside-approximation} also allows~$x$ to vary within~$T$; here one may f\/ix arbitrarily, say, $w=0$ and obtain an approximation that is uniformly valid on compact subsets of $T$ that avoid poles, but that has an essentially non-meromorphic character due to the nonanalyticity of~$\Pi(x)$. Geometrically, we may view~$T$ as a~manifold with base coordinate~$x$, while~$w$ plays the role of a coordinate on the tangent space to~$T$ at~$x$. Thus \eqref{eq:inside-approximation} approximates $p_m(y)$ with a function ${\widetilde{\mathcal{P}}}^m(x,w)$ def\/ined on the tangent bundle to~$T$. We also can call $x$ a \emph{macroscopic} variable and $w$ a \emph{microscopic} variable to distinguish their dif\/ferent roles in~\eqref{eq:inside-approximation}.

Numerous auxiliary results can be obtained from Theorem~\ref{thm:inside}. Perhaps the main quantity of interest is the distribution of poles of residues $\pm 1$, which by \eqref{eq:inside-approximation} form regular lattices of spacing proportional to~$m^{-1/3}$ in the $y$-variable that slowly vary over distances proportional to~$m^{2/3}$ (the macroscopic $x$-scale) in the same variable. Bertola and Bothner characterize each lattice globally via a pair of quantization conditions giving the lattice points as the intersections of two distinct families curves over~$T$. In \cite[Proposition~14]{BuckinghamM14} it is shown that, while the period parallelograms of the lattices have limits in the $w$-plane as~$m\to\infty$ for given $x\in T$, the of\/fset of the lattices in the $w$-plane can f\/luctuate with $m$, accumulating a f\/ixed shift with each increment of~$m$ by a vector depending on the base point $x\in T$; see Fig.~\ref{fig:zoom-origin}.
\begin{figure}[t]\centering
\includegraphics[width=0.3\linewidth]{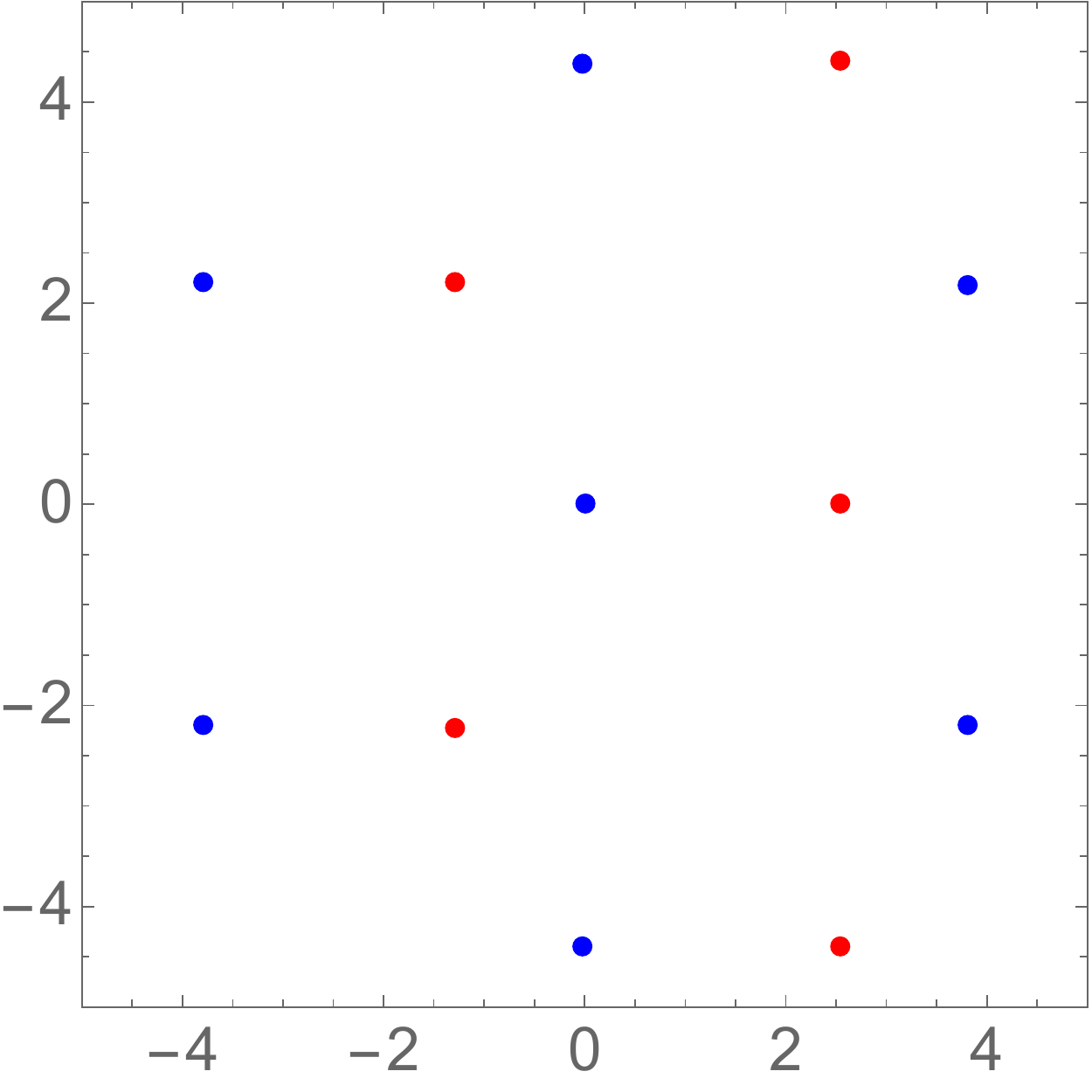}\hspace{0.03\linewidth}%
\includegraphics[width=0.3\linewidth]{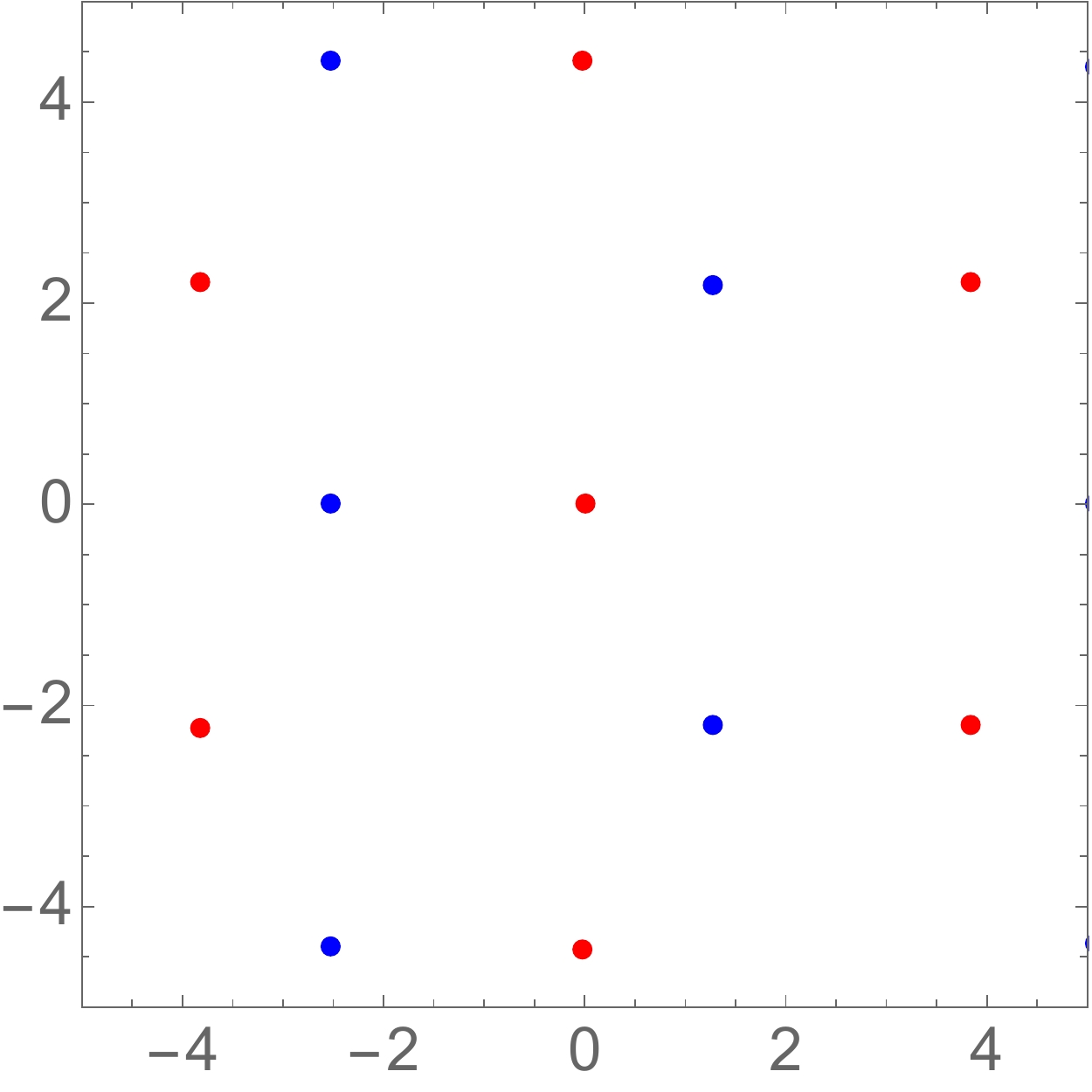}\hspace{0.03\linewidth}%
\includegraphics[width=0.3\linewidth]{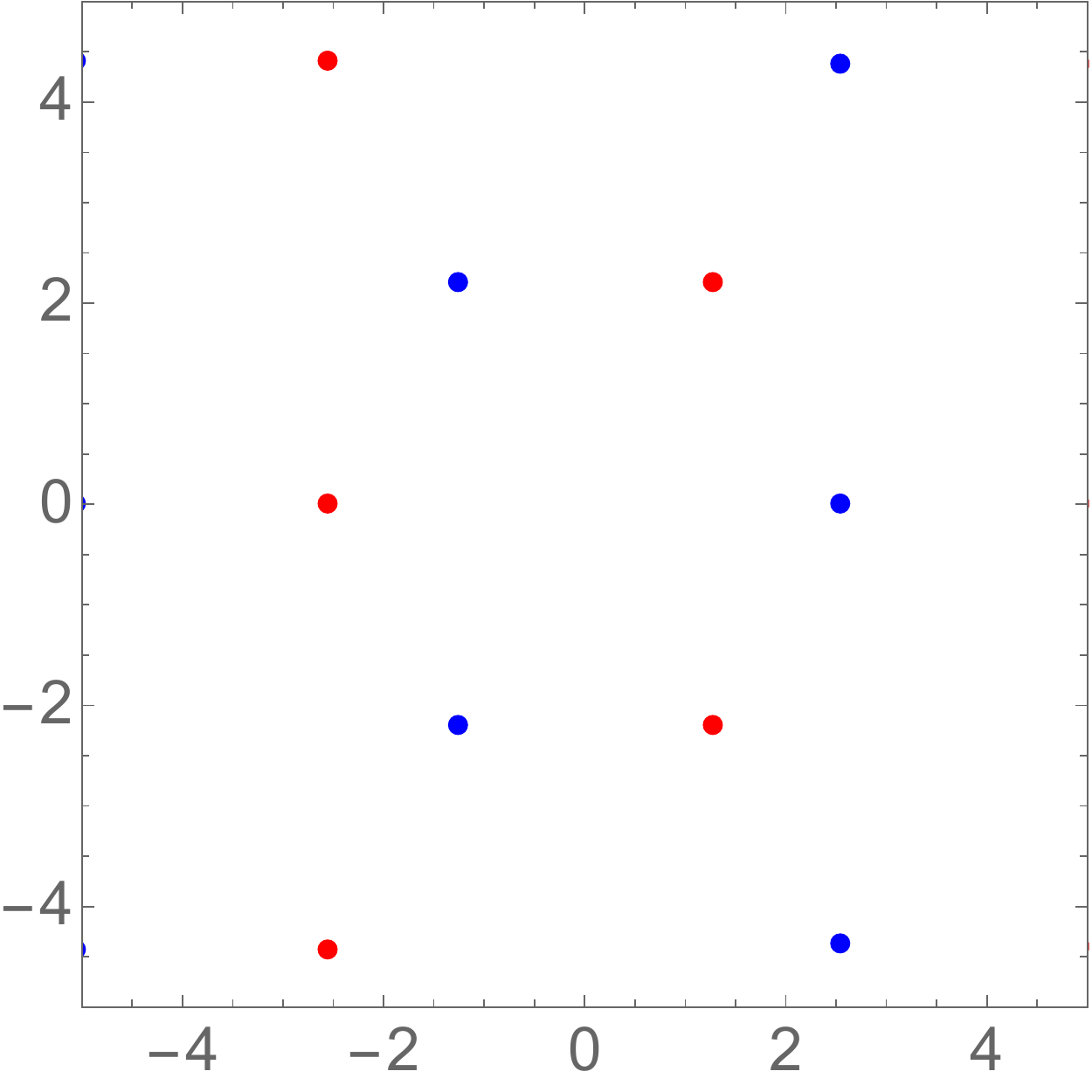}
\caption{The poles of residue $1$ (blue) and $-1$ (red) of $p_{m}(y)$ for $m=58$ (left), $m=59$ (center), and $m=60$ (right), plotted in the $w$-plane for $x=0$. Note the shift of the lattices with $m$; when $x=0$, three consecutive shifts make up a lattice vector, so the asymptotic pattern has period $3$ with respect to $m$. This dependence of the microscopic pattern near $x=0$ on $m\pmod{3}$ has also been noted in a related problem by Shapiro and Tater \cite{ShapiroT14}.}\label{fig:zoom-origin}
\end{figure}
As for how accurately the lattice points approximate the poles of $p_m$, it can be proved that the true poles of $p_m\big(\big(m-\tfrac{1}{2}\big)^{2/3}x\big)$ lying in any compact subset of $T$ all move within the union of disks of radius of radius $\mathcal{O}\big(1/m^2\big)$ centered at the lattice points (whose spacing in $x$ is proportional to $1/m$) if $m$ is suf\/f\/iciently large \cite[Corollary~1]{BuckinghamM14}. See also \cite[Theorem~1.6]{BertolaB15}, where this result is formulated for disks of radius $o(1/m)$.

In~\cite{BuckinghamM14}, formulae are also given for the asymptotic density of poles of $p_m\big(\big(m-\tfrac{1}{2}\big)^{2/3}x\big)$ as a~function of $x\in T$. Here, density is measured in terms of the microscopic coordinate $w$, and one may def\/ine both a planar density:
\begin{gather*}
{\widetilde{\sigma}}_\mathrm{P}(x):=\lim_{M\uparrow\infty}\frac{\#\{\text{residue $-1$ poles $w$ of ${\widetilde{\mathcal{P}}}^m(x,w)$ with $|w|<M$}\}}{\pi M^2},\qquad x\in T,
\end{gather*}
and a linear density of real poles for $x\in T\cap\mathbb{R}$:
\begin{gather*}
{\widetilde{\sigma}}_\mathrm{L}(x):=\lim_{M\uparrow\infty}\frac{\#\{\text{real residue $-1$ poles $w$ of ${\widetilde{\mathcal{P}}}^m(x,w)$ in $(-M,M)$}\}}{2M},\qquad x\in T\cap\mathbb{R}.
\end{gather*}
Since there are precisely two simple poles of opposite residue within each fundamental period parallelogram of the elliptic function ${\widetilde{\mathcal{P}}}^m(x,\cdot)$, the planar density is the reciprocal of the enclosed area, which is readily calculated as a function of~$x$ (see \cite[equation~(4.144)]{BuckinghamM14}). The linear density is similarly the reciprocal of the length of the period interval, since for $x\in T\cap\mathbb{R}$ all poles are real (modulo the period lattice). This leads to the explicit formula
\begin{gather*}
{\widetilde{\sigma}}_\mathrm{L}(x)=\left[2\int_{D(x)}^{A(x)}\frac{\dd \kappa}{R(\kappa;x)} + 2\int_{D(x)}^{B(x)}\frac{\dd \kappa}{R(\kappa;x)}\right]^{-1}>0,\qquad x\in T\cap\mathbb{R}.
\end{gather*}
While the planar and linear densities are def\/ined here from the known approximation ${\widetilde{\mathcal{P}}}^m(x,w)$, they indeed capture the true local densities of poles of $p_m(m^{2/3}x)$ \cite[Theorem~5]{BuckinghamM14} in the limit of large~$m$.

Another type of result aims to capture the ``local average'' behavior of $p_m(y)$. Here one notes that as $p_m(y)$ has simple poles only, it is locally integrable with respect to area measure in the plane. Similarly, integrals of $p_m(y)$ with respect to Lebesgue measure on $\mathbb{R}$ are well-def\/ined if interpreted in the principal-value sense. Thus, the following local averages are well-def\/ined for $x\in T$ and $x\in T\cap\mathbb{R}$ respectively:
\begin{gather*}
\big\langle{\widetilde{\mathcal{P}}}\big\rangle(x):=\frac{\iint_{\mathfrak{p}(x)}{\widetilde{\mathcal{P}}}^m(x,w)\,\dd A(w)}{\iint_{\mathfrak{p}(x)}\,\dd A(w)},\qquad x\in T,
\end{gather*}
where $\mathfrak{p}(x)$ denotes a period parallelogram and $\dd A(w)$ is area measure in the $w$-plane, and
\begin{gather*}
\big\langle{\widetilde{\mathcal{P}}}\big\rangle_\mathbb{R}(x):=\frac{1}{L}\mathrm{P.V.}\int_{w_0}^{w_0+L}{\widetilde{\mathcal{P}}}^m(x,w)\,\dd w,\qquad x\in T\cap\mathbb{R},
\end{gather*}
where $L$ is the length of a real period interval and $w_0$ is not a pole of the integrand. Remarkably, as shown in \cite[Proposition~11]{BuckinghamM14}, these two quite dif\/ferent def\/initions actually agree where both are def\/ined:
\begin{gather*}
\big\langle{\widetilde{\mathcal{P}}}\big\rangle_\mathbb{R}(x)=\big\langle{\widetilde{\mathcal{P}}}\big\rangle(x),\qquad x\in T\cap\mathbb{R}.
\end{gather*}
Also, $\langle{\widetilde{\mathcal{P}}}\rangle(x)$ can be expressed in terms of basic quantities associated with the Riemann sur\-fa\-ce~$\Gamma(x)$. It is furthermore shown in \cite[Proposition~12]{BuckinghamM14} that $\langle{\widetilde{\mathcal{P}}}\rangle(x)$ may be extended to the whole complex $x$-plane as a continuous function by def\/ining $\langle{\widetilde{\mathcal{P}}}\rangle(x):={\widetilde{\mathcal{P}}}_0(x)$ (the distinguished solution of the cubic equation~\eqref{eq:cubic}) for $x\in\mathbb{C}\setminus T$. This extended function is analytic in $x$ outside of~$T$ but fails to be analytic within~$T$. Then we have the following result.
\begin{Theorem}[Buckingham \& Miller \protect{\cite[Corollary~3 \& Theorem~4]{BuckinghamM14}}]\label{thm:weaklimit}
\begin{gather*}
\lim_{m\to\infty} m^{-1/3}p_m\big(m^{2/3}{\diamond}\big) = \big\langle{\widetilde{\mathcal{P}}}({\diamond})\big\rangle,
\end{gather*}
where the convergence is in the sense of the distributional topology on $\mathscr{D}'(\mathbb{C}\setminus\partial T)$. Also if $\varphi\in \mathscr{D}((\mathbb{C}\setminus\partial T)\cap\mathbb{R})$ is a smooth test function with compact real support avoiding $\partial T$, then
\begin{gather*}
\lim_{m\to\infty}\mathrm{P.V.}\int_\mathbb{R} m^{-1/3}p_m\big(m^{2/3}x\big)\varphi(x)\,\dd x = \int_\mathbb{R}
\big\langle{\widetilde{\mathcal{P}}}\big\rangle(x)\varphi(x)\,\dd x,
\end{gather*}
expressing a similar distributional convergence where the integrals have to be interpreted in the principal value sense.
\end{Theorem}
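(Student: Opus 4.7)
The plan is to combine the two pointwise approximation results, Theorem~\ref{thm:outside} (valid outside a logarithmic neighborhood of $\overline{T}$) and Theorem~\ref{thm:inside} (valid on compact subsets of $T$ in the tangent-bundle sense), with a mesoscopic averaging argument that washes out the rapid oscillations of the elliptic function $\widetilde{\mathcal{P}}^m(x,w)$ in the microscopic variable $w$. Fix $\varphi\in\mathscr{D}(\mathbb{C}\setminus\partial T)$ and introduce a smooth partition of unity $\varphi=\varphi_\mathrm{ext}+\varphi_\mathrm{int}$ where $\varphi_\mathrm{ext}$ is supported in $\mathbb{C}\setminus\overline{T}$ and $\varphi_\mathrm{int}$ is supported in $T$. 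For the exterior piece, Theorem~\ref{thm:outside} supplies uniform convergence $m^{-1/3}p_m(m^{2/3}x)\to\widetilde{\mathcal{P}}_0(x)=\langle\widetilde{\mathcal{P}}\rangle(x)$ on the compact set $\mathrm{supp}(\varphi_\mathrm{ext})$ once $m$ is large enough that the $\ln(m)/m$-neighborhood of $T$ excluded from $K_m$ no longer meets $\mathrm{supp}(\varphi_\mathrm{ext})$, so the integral against $\varphi_\mathrm{ext}$ passes to the limit by dominated convergence.

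For the interior piece I would cover $\mathrm{supp}(\varphi_\mathrm{int})\subset T$ by a grid of mesoscopic squares $Q_j$ of sidelength $\delta_m$ centered at $x_j$, chosen in the regime $m^{-1}\ll\delta_m\ll 1$ (e.g., $\delta_m=m^{-1/2}$). On each $Q_j$ substitute the microscopic coordinate $w=(m-\tfrac{1}{2})(x-x_j)$, so that $Q_j$ is mapped to a window of area $\asymp(m\delta_m)^2$ in $w$-space containing $\asymp(m\delta_m)^2\widetilde{\sigma}_\mathrm{P}(x_j)$ fundamental period parallelograms of $\widetilde{\mathcal{P}}^m(x_j,\cdot)$. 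Taylor-expanding $\varphi$ and the slow dependence of the approximating elliptic function on the macroscopic argument replaces $\varphi(x)$ by $\varphi(x_j)$ and $\widetilde{\mathcal{P}}^m(x,w)$ by $\widetilde{\mathcal{P}}^m(x_j,w)$ at a cost of $O(\delta_m)$, while Theorem~\ref{thm:inside} replaces $m^{-1/3}p_m(y)$ by $\widetilde{\mathcal{P}}^m(x_j,w)$ at a cost of $O(m^{-1})$ in its $\chi^m$-adapted sense. Standard periodic-function averaging then yields the area average $\langle\widetilde{\mathcal{P}}\rangle(x_j)$ up to boundary terms controlled by the ratio (perimeter/area)$\cdot$(period size), i.e.\ $O((m\delta_m)^{-1})$. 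Summing over~$j$ produces a Riemann sum of mesh $\delta_m$ for $\int\varphi\langle\widetilde{\mathcal{P}}\rangle\,\mathrm{d}A$ with total error tending to zero.

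The main obstacle is controlling the averaging through the poles. Since $p_m$ has simple poles densely distributed through $T$ at the macroscopic scale, no pointwise uniform bound is available; Theorem~\ref{thm:inside} itself accounts for this by switching to reciprocals where $|\widetilde{\mathcal{P}}^m|>1$. One must therefore argue integrably: decompose each fundamental period cell into the "small" region $\{|\widetilde{\mathcal{P}}^m(x_j,w)|\le 1\}$, where $p_m$ is uniformly bounded and the pointwise approximation transfers directly, and the "large" region enclosing each pole, where the integral contributions of $\widetilde{\mathcal{P}}^m$ and of $m^{-1/3}p_m$ are both $O(1)$ in $w$-area and can be compared by rewriting each as a contour integral on the common boundary $\{|\widetilde{\mathcal{P}}^m(x_j,w)|=1\}$ via Stokes/Cauchy. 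The simple-pole residues $\pm 1$ match on both sides because the true poles of $p_m$ lie within $O(m^{-2})$ of the poles of the approximation (by the pole-localization result cited from \cite[Corollary~1]{BuckinghamM14}), so these polar contributions cancel and only the integrable boundary comparison survives.

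The real-line statement is proved by the identical strategy in one dimension. After reducing to $\varphi\in\mathscr{D}(T\cap\mathbb{R})$, I would partition $\mathrm{supp}(\varphi)$ into mesoscopic intervals of length $\delta_m$ each containing $\asymp m\delta_m\widetilde{\sigma}_\mathrm{L}(x_j)$ real periods of $\widetilde{\mathcal{P}}^m(x_j,\cdot)$. The principal value removes the simple real poles symmetrically, so the periodic average over a long interval equals $\langle\widetilde{\mathcal{P}}\rangle_\mathbb{R}(x_j)$ up to an error $O((m\delta_m)^{-1})$ from the incomplete end period; the identity $\langle\widetilde{\mathcal{P}}\rangle_\mathbb{R}(x_j)=\langle\widetilde{\mathcal{P}}\rangle(x_j)$ on $T\cap\mathbb{R}$ asserted before the theorem then converts the Riemann sum to $\int\varphi\langle\widetilde{\mathcal{P}}\rangle\,\mathrm{d}x$ in the limit. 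The small $O(m^{-1/3})$ shift between the $(m-\tfrac{1}{2})^{2/3}x$ of Theorem~\ref{thm:inside} and the $m^{2/3}x$ of the present theorem is absorbed into the $\delta_m$-smoothness budget for $\varphi$ and $\langle\widetilde{\mathcal{P}}\rangle$ and does not affect the limit.
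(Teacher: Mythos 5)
First, a caveat: this survey does not prove Theorem~\ref{thm:weaklimit} at all --- it is imported verbatim from \cite[Corollary~3 \& Theorem~4]{BuckinghamM14}, and the text explicitly declines to give proofs of the results in Section~\ref{sec:asymptotics}. So your argument can only be judged as a derivation from Theorems~\ref{thm:outside} and~\ref{thm:inside} as quoted here. As such, the two-scale strategy (exterior piece by uniform convergence to ${\widetilde{\mathcal{P}}}_0=\langle{\widetilde{\mathcal{P}}}\rangle$, interior piece by mesoscopic averaging of the elliptic approximant over many period cells, then a Riemann sum in the macroscopic variable) is the right idea. The exterior half is complete as you state it, and the $(m-\tfrac12)^{2/3}$ versus $m^{2/3}$ rescaling is indeed harmless for a distributional statement even though it amounts to an $O(1)$ rigid shift in $w$.

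There are, however, two concrete gaps in the interior half. The first is the window size: Theorem~\ref{thm:inside} asserts uniformity of \eqref{eq:inside-approximation} only for $(x,w)$ in \emph{compact} subsets of $T\times\mathbb{C}$, i.e., for $|w|\le C$ with $C$ fixed, whereas your choice $\delta_m=m^{-1/2}$ maps each square $Q_j$ to a $w$-window of diameter $\asymp m\delta_m=m^{1/2}\to\infty$, outside the stated range of validity. The standard repair is to take $\delta_m=C/m$ so that each window is a fixed compact $w$-set, accept a non-vanishing incomplete-period error $O(1/C)$ from the window boundary, trap $\limsup_m$ and $\liminf_m$ of the pairing within $O(1/C)$ of $\int\varphi\langle{\widetilde{\mathcal{P}}}\rangle$, and send $C\to\infty$ only after $m\to\infty$; as written your error budget does not close. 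The second gap is that the pole comparison --- the technical heart --- is asserted rather than carried out. In the region $|{\widetilde{\mathcal{P}}}^m|>1$ the estimate \eqref{eq:inside-approximation} controls the difference of \emph{reciprocals}, which translates into $m^{-1/3}p_m-{\widetilde{\mathcal{P}}}^m$ being of size $m^{-1}|{\widetilde{\mathcal{P}}}^m|^2$, not $O(m^{-1})$; one must actually verify that after matching the two nearby simple poles (using the $O(m^{-2})$ localization and the equality of residues) the remaining area integral per cell is $o(1)$ (it is, of order $m^{-1}\log m$), and the principal-value version on $T\cap\mathbb{R}$ needs the analogous one-dimensional cancellation, which is more delicate because a P.V.\ integral of a difference of displaced real simple poles is not controlled by area-integrability. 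Your Stokes/Cauchy reformulation is a viable route to both estimates, but until those computations are done the argument is a plan rather than a proof.
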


\subsubsection[Asymptotic description of $p_m$ near edges]{Asymptotic description of $\boldsymbol{p_m}$ near edges}
The function $\mathfrak{d}(x):=\mathfrak{c}(x)-\ii\pi/2$ (cf.~\eqref{eq:c-define}) turns out to be a conformal mapping on a neighborhood of any sub-arc of the edge of $\partial T$ that crosses the positive real $x$-axis, and it maps this edge onto the imaginary segment with endpoints $\pm\ii\pi/2$. Also recalling the function $r(\kappa;x)$ from Section~\ref{sec:elliptic-region}, let $r_*(x):=r\big({\widetilde{\mathcal{P}}}_0(x);x\big)$ and def\/ine
\begin{gather*}
\ell(x):=-\frac{1}{2}\log\big(9r_*(x)^5{\widetilde{\mathcal{P}}}_0(x)\big)
\end{gather*}
to be real for $x\in\partial T\cap\mathbb{R}_+$ and analytically continued to the neighborhood of the sub-arc in question. Denoting by $h_n$ the leading coef\/f\/icient of the normalized Hermite polynomial:
\begin{gather*}
h_n:=\frac{2^{n/2}}{\pi^{1/4}\sqrt{n!}},\qquad n=0,1,2,3,\dots,
\end{gather*}
we def\/ine inf\/initely many complex coordinates (shifts of $\mathfrak{d}(x)$) by
\begin{gather*}
X_n^m(x):=\mathfrak{d}(x)+\tfrac{1}{2}\big(n+\tfrac{1}{2}\big)\frac{\log\big(m-\tfrac{1}{2}\big)}{m-\tfrac{1}{2}}
-\frac{n+\tfrac{1}{2}}{m-\tfrac{1}{2}}\ell(x)+\frac{\log\big(\sqrt{2\pi}h_n\big)}{m-\tfrac{1}{2}},\!\!\!\qquad n=0,1,2,3,\dots.
\end{gather*}
Finally, def\/ine the trigonometric functions $T_n^m(x)$ by
\begin{gather*}
T^m_n(x):=\begin{cases}
1+\coth\big(\big(m-\tfrac{1}{2}\big)X_n^m(x)\big), &n\equiv m\pmod{2},\\
1+\tanh\big(\big(m-\tfrac{1}{2}\big)X_n^m(x)\big), &n\not\equiv m\pmod{2},\quad n=0,1,2,3,\dots.
\end{cases}
\end{gather*}
Then we have the following result.
\begin{Theorem}[Buckingham \& Miller \protect{\cite[Theorem~2]{BuckinghamM15}}]\label{thm:edge} Let arbitrarily small constants $\delta>0$ and $\sigma>0$, and an arbitrarily large constant $M>0$ be given. Suppose that $\operatorname{Re}(\mathfrak{d}(x))\ge -M\log(m)/m$ and $|\arg(x)|\le\pi/3-\sigma$ $($this puts $x$ in the sector containing the edge of $\partial T$ of interest and prevents~$x$ from penetrating the elliptic region~$T$ by a distance greater than~$\mathcal{O}(\log(m)/m))$. Suppose also that $x$ is of distance at least $\delta/m$ from every pole of the functions $T_n^m(x)$, $n=0,1,2,3,\dots$. Then
\begin{gather*}
m^{-1/3}p_m\big(\big(m-\tfrac{1}{2}\big)^{2/3}x\big)\\
{}={\widetilde{\mathcal{P}}}_0(x) + \sum_{n=0}^\infty
\left[-\frac{1}{2}r_*(x)T^m_n(x)+\frac{3{\widetilde{\mathcal{P}}}_0(x)r_*(x)(r_*(x)
-2{\widetilde{\mathcal{P}}}_0(x))^2T^m_n(x)}{6{\widetilde{\mathcal{P}}}_0(x)r_*(x)(r_*(x)-2{\widetilde{\mathcal{P}}}_0(x))T^m_n(x)-4}\right] + \mathcal{O}\big(m^{-1}\big)
\end{gather*}
holds as $m\to\infty$ uniformly for the indicated $x$.
\end{Theorem}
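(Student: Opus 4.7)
The plan is to continue the Deift--Zhou steepest descent analysis that yielded Theorem~\ref{thm:outside} all the way down to the edge $\partial T$, starting from the Jimbo--Miwa Riemann--Hilbert Problem~\ref{rhp:JM}. First I would rescale $\zeta=(m-\tfrac{1}{2})^{1/3}\kappa$ and $y=(m-\tfrac{1}{2})^{2/3}x$, and conjugate $\mathbf{Z}^m$ by $\ee^{-(m-\tfrac{1}{2})g(\kappa;x)\sigma_3}$, where $g(\cdot;x)$ is the genus-zero $g$-function whose stationary point is the distinguished cubic root ${\widetilde{\mathcal{P}}}_0(x)$ from Section~\ref{sec:heuristics}. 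After the standard triangular factorizations on the six Stokes rays of Figure~\ref{fig:JM-RHP} and the accompanying lens-opening deformations, all jumps become exponentially small perturbations of the identity away from a single disc around the coalescing saddle at $\kappa={\widetilde{\mathcal{P}}}_0(x)$, and the outer parametrix built from the degenerate genus-zero curve reproduces the leading term ${\widetilde{\mathcal{P}}}_0(x)$ via \eqref{eq:pm-from-JM}.

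The novelty near $\partial T$ is that the remaining pair of roots of the quartic $R(\kappa;x)^2$ of Section~\ref{sec:elliptic-region} approaches the saddle at a rate measured by the conformal coordinate $\mathfrak{d}(x)=\mathfrak{c}(x)-\ii\pi/2$, so the outer parametrix develops spurious singularities that must be absorbed into a local edge parametrix on a shrinking disc around $\kappa={\widetilde{\mathcal{P}}}_0(x)$. I would model this parametrix on a Hermite / parabolic-cylinder problem, since a quadratic $g$-function near the saddle together with triangular Stokes factors $\ee^{\pm(m-\tfrac{1}{2})\phi}$ is precisely the setting in which the normalized Hermite polynomial leading coefficients $h_n$ and the harmonic-oscillator spacing $\tfrac{1}{2}(n+\tfrac{1}{2})\log(m-\tfrac{1}{2})/(m-\tfrac{1}{2})$ enter naturally. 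Each Hermite index $n=0,1,2,\dots$ contributes a geometric series in $\ee^{\pm 2(m-\tfrac{1}{2})X_n^m(x)}$ whose closed form is either $1+\coth$ or $1+\tanh$ according to whether $n\equiv m\pmod{2}$ (the parity controls which side of $\partial T$ carries the activated Stokes jump of the local model). Pasting the edge parametrix onto the outer one and running the small-norm Riemann--Hilbert argument, where the logarithmic tolerance $\operatorname{Re}(\mathfrak{d}(x))\geq -M\log(m)/m$ in the hypotheses is exactly what is needed to control the matching error, allows extraction of $p_m(y)$ from the $\kappa^{-1}$ and $\kappa^{-2}$ coefficients of the reconstructed $\mathbf{Z}^m$; algebraic simplification using the cubic \eqref{eq:cubic} and the definition of $r_*(x)$ should collapse the result into the explicit rational-in-$T_n^m(x)$ form of the theorem.

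The main obstacle is constructing and controlling the edge parametrix in the presence of \emph{all} Hermite modes simultaneously: a standard edge parametrix uses only a single parabolic-cylinder function, but here $x$ is allowed to sweep through the pole loci of many different $T_n^m(x)$ as it moves along $\partial T$, so every mode must be present in the model on equal footing. I would realize the parametrix as an explicit rank-one meromorphic dressing iterated over $n$ (compatible with the discrete Schlesinger/Darboux transformations mentioned in Section~\ref{sec:JM}), and then establish absolute summability: Stirling's formula gives $\log h_n\sim -\tfrac{1}{2}n\log n+\tfrac{n}{2}\log 2+\tfrac{n}{2}$, which forces $\operatorname{Re}((m-\tfrac{1}{2})X_n^m(x))\to-\infty$ super-geometrically in $n$, so $T_n^m(x)\to 0$ at a rate that renders the tail $\sum_{n\gtrsim m}$ negligible uniformly in $x$. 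The $\delta/m$ exclusion in the hypothesis is precisely what removes the neighborhoods where an individual mode is resonant within distance $o(1/m)$ of its pole, beyond the reach of the small-norm estimate; verifying this tail bound and the explicit matching of the edge parametrix constants to the coefficients $r_*(x)$, $\ell(x)$, and $\sqrt{2\pi}h_n$ appearing in $X_n^m(x)$ constitutes the bulk of the technical work.
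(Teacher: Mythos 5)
You should note at the outset that this survey does not actually prove Theorem~\ref{thm:edge}: it states the result by citation to \cite[Theorem~2]{BuckinghamM15} and explicitly declines to give details, so the only in-paper material to compare against is the generic four-step steepest-descent outline preceding Section~3.3.1. Measured against that outline and the cited methodology, your strategy is pointed in the right direction, and several of your specific observations are sound: on $\partial T$ the quartic $R(\kappa;x)^2$ degenerates by acquiring the double root $\kappa={\widetilde{\mathcal{P}}}_0(x)$ (cf.\ Section~\ref{sec:elliptic-region}), so the edge is indeed a band-birth transition for a genus-zero $g$-function built on Riemann--Hilbert Problem~\ref{rhp:JM}; the identities $1+\coth(u)=2\sum_{k\ge 0}\ee^{-2ku}$ and $1+\tanh(u)=2\sum_{k\ge 0}(-1)^k\ee^{-2ku}$ do exhibit $T^m_n(x)$ as resummed geometric series; and your Stirling estimate for $\log(\sqrt{2\pi}h_n)$ correctly forces $\operatorname{Re}\big(\big(m-\tfrac12\big)X_n^m(x)\big)\to-\infty$ superlinearly in $n$, which is what makes the series absolutely convergent (the bracket vanishes at $T_n^m=0$) and confines the non-negligible terms to $n\lesssim 2M$ under the hypothesis $\operatorname{Re}(\mathfrak{d}(x))\ge -M\log(m)/m$.

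The genuine gap is that the single object carrying the entire content of the theorem --- a local parametrix at $\kappa={\widetilde{\mathcal{P}}}_0(x)$ valid uniformly across a strip of width $M\log(m)/m$ in which several modes are simultaneously active --- is postulated rather than constructed. You propose ``an explicit rank-one meromorphic dressing iterated over $n$,'' but you do not write down the dressing, do not show that the iterated construction satisfies the required jump conditions and matches the outer parametrix to accuracy $\mathcal{O}(m^{-1})$ on the boundary of the shrinking disc, and do not perform the algebra that is supposed to ``collapse'' into the stated rational function of $T^m_n(x)$. The constants $\ell(x)=-\tfrac12\log\big(9r_*(x)^5{\widetilde{\mathcal{P}}}_0(x)\big)$ and $\sqrt{2\pi}h_n$ inside $X_n^m(x)$, and the parity rule selecting $\coth$ versus $\tanh$, are precisely the fingerprints of that matching computation; they cannot be recovered from the considerations you give (your suggestion that the parity reflects ``which side of $\partial T$ carries the activated Stokes jump'' is an unverified guess). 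Likewise, the error analysis requires more than tail summability: one must show that the small-norm estimate survives within distance $\delta/m$ of the zeros of the denominators $6{\widetilde{\mathcal{P}}}_0 r_*(r_*-2{\widetilde{\mathcal{P}}}_0)T^m_n-4$ and of the poles of $T^m_n$, where the parametrix degenerates, and your sketch only gestures at this via the $\delta/m$ hypothesis. As written, this is a credible research plan consistent with \cite{BuckinghamM15}, but the decisive lemma --- existence, invertibility, and matching of the multi-mode edge parametrix --- is exactly the part left unexecuted.
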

Note that the inf\/inite series is easily seen to be convergent, and the whole series decays rapidly to zero as $m\to\infty$ if $x$ lies outside of $T$, in which case this result agrees with Theorem~\ref{thm:outside}. As~$x$ enters $T$, the terms in the series ``turn on'' one at a time, producing the curves of poles roughly parallel to the edge as can be seen in Fig.~\ref{fig:poles}. Note that $T^m_n(x)=\mathcal{H}_{mn}(x)+1$ and ${\widetilde{\mathcal{P}}}_0(x) =-\tfrac{1}{2}S(x)$ in the notation of~\cite{BuckinghamM15}. One can observe from Theorem~\ref{thm:edge} that the curves of poles roughly correspond to the straight vertical lines $\operatorname{Re}(\mathfrak{d}(x))=-\tfrac{1}{2}\big(n+\tfrac{1}{2}\big)\log(m)/m$ in the $\mathfrak{d}$-plane. There is also an interesting vertical ``staggering'' ef\/fect of the pole lattice as $m$ varies. Indeed, given a value of $\alpha\in \big({-}\tfrac{1}{2},\tfrac{1}{2}\big)$, the poles of the approximation formula near the line indexed by~$n$ with $|\operatorname{Im}(\mathfrak{d}(x))-\pi\alpha|=\mathcal{O}(m^{-1})$ form an approximate vertical lattice in the $\mathfrak{d}$-plane with spacing~$\ii\pi/m$. The lattice is of\/fset from the point $\mathfrak{d}=\ii\pi\alpha-\tfrac{1}{2}\big(n+\tfrac{1}{2}\big)\log(m)/m$ by a complex shift proportional to~$m^{-1}$ (i.e., proportional to the spacing) and depending on~$m$,~$n$, and~$\alpha$. Holding~$m$ f\/ixed, one can observe that near the real axis this of\/fset changes by approximately half of the lattice spacing with each consecutive value of $n$, and as~$x$ moves along the edge toward the corner in the upper half-plane, this change in the of\/fset with $n$ gradually increases to approximate\-ly~$3/4$ of the spacing. On the other hand, holding~$n$ f\/ixed and therefore looking just at the poles along the $n^\mathrm{th}$ line from the edge, the change in of\/fset with~$m$ is again half of the spacing near the real axis, but now the ef\/fect diminishes to zero as one moves along the edge toward a corner of~$\partial T$. This latter ef\/fect implies, in as much as one can draw conclusions from Theorem~\ref{thm:edge} in the situation that~$x$ approaches a corner point along an edge, the pattern of poles of $p_m(y)$ should become independent of $m$ near a corner point, even though it f\/luctuates wildly near typical points of~$T$. A more precise version of this observation will be discussed in Section~\ref{sec:corner}.

\subsubsection[Asymptotic description of $p_m$ near corners]{Asymptotic description of $\boldsymbol{p_m}$ near corners}\label{sec:corner}
The Painlev\'e-I equation $Y''(t)=6Y(t)^2+t$ has a unique \emph{tritronqu\'ee} solution with the property that
\begin{gather}
Y(t)=-\left(\frac{t}{6}\right)^{1/2}+\mathcal{O}\big(t^{-2}\big),\qquad t\to\infty,\qquad |\arg(-t)|\le\frac{4}{5}\pi-\delta\label{eq:tritronquee-asymptotics}
\end{gather}
for every $\delta>0$; see Kapaev \cite{Kapaev04}. Thus the tritronqu\'ee solution $Y(t)$ is asymptotically pole-free in a sector of opening angle $4\pi/5$. It has recently been proven \cite{CostinHT14} that in fact $Y(t)$ is exactly pole-free for $|\arg(-t)|\le 4\pi/5$ without any condition on $|t|$. This is the particular solution of the Painlev\'e-I equation appearing in the formal analysis described in Section~\ref{sec:heuristics} that is needed to describe the rational Painlev\'e-II functions near corner points of $T$ as the following result shows. Recall that $x_\mathrm{c}:=-(9/2)^{2/3}$ is the corner point of $T$ on the negative real axis.
\begin{Theorem}[Buckingham \& Miller \protect{\cite[Theorem~3]{BuckinghamM15}}]\label{thm:corner} Let $Y(t)$ be the tritronqu\'ee solution of the Painlev\'e-I equation determined by the asymptotic expansion \eqref{eq:tritronquee-asymptotics}. If $\mathcal{K}$ is any compact set in the complex $t$-plane that does not contain any poles of $Y(t)$, then
\begin{gather*}
m^{-1/3}p_m\big(\big(m-\tfrac{1}{2}\big)^{2/3}x\big)=-6^{-1/3}-\frac{1}{m^{2/5}}\left(\frac{128}{243}\right)^{1/15}Y(t)+\mathcal{O}\big(m^{-3/5}\big)
\end{gather*}
holds as $m\to\infty$ uniformly for
\begin{gather*}
t:=\left(\frac{2}{243}\right)^{1/15}m^{4/5}(x-x_\mathrm{c})\in\mathcal{K}.
\end{gather*}
\end{Theorem}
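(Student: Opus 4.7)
The plan is to apply the Deift--Zhou steepest descent method to the Jimbo--Miwa Riemann--Hilbert Problem~\ref{rhp:JM} with a scaling tailored to the corner point $x_{\mathrm{c}}$. The starting point is essentially the same conjugated/deformed problem used in Theorem~\ref{thm:inside} and Theorem~\ref{thm:edge}, so much of the machinery (the $g$-function, the branch points $A(x),B(x),C(x),D(x)$, the matrix $R(\kappa;x)$, and the outer parametrix solving Riemann--Hilbert Problem~\ref{rhp:model}) is already in place. The difference is that at a corner of $\partial T$, three of the four branch points of the spectral curve $R(\kappa;x)^2$ coalesce (consistent with the heuristic of Section~\ref{sec:heuristics}, where the equation collapses to Painlev\'e--I), while the fourth remains isolated. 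One therefore expects the outer parametrix to degenerate near $\kappa$ equal to this triple point, and the whole small-norm argument used in the elliptic and edge regimes to break down in a neighborhood whose size (to be determined by dominant balance) shrinks like $m^{-2/5}$ in the $\kappa$-variable and like $m^{-4/5}$ in the $x$-variable.

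First I would perform the local analysis of the $g$-function at the corner. Expanding the relevant Boutroux moduli and the endpoint functions $A,B,C,D$ around $x=x_{\mathrm{c}}$, together with the scaling
\begin{equation*}
x=x_{\mathrm{c}}+\Bigl(\tfrac{243}{2}\Bigr)^{1/15}m^{-4/5}\,t,
\end{equation*}
one should recover (up to explicit constants) the Painlev\'e--I phase function $\tfrac{4}{5}\mu^{5/2}+t\mu^{1/2}$ in a local coordinate $\mu$ near the triple point, with prefactors matching $(128/243)^{1/15}$. This fixes the local rescaling in $\kappa$ and identifies the correct Stokes geometry. Next I would build a local parametrix on a disk of radius $\sim m^{-\epsilon}$ around the triple point out of a model Riemann--Hilbert problem whose solution is exactly the one associated with the tritronqu\'ee solution $Y(t)$ of Painlev\'e--I via its Jimbo--Miwa-type Lax pair; this is the standard ``Painlev\'e--I parametrix'' used in Kapaev's isomonodromy analysis and in the tritronqu\'ee pole-free results of~\cite{CostinHT14}. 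The exact selection of the tritronqu\'ee solution (as opposed to some other Painlev\'e--I transcendent) is forced by the Stokes data one inherits from the deformed Jimbo--Miwa problem, which restricts the Stokes multipliers in precisely the configuration corresponding to the asymptotic~\eqref{eq:tritronquee-asymptotics}.

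I would then match this local parametrix to the outer parametrix $\mathbf{X}^m(\kappa;x,w)$ on the boundary of the disk. The matching is a routine large-$|\mu|$ expansion of the Painlev\'e--I parametrix: because $Y(t)\sim -(t/6)^{1/2}$ for large $t$, the leading term reproduces the constant $-6^{-1/3}$ in the theorem after undoing the scalings, while the first subleading correction supplies the $m^{-2/5}(128/243)^{1/15}Y(t)$ term. Reading off $p_m$ from the expansion~\eqref{eq:Z-expansion} via~\eqref{eq:pm-from-JM} at the level of the parametrix yields the claimed asymptotic formula. Finally, a standard small-norm argument on the error matrix $\mathbf{E}^m(\kappa)$ (which satisfies jumps that are $I+O(m^{-3/5})$ on the matching circle and exponentially small elsewhere, thanks to the Stokes-stabilizing deformations) produces the uniform error bound $\mathcal{O}(m^{-3/5})$ and, crucially, is valid precisely on compact sets $\mathcal{K}$ where $Y(t)$ has no poles, since a pole of $Y$ is the only obstruction to invertibility of the local parametrix.

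The hard part will be the construction and control of the Painlev\'e--I parametrix together with the verification that the Stokes data inherited from the Jimbo--Miwa problem single out the tritronqu\'ee solution uniformly for $t\in\mathcal{K}$; everything else is a modification of the framework already developed for the elliptic and edge regimes. The pole-avoidance hypothesis on $\mathcal{K}$ is exactly what makes the local model solvable, and the recent result of~\cite{CostinHT14} confirming that $Y(t)$ has no poles in the sector $|\arg(-t)|\le 4\pi/5$ ensures that a large wedge of the $t$-plane automatically qualifies. One subtlety worth spelling out carefully in the proof is the consistency of the scaling $t=(2/243)^{1/15}m^{4/5}(x-x_{\mathrm{c}})$ with the shift $m\mapsto m-\tfrac12$ baked into the statement, which is inherited from the conjugation used in the Jimbo--Miwa steepest descent analysis.
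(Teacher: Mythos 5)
Your outline is essentially the proof strategy of the cited reference: the survey itself gives no proof of this theorem (it is quoted from \cite{BuckinghamM15} with the explicit disclaimer that no further proof details are provided), but your plan --- steepest descent on the Jimbo--Miwa Riemann--Hilbert problem, a Painlev\'e-I local parametrix at the coalescence point selected as the tritronqu\'ee solution by the inherited Stokes data, matching to the outer parametrix, and a small-norm estimate valid off the poles of $Y(t)$ --- is exactly the route taken there and is consistent with the heuristic scaling in Section~\ref{sec:heuristics}. As a minor point in your favor, your description of the corner degeneration as a \emph{triple} coalescence of branch points is the correct one (the quartic $R(\kappa;x)^2$ has no cubic term but a nonzero linear term, so it can only acquire a triple root together with a separated simple root), notwithstanding the survey's looser phrase about ``all four roots degenerating'' at a corner.
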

This result is interesting in part because $p_m(y)$ is a function with simple poles only, and the approximating function $Y(t)$ is known to have double poles only. What actually happens in the limit $m\to\infty$ near the corner points is that pairs of simple poles of opposite residue for $p_m(y)$ merge into the ``holes'' excluded from $\mathcal{K}$ located near the double poles of $Y$. This phenomenon can be clearly observed in the plots shown in~\cite{BuckinghamM15}. The ``pairing'' of poles of opposite residues near the corners can also be seen in Fig.~\ref{fig:corner-pairs}.
\begin{figure}[t]\centering
\includegraphics[width=0.3\linewidth]{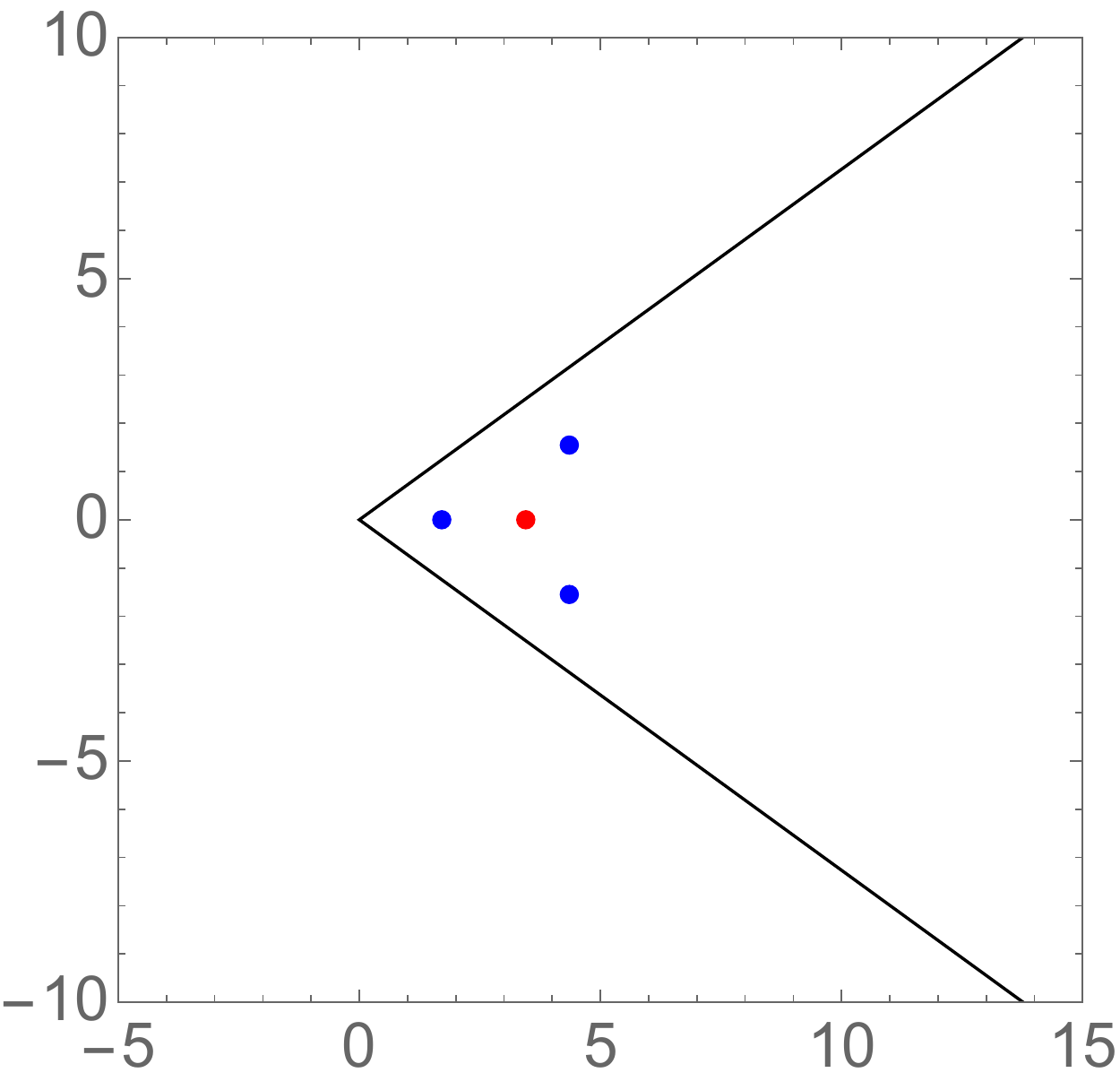}\hspace{0.03\linewidth}%
\includegraphics[width=0.3\linewidth]{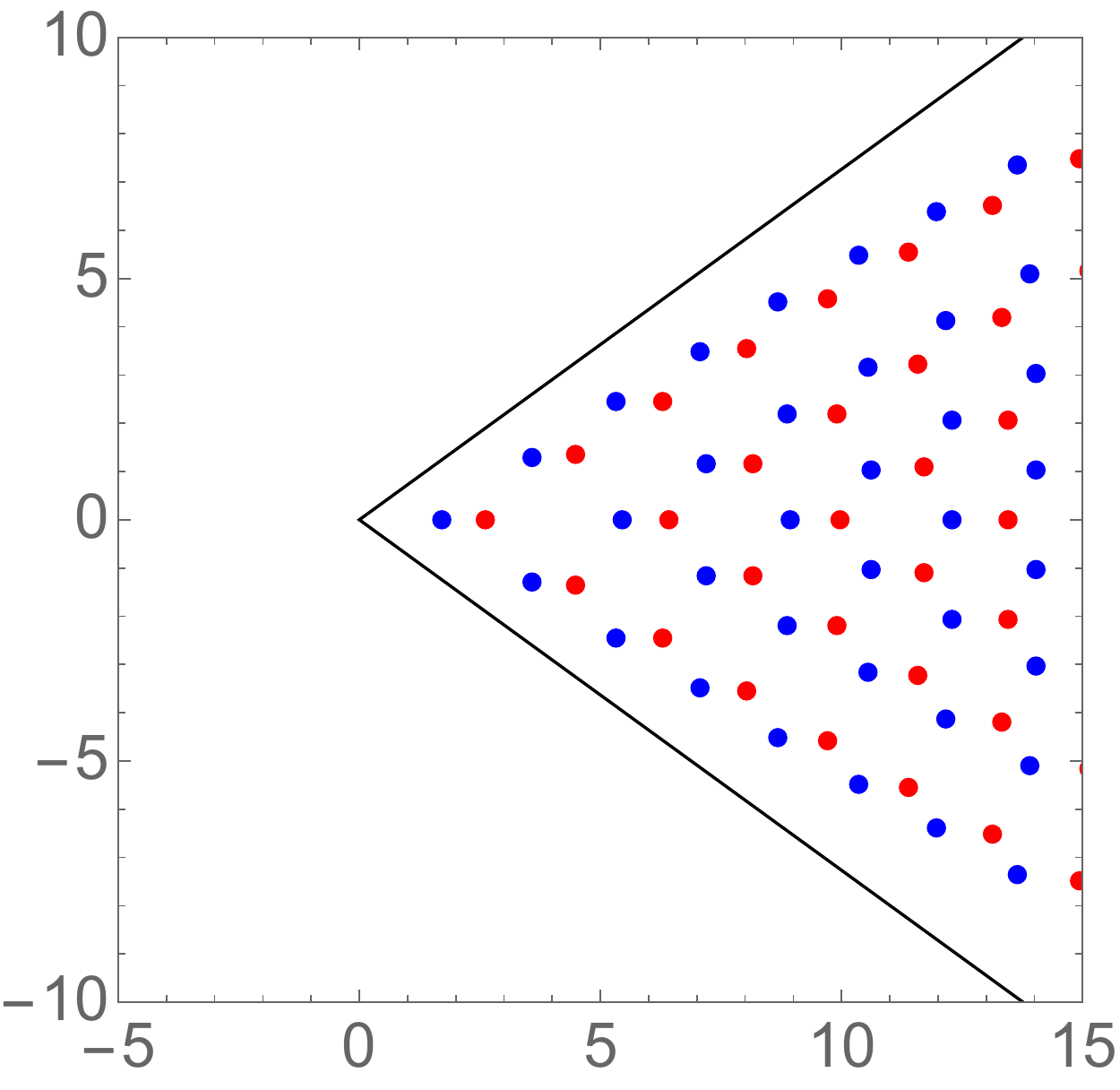}\hspace{0.03\linewidth}%
\includegraphics[width=0.3\linewidth]{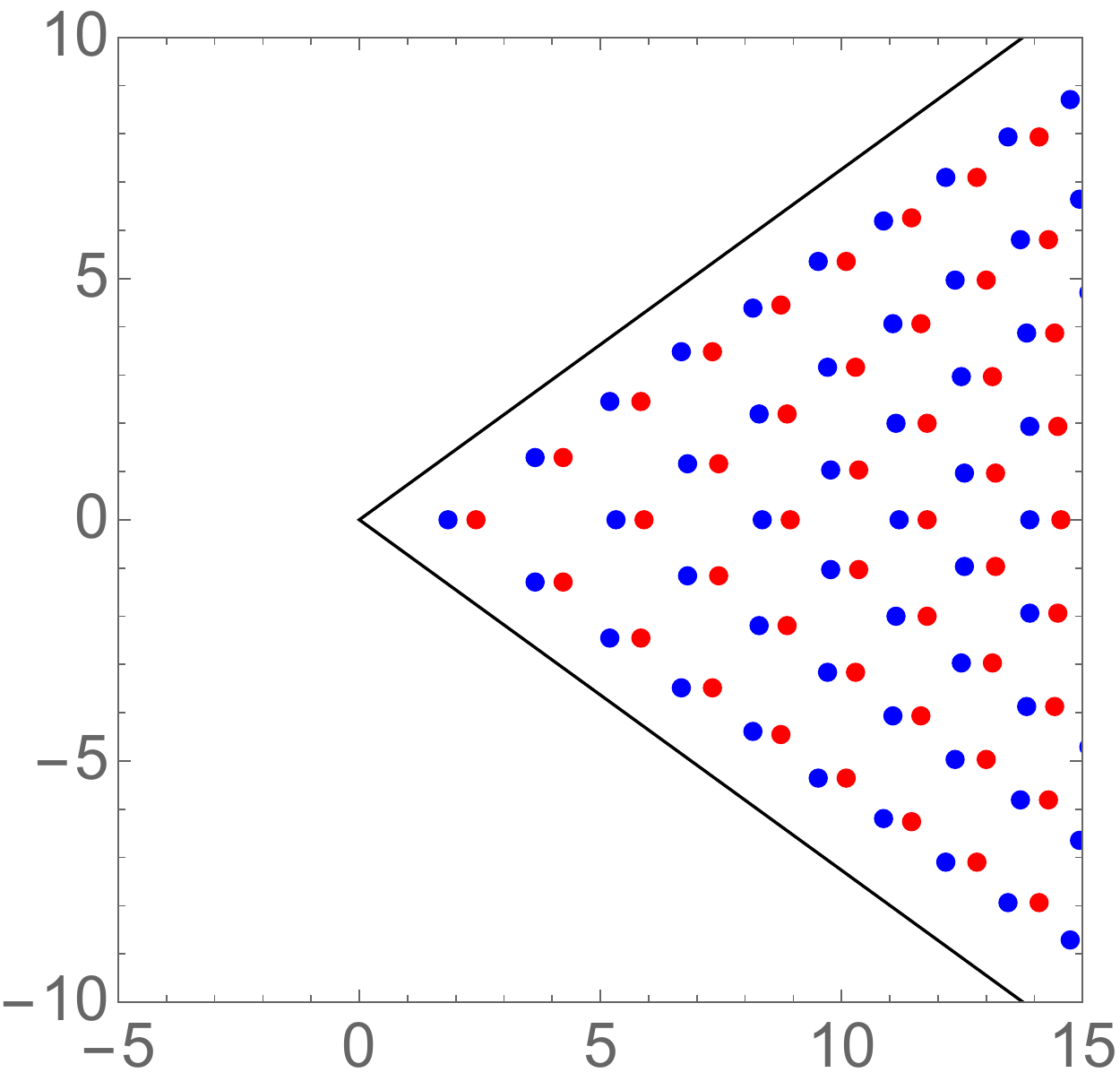}

\caption{The poles of residue $1$ (blue) and $-1$ (red) of $p_{2}(y)$ (left), $p_{11}(y)$ (center), and $p_{60}(y)$ (right), plotted in the complex $t$-plane, along with the boundary $|\arg(t)|=\pi/5$ of the pole sector for the \mbox{Painlev\'e-I} tritronqu\'ee solution $Y(t)$. Note how as $m$ increases pairs of poles of opposite residues coalesce (each pair moving toward a double pole of $Y(t)$).}\label{fig:corner-pairs}
\end{figure}

Finally, we remark that the careful reader will observe that the various domains of the complex $y$-plane in which the asymptotic behavior of $p_m$ is now known actually do not overlap, so the whole complex plane has not been covered. The uniform asymptotic description of $p_m$ in neighborhoods of the edges and corners of $T$ suf\/f\/iciently large to achieve overlap remains an open technical problem.

\subsection*{Acknowledgements}
P.D.~Miller was supported during the preparation of this paper by the National Science Foundation under grant DMS-1513054. The authors are grateful to Thomas Bothner for many useful discussions.

\pdfbookmark[1]{References}{ref}
\LastPageEnding

\end{document}